\newcommand{\R}{\mathbb{R}}
\newcommand{\N}{\mathbb{N}}
\newcommand{\E}{\mathbb{E}}
\def\P{\mathbb{P}}
\newcommand\F{\mathcal{F}}
\newcommand{\alphabet}[1][d]{{A}_{#1}}
\newcommand{\TA}[1][d]{T(\R^{#1})}
\newcommand{\eTA}[1][d]{T((\R^{#1}))}
\newcommand{\tTA}[2][d]{T^{#2}(\R^{#1})}
\newcommand{\bpsi}{\bm{\psi}}
\newcommand{\word}[1]{{\mathcolor{NavyBlue}{\mathbf{#1}}}}
\newcommand{\emptyword}{{\color{NavyBlue}\textup{\textbf{\o{}}}}}
\newcommand{\proj}[1]{|_{\word{#1}}}
\newcommand{\element}[1]{^{\word{#1}}}
\newcommand{\conpow}[1]{^{\otimes #1}}
\newcommand{\shuprod}{\mathrel{\sqcup \mkern -3.2mu \sqcup}}
\newcommand{\shupow}[1]{^{\shuprod #1}}
\newcommand{\A}{\mathcal{A}}
\newcommand{\norm}[1]{|| #1 ||}
\NewDocumentCommand{\sigX}{O{t} O{X}}{\mathbb{#2}_{#1}}
\NewDocumentCommand{\sig}{O{t} O{W}}{\widehat{\mathbb{#2}}_{#1}}
\NewDocumentCommand{\sigE}{O{t} O{W}}{\E[\sig[#1][#2]]}
\NewDocumentCommand{\bracketsigX}{O{t} O{X} m}{\left \langle #3, \sigX[#1][#2] \right \rangle} % not an actual error
\NewDocumentCommand{\bracketsig}{O{t} O{W} m}{\left \langle #3, \sig[#1][#2] \right \rangle}   % not an actual error
\NewDocumentCommand{\bracketsigtrunc}{O{M} O{t} O{W} m}{\left \langle #4, \sig[#2][#3]^{\leq #1} \right \rangle}   % not an actual error
\NewDocumentCommand{\bracketsigE}{O{t} O{W} m}{\left \langle #3, \sigE[#1][#2] \right \rangle} % not an actual error
\newcommand{\bell}{\bm{\ell}}
\newcommand{\bp}{\bm{p}}
\newcommand{\noun}[1]{\textsc{#1}}
\numberwithin{equation}{section}
\theoremstyle{plain}
\newtheorem{thm}{Theorem}[section]
\newtheorem{prop}[thm]{Proposition}
\newtheorem{lem}[thm]{Lemma}
\newtheorem{example}[thm]{Example}
\theoremstyle{definition}
\newtheorem{defn}[thm]{Definition}
\theoremstyle{remark}
\newtheorem{rem}[thm]{Remark}
\newenvironment{sqremark}{\begin{rem}}{\hfill \tiny $\blacksquare$ \end{rem}}
\newenvironment{sqexample}{\begin{example}}{\hfill \tiny $\blacksquare$ \end{example}}
\title{Signature approach for pricing and hedging path-dependent options with frictions }
\author[1]{Eduardo Abi Jaber\thanks{\emph{eduardo.abi-jaber@polytechnique.edu.}  EAJ is  grateful for the financial support from the Chaires FiME-FDD and  Financial Risks  at Ecole Polytechnique.}}
\author[2]{Donatien Hainaut\thanks{\emph{donatien.hainaut@uclouvain.be}}}
\author[2]{Edouard Motte\thanks{\emph{Corresponding author, edouard.motte@uclouvain.be.} EM is  grateful for the financial support from the Fonds de la Recherche
Scientifique (F.R.S. - FNRS) through a FRIA grant.}}
\affil[1]{Ecole Polytechnique, CMAP}
\affil[2]{Universit\'e Catholique de Louvain, LIDAM-ISBA}
\begin{document}
\maketitle

\renewcommand{\thefootnote}{\fnsymbol{footnote}}
\footnotetext{We would like to thank Louis-Amand G\'erard and Dimitri Sotnikov for fruitful discussions and insightful comments.}
\renewcommand{\thefootnote}{\arabic{footnote}}

\begin{abstract}
We introduce a novel signature approach for pricing and hedging path-dependent options with instantaneous and permanent  market impact under a mean-quadratic variation criterion. Leveraging the expressive power of signatures, we recast an inherently nonlinear and non-Markovian stochastic control problem into a tractable form, yielding hedging strategies in (possibly infinite) linear feedback form in the time-augmented signature of the control variables, with coefficients characterized by non-standard infinite-dimensional Riccati equations on the extended tensor algebra. Numerical experiments demonstrate the effectiveness of these signature-based strategies for pricing and hedging general path-dependent payoffs in the presence of frictions. In particular, market impact naturally smooths optimal trading strategies, making low-truncated signature approximations highly accurate and robust in frictional markets, contrary to the frictionless case.

\end{abstract}
\noun{Keywords:} {path-signatures, path-dependent options, market frictions, non-Markovian stochastic control, infinite-dimensional Riccati equations}.
\section{Introduction}

Pricing and hedging derivatives is a central problem in quantitative finance. In classical complete models, perfect replication is achieved by trading continuously in the underlying asset, assuming frictionless markets. In practice, however, large trades affect prices. Two types of market impact are typically distinguished: temporary impact, which affects the execution price but not future mid-prices, and permanent impact, which shifts the future mid-price due to supply/demand imbalances. Accounting for these frictions transforms option hedging into a stochastic control problem, where the objective boils down to tracking certain option's sensitivities (such as the delta) while balancing replication accuracy, trading costs, and market risk. The control problem usually falls outside the tractable linear-quadratic framework. \\

The majority of existing literature on pricing and hedging with frictions focuses on European payoffs. Solutions are typically characterized via Hamilton-Jacobi-Bellman (HJB) equations as in  \cite*{almgren2016option,Becherer2018hedging,Cetin2010option,Ekren2022utility,Rogers2010cost,Bouchard2017hedging,Said2019option,GueantPu2017,lions2007large,Loeper2018option}. While these frameworks provide rigorous insights for European options, they generally do not extend easily to path-dependent derivatives, such as Asian, barrier, or look-back options, whose payoffs depend on the entire trajectory of the underlying. Yet these are among the most common contracts in practice, especially in equity, FX, and commodity markets, making the gap between theory and applications particularly striking. A notable exception is the case of  Accelerated Share Repurchase contracts with an Asian-type component studied by \cite*{gueant2015accelerated,jaimungal2017optimal}, where, after approximations and dimension reduction, the problem can be made Markovian in three dimensions. Another exception is the work of  \cite*{bank2017hedging}, which considers general non-Markovian payoffs, with an instantaneous market impact component but without incorporating transient or permanent impact. In that setting, the problem falls back into the linear-quadratic class (with stochastic coefficients) and can be solved explicitly.  \\

We develop a novel signature approach for hedging path-dependent options under both temporary and permanent impact within a mean-quadratic variation framework as in  \cite{almgren2016option}. \cite{almgren2016option} studied the problem of hedging European options,
%while \cite*{bank2017hedging} obtained explicit solutions for hedging general payoffs but assuming only temporary impact.  
and relate the  optimal hedging strategies to an HJB  equation. They obtain explicit solutions for European options with a constant Gamma.   We extend their setting to path-dependent options,  namely with  signature payoffs. 
Because the problem is inherently non-Markovian, the classical HJB approach is no longer applicable, which is the first major challenge. To overcome this, we lift the hedging problem into the space of path-signatures, which restores Markovianity. In principle, this leads to an infinite-dimensional HJB equation, but such an equation is numerically intractable, especially since permanent impact makes the control problem highly nonlinear and far beyond the linear-quadratic class.\\

The key insight is that, for polynomial European payoffs, the HJB equation decouples into a system of non-standard infinite-dimensional Riccati ordinary differential equations (ODE), thanks to the algebraic structure of polynomials, see Section~\ref{S:polynomial}. Crucially, this algebraic structure is inherited by signatures, since signatures extend polynomials to path space. We leverage this to transform the nonlinear, path-dependent control problem into an infinite-dimensional yet numerically tractable Markovian framework. We relate the solution  to an infinite-dimensional Riccati ODE system on the extended tensor algebra with an optimal hedging strategy given in  linear feedback form in the time-extended  signature of the control variables, see Theorem~\ref{thm: verification_result}. In short, the time-extended signatures of the control variables serve not only as the state variables of the control problem but also effectively linearize it leading to Riccati-type structures familiar from classical linear-quadratic control.
 \\

Our contribution falls naturally within the recent line of work on rough paths and signature methods, which provide hierarchical, finite-dimensional representations for encoding path-dependence.
 In mathematical finance, path-signatures, introduced by  \cite{chen1957integration, chen1977iterated}, were recently used to solve different problems such as:
\begin{itemize}
			\item {{Pricing and hedging}:} \cite*{abi2025signature, abijaber2025hedging, bayer2025pricing, cuchiero2025joint, lyons2020non},
			\item {{Optimal stopping}:} \cite*{bayer2023optimal, bayer2025primal}, 
			\item  {{Portfolio allocation and trading}:} \cite*{cuchiero2025signature,  futter2023signature, kalsi2020optimal}.
\end{itemize}
Moreover, \cite*{bank2024stochastic} have recently used signatures to tackle stochastic control problems. They considered two signature-based classes of controls to approximate the class of admissible controls, showed that those classes are dense in the set of admissible controls, and proposed numerical methods to solve the problem. Compared to \cite*{bank2024stochastic}, we propose a distinct approach. Rather than restricting the optimization to a possibly smaller subset from the outset, we consider a general set of admissible progressively measurable controls. Using the standard martingale optimality principle, we then derive a verification result expressed in terms of an infinite-dimensional system of Riccati equations, which yields the optimal control in feedback form as a (possibly infinite) linear combination of time-extended signature elements, with the time-dependent coefficients given by the solutions of the Riccati equations. In the literature, for instance in  \cite*{lyons2020non, bank2024stochastic}, only time-independent coefficients have been considered so far when parameterizing the solutions in terms of time-extended signatures. This time-independence implies an approximation by controls that are analytic in the time variable, a property that the true optimal control generally does not satisfy. This provides one explanation for why our approach  achieves greater accuracy and numerical stability. \\

For the first time in the context of stochastic optimal control, we establish this connection between non-Markovian control problems and infinite-dimensional Riccati equations on the extended tensor algebra.  We note, however, that similar infinite-dimensional systems of Riccati equations have previously appeared in the literature for solving uncontrolled mathematical finance problems using signatures,  to compute the characteristic function associated with signature volatility models  and signature SDEs, see \cite*{ abi2025signature, abijaber2024fourier, cuchiero2023signature}. A general existence and uniqueness theory for such Riccati equations and the associated convergence of the power expansions appears to be intricate and remains an open problem, with certain partial results available for the Riccati equations in \cite*{abijaber2024fourier, cuchiero2023signature} and for the power series in \cite*{abi2024path, arous1989flots}.\\

We show that our signature approach offers a flexible, systematic, and computationally feasible route to manage path-dependent payoffs in markets with frictions. In particular, our  contributions can be summarized as follows:
\begin{itemize}
    \item {\textbf{Perfect hedging of signature payoffs in a frictionless market}: We derive an explicit solution for the fair price and the perfect hedging strategy for signature payoffs using the \cite{fawcett} formula, as stated in Theorem~\ref{thm: perfect_hedging}.}

    \item {\textbf{Existence and uniqueness for the stochastic control problem with frictions}: We use a variational approach to derive the existence and uniqueness of a solution to the stochastic control problem of hedging signature payoffs with market frictions in a mean-variation quadratic framework. More precisely, Theorem \ref{thm: mono_cond_existence_unicity} states that under a monotonicity condition \eqref{eq:monotonicity_cond} on the G\^ateaux derivative of the gain functional, there exists a unique maximizer of the problem, which is also the unique solution of the first-order condition \eqref{eq: FOC}. We also provide two examples (see Proposition \ref{prop: example_monotonicity_ok_without_permanent} and Proposition  \ref{prop: example_monotonicity_ok}) where the monotonicity condition is fulfilled.}
    \item {\textbf{Verification result for the stochastic control problem with frictions}:} By combining the It\^o formula for signatures (Theorem \ref{thm:sig_ito}) and a square completion approach, we derive a general verification theorem using the martingale optimality principle, see Theorem \ref{thm: verification_result}. Under some assumptions, including the existence of a suitable solution to an infinite-dimensional system of Riccati equations given by \eqref{eq:infinite_dim_riccati_psi_t} and the convergence of the related infinite series, this theorem explicitly characterizes the optimal solution of the stochastic control problem in a feedback form as a linear combination of time-extended signature elements with time-dependent coefficients, see \eqref{eq:hedgingoptimal_speed}. Moreover, we provide two concrete examples for which we can establish the existence of an explicit solution to the system of Riccati equations: 
    \begin{enumerate}[label=(\roman*)]
        \item Proposition \ref{prop: explicit_sol_without_permanent_impact} considers the case without permanent impact and allows us to recover the result of \cite*[Theorem 3.1]{bank2017hedging}, but applied to signature payoffs.
        \item Proposition \ref{prop: explicit_sol_eu_quad} considers European quadratic payoff, and we retrieve the result of \cite*[Eq.~(25)]{almgren2016option}.
    \end{enumerate}

    \item {\textbf{Numerical illustrations}}: In Section~\ref{sec: numerical_results}, we illustrate numerically our signature-based hedging strategies,  given by Theorem~\ref{thm: verification_result} by solving the system of infinite-dimensional Riccati equations \eqref{eq:infinite_dim_riccati_psi_t}, demonstrating that our approach can be effectively implemented even in highly path-dependent settings. First, we consider exact signature payoffs, such as European or Asian quadratic options, using sanity checks against explicit solutions to validate both the assumptions of Theorem~\ref{thm: verification_result} and our numerical implementation. Second, we consider more general path-dependent payoffs, which we approximate by signature payoffs relying on the universal approximation theorem.  In the frictional market, we show that the signature-based strategy computed via the Riccati system for the approximate signature payoffs provides a sufficiently accurate approximation of the true optimal strategy, demonstrating a striking improvement compared to signature-based strategies in the frictionless setting. This improvement can be attributed to the fact that market impact naturally smooths the optimal trading strategies, making polynomial approximations more accurate and efficient.
  We also evaluate the impact of the permanent market impact component on both the fair price and the hedging strategies of several path-dependent options. 
\end{itemize}

The paper is outlined as follows. In Section \ref{sec: bachelier_framework}, we introduce the Bachelier framework with both temporary and permanent market impact. Then, Section \ref{sec:Prel_signature} recalls key results on path-signatures. In Section \ref{sec:perfect_hedging_frictionless_market}, we first address the hedging of a signature payoff in frictionless market and we deduce the perfect hedging strategy using the \cite{fawcett} formula. Section \ref{sec: MQV_hedging_impact} is devoted to the stochastic control problem of hedging a signature payoff under market impact within a mean-quadratic variation framework, where we establish an existence and uniqueness result (Section \ref{sec: existence_uniqueness}) and provide a verification theorem (Section \ref{sec: verification}). Finally, Section \ref{sec: numerical_results} presents numerical illustrations.

\section{Bachelier model with temporary and permanent impact}\label{sec: bachelier_framework}We fix a finite horizon $T>0$ and a filtered probability space $(\Omega,\mathcal{F},(\mathcal{F}_{t})_{0\leq t\leq T},\mathbb{P})$
with filtration $(\mathcal{F}_{t})_{0\leq t\leq T}$ that satisfies the usual conditions. We consider a financial market with a tradable asset $(S_t)_{t\in[0,T]}$ with dynamic
\begin{align}
    dS_t = \mu dt + \sigma dW_t,~t\leq T,
\end{align}
 with $\mu\in\mathbb{R},$ $\sigma\in\mathbb{R}_+$ and $W$ a standard Brownian motion. In this financial market, a trader shorts an option on $S$ that she aims to hedge. We assume that by trading the asset, the trader incurs both temporary and permanent market impacts which affect her profit and loss (P\&L). More precisely, if $(X_t^\theta)_{t\in[0,T]}$ denotes the number of shares held by the trader through time defined as 
\begin{equation}\label{eq: hedging_strat_trading_speed}
    X_t^\theta = X_0+\int_0^t \theta_s ds, 
\end{equation}

with $X_0\in \mathbb{R}$ \footnote{ As explained in \cite{GueantPu2017}, $X_0$ corresponds to the number of shares in the hedging portfolio at inception. In illiquid markets, the client that buys the option may provide an initial number of shares (see Section \ref{sec: MQV_hedging_impact}). Therefore, we could consider $X_0=0$ or the case where $X_0$ is equal to the Bachelier $\Delta$ of the option sold, depending on the initial agreement with the client.}, and $(\theta_t)_{t\in[0,T]}$ the trading speed, the price of the asset, taking account of permanent impact, is given by 
\begin{equation}\label{eq: public_traded_price}
    P_t^{\theta} = S_t + \nu (X_t^\theta-X_0),\quad t\leq T,
\end{equation}
with $\nu \geq 0, $ the permanent market impact parameter. Moreover, as the trader incurs transaction costs (temporary impact), the effective traded asset price is 
\begin{equation}
    \label{eq: private_traded_price} \tilde{P}_t^{\theta}=P_t^{\theta} + \eta\theta_t,\quad t\leq T,
\end{equation}
with $\eta\geq 0,$ the temporary price impact parameter. \\

\subsection{The Almgren and Li setup}
Before addressing the problem of hedging signature payoffs with market impacts, and to get more insight about our motivation to use a signature approach in this context,  let us start from the Markovian framework of \cite{almgren2016option}. For this, we fix a European option  whose frictionless Bachelier delta at time $t$ is $\Delta(t,S_t)$ and we  consider the problem of  hedging this option with temporary and permanent market impact. This essentially boils down to tracking  the Bachelier delta $\Delta(t,P^\theta_t)$ evaluated with the impacted price $P^\theta$ given by \eqref{eq: public_traded_price}. \cite{almgren2016option} consider a mean-quadratic variation criteria on the $\mbox{P\&L}^{\theta}$ generated by the trading strategy $\theta$ of the form\footnote{The mean-quadratic variation hedging problem is introduced in detail in Section \ref{sec: MQV_hedging_impact}}
$$\E\left(\mbox{P\&L}_T^{\theta}-\frac{\lambda}{2}[\mbox{P\&L}^\theta, \mbox{P\&L}^\theta]_T\right),$$
where $\lambda>0$ is the risk aversion of the trader. The aim is to maximize the criterion over the   trading strategies $\theta$. Using a dynamic programming approach, they characterized the solution in terms of the solution to a Hamilton-Jacobi-Bellman (HJB) equation. More precisely, they showed that the value function $J(t,p,x)$ depends, at time $t<T$, on $(P^\theta_t,X^\theta_t)$ and satisfies a nonlinear HJB equation of the following form\footnote{The HJB equation differs slightly from that in \cite{almgren2016option} because, in this paper, we consider a short position to be hedged and $X^\theta$ as a state variable instead of $Y^\theta:=X^\theta-\Delta(t,P^\theta)$. Moreover, we also assume no overnight risk and $\mu\neq 0$.}\footnote{ \cite{almgren2016option} do not provide existence or well-posedness results for the HJB equation \eqref{eq: HJB_european_payoff}. For a rigorous analysis of related HJB equations, see \cite{Ekren2022utility}.}:
\begin{equation}\label{eq: HJB_european_payoff}
\begin{aligned}
    &J_t = -\frac{\lambda}{2} \sigma^2 \left[x-\Delta(t,p) \right]^2- \frac{1}{2} \sigma^2 J_{pp}+\frac{1}{4\eta} \left[ \nu (x-\Delta(t,p))-J_x - \nu J_p  \right]^2\\
    &~~~~~~~+\mu \left[(x-\Delta(t,p))-J_p \right],\\
    &J(T,p,x) =0. 
\end{aligned}
\end{equation}
 When the Bachelier Gamma $\Gamma(t,p)$ is constant, i.e. $\Gamma(t,p) = \Gamma\in \mathbb{R}$, for all $(t,p)\in [0,T] \times \mathbb{R}^+$, and $\mu=0$, they deduced that $J(t,p,x)$ has an explicit form. This explicit solution follows from the fact that, when  $\Gamma$  is constant, the problem becomes linear-quadratic: a constant gamma implies that the delta is linear in 
$p$, and therefore that the payoff is quadratic. A natural idea is thus to extend this observation to more general polynomial payoffs.

\subsection{Key observation for Markovian polynomial payoffs}\label{S:polynomial}
In fact, we can go further and obtain a semi-explicit characterization of the solution in terms of infinite dimensional Riccati ordinary differential equations when we consider polynomial payoffs of the form
\begin{equation}\label{eq: polynomial_payoff}
    H_T^\theta = \sum_{i=0}^M \alpha_i \left(P_T^\theta \right)^i,~(\alpha_i)_{i=1,..,M}\in \mathbb{R}^M,~M\in \mathbb{N}.
\end{equation}
Then, there exists $(\delta_i(t))_{i=0,...,M}\in \mathbb{R}^M$ such that
\begin{equation*}
    \Delta(t,p) = \sum_{i=0}^{M-1} \delta_i(t) p^i.
\end{equation*}
Making an ansatz on the value function of the form of a power series in $p$ and $x$:
\begin{align}\label{eq: polynomial_ansatz_value_function}
    J(t,p,x) = \sum_{i,j=0}^{+\infty} \psi_t^{i,j} p^i x^j, 
\end{align}
and plugging this ansatz in \eqref{eq: HJB_european_payoff}, the HJB equation decouples into a system of infinite-dimensional  Riccati equations given, for $i,j \in \mathbb{N}$, by 
\begin{equation}\label{eq: infinite_riccati_eq_Markovian_case}
\begin{aligned}
    &\dot{\psi}_t^{i,j} = -\frac{1}{2} \sigma^2 \psi_t^{i+2,j}-\frac{\lambda}{2} \sigma^2 \left( \boldsymbol{\beta}_t * \boldsymbol{\beta}_t\right)^{i,j}+\frac{1}{4\eta} \left(\boldsymbol{\tilde{\psi}}_t *\boldsymbol{\tilde{\psi}}_t \right)^{i,j}\\
    &~~~~~~~~~+ \mu \left[\beta_t^{i,j}-(i+1) \psi_t^{i+1,j}\right],\\
    &\psi_T^{i,j} = 0,
\end{aligned}
\end{equation}
with, $\boldsymbol{\beta}_t=\left(\beta_t^{i,j}\right)_{i,j\geq 0}$ and $\boldsymbol{\tilde{\psi}}_t=\left(\tilde{\psi}_t^{i,j}\right)_{i,j\geq 0}$ given by 
\begin{align*}
    \beta_t^{i,j}&:= \mathbb{1}_{\{i=0\} \cap \{j=1\}}-\mathbb{1}_{\{i\leq M-1\}\cap \{j=0\}} \delta_i(t), \\
    \tilde{\psi}_t^{i,j}&:= \nu\beta_t^{i,j}-\nu \psi_t^{i+1,j} (i+1)-\psi_t^{i,j+1} (j+1),
\end{align*}
and the $2D$ Cauchy product defined, for $\mathbf{a}=\left(a^{i,j}\right)_{i,j \geq 0},~\mathbf{b}=\left(b^{i,j}\right)_{i,j \geq 0}$, by
\begin{equation*}
    \left(\mathbf{a} * \mathbf{b}\right)^{i,j}:= \sum_{m=0}^i \sum_{n=0}^j a^{mn} b^{i-m,j-n}.
\end{equation*}
% then, $J$ given by \eqref{eq: polynomial_ansatz_value_function} satisfies the equation HJB \eqref{eq: HJB_european_payoff} for polynomial payoffs of the form \eqref{eq: polynomial_payoff}.
Therefore, in the Markovian case (European payoffs), by restricting ourselves to polynomial payoffs, we can semi-explicitly express the optimal solution as  a power series in the state variables with deterministic coefficients $\psi$ coming from an infinite-dimensional system of Riccati equations. Since path-signatures play a similar role to polynomials on the path space, by considering signatures payoffs, we can naturally expect to be able to derive the same kind of characterization for the control problem in the non-Markovian case, and this is precisely what is established in our main result, Theorem~\ref{thm: verification_result}.

\section{Reminder on signatures} \label{sec:Prel_signature} 
    In this section, we provide a reminder about the path-signature theory. We also refer to the first sections in \cite{lyons2020non,bayer2023optimal,cuchiero2023signature,abi2024path}.
\subsection{Tensor algebra}
    Let $d \in \N$ and denote by $\otimes$ the tensor product over $\R^d$, e.g. $(x \otimes y \otimes z)_{ijk} = x_i y_j z_k$, for $i, j, k = 1, \dots, d$, for $x, y, z \in \R^d$. For $n\geq 1$, we denote by $(\R^d) \conpow{n}$ the space of tensors of order $n$ and by $(\R^d) \conpow{0} = \R$. In this paper, we consider path-signatures that are mathematical objects that live on the extended tensor algebra space $\eTA $ over $\R^d$, that is the space of (infinite) sequences of tensors defined by
    $$ \eTA := \left\{ \bell = (\bell^n)_{n=0}^\infty : \bell^n \in (\R^d) \conpow{n} \right\}. $$
    
    For $M \geq 0$, we define the truncated tensor algebra $\tTA{M}$ as the space of sequences of tensors of order at most $M$ defined by
    
    $$ \tTA{M} := \left\{ \bell \in \eTA : \bell^n = 0, \text{ for all } n > M \right\}, $$
    
    and the tensor algebra $\TA$ as the space of all finite sequences of tensors defined by
    
    $$ \TA := \bigcup_{M \in \N } \tTA{M}. $$
    
    For $\bell = (\bell^n)_{n \in \N},~\bp = (\bp^n)_{n \in \N} \in \eTA$ and $\lambda \in \R$, we define the following operations:
    
    \begin{align*}
        \bell + \bp :&
        = (\bell^n + \bp^n)_{n \in \N}
        \\ \bell \otimes \bp :&
        = \left( \sum_{k=0}^n \bell^k \otimes \bp^{n-k} \right)_{n \in \N} \\
        \lambda \bell :&
        = (\lambda \bell^n)_{n \in \N}.
    \end{align*}
    
    These operations induce analogous operations on $\tTA{M}$ and $\TA$. \\
    
    Let $\{ e_1, \dots, e_d \} \subset \R^d$ be the canonical basis of $\mathbb{R}^d$ and $\alphabet = \{ \word{1}, \word{2}, \dots, \word{d} \}$ be the corresponding alphabet. To ease reading, for $i \in \{ 1, \dots, d \}$, we write $e_{i}$ as the blue letter $\word{i}$ and for $n \geq 1, i_1, \dots, i_n \in \{ 1, \dots, d \}$, we write $e_{i_1} \otimes \cdots \otimes e_{i_n} $ as the concatenation of letters  $\word{i_1 \cdots i_n}$, that we call a word of length $n$. We note that $(e_{i_1} \otimes \cdots \otimes e_{i_n})_{(i_1, \dots, i_n) \in \{ 1, \dots, d \}^n}$ is a basis of $(\R^d) \conpow{n}$ that can be identified with the set of words of length $n$ defined by 
    
    \begin{equation} \label{eq:sig_basis}
        V_n := \{ \word{i_1 \cdots i_n}: \word{i_k} \in \alphabet \text{ for } k = 1, 2, \dots, n \}. 
    \end{equation} 
    
    Moreover, $\emptyword$ denotes the empty word and $V_0 = \{ \emptyword \}$ serves as a basis for $(\R^d) \conpow{0} = \R$. It follows that $V := \cup_{n \geq 0} V_n$ represents the standard basis of $\eTA$. In this case, every $\bell \in \eTA$, can be decomposed as
    
    \begin{equation} \label{eq:sig_expansion}
        \bell = \sum_{n=0}^{\infty} \sum_{\word{v} \in V_n} \bell^{\word{v}} \word{v},
    \end{equation}
    where $\bell^\word{v}$ is the real coefficient of $\bell$ at coordinate $\word{v}$.\\

    We now introduce two important concepts such as the concatenation and the projection. The concatenation $\bell \word{v}$ of elements $\bell \in \eTA$ and the word $\word{v} = \word{i_1 \cdots i_n}$ means $\bell \otimes e_{i_1} \otimes \cdots \otimes e_{i_n}$. The projection $\bell \proj{u} \in \eTA$ is defined as
    \begin{equation} \label{eq:projection}
        \bell \proj{u} := \sum_{n=0}^{\infty} \sum_{\word{v} \in V_n} \bell^{\word{vu}} \word{v}
    \end{equation}
    for all $\word{u} \in V$. Moreover, if we consider $\xi \in T((\mathbb{R}^d))$, then we define the projection $\bell|_{\xi} \in T((\mathbb{R}^d))$ by
    \begin{equation}\label{eq: proj_tensor}
        \bell|_{\xi} := \sum_{n=0}^{\infty} \sum_{\word{v} \in V_n}\left[\sum_{m=0}^{\infty}\sum_{\word{u}\in V_m}\bell^{\word{vu}} \xi^{\word{u}} \right]\word{v}. 
    \end{equation}
    
    The projection plays an important role in the space of iterated integrals as it is closely linked to partial differentiation, in contrast with the concatenation that relates to integration. We now define the bracket between $\bell \in T(\mathbb{R}^d)$, and $\bp \in \eTA$ by
    
    \begin{align} \label{eq:bracket}
        \langle \bell, \bp \rangle 
        := \sum_{n=0}^{\infty} \sum_{\word{v} \in V_n} \bell^{\word{v}} \bp^{\word{v}}. 
    \end{align}
    We observe that the bracket is well defined since $\bell\in T(\mathbb{R}^d)$ and has therefore finitely many non-zero terms. For $\bell \in \eTA$, the series in \eqref{eq:bracket} involves infinitely many terms and requires special care to be well defined. We discuss it in Subsection~\ref{S:infinite}. Finally, we consider another operation on the space of words which is the shuffle product. The shuffle product plays a crucial role for an integration by parts formula on the space of iterated integrals, see Proposition~\ref{prop:shufflepropertyextended} below.
    
    \begin{defn}[Shuffle product] \label{def:shuffleprod}
        The shuffle product $\shuprod: V \times V \to \TA$ is defined inductively for all words $\word{v}$ and $\word{w}$ and all letters $\word{i}$ and $\word{j}$ in $\alphabet$ by
        
        \begin{align*}
            (\word{v} \word{i}) \shuprod (\word{w} \word{j}) &
            = (\word{v} \shuprod (\word{w} \word{j})) \word{i} + ((\word{v} \word{i}) \shuprod \word{w}) \word{j},
            \\ \word{w} \shuprod \emptyword &
            = \emptyword \shuprod \word{w} = \word{w}.
        \end{align*}
        
        With some abuse of notation, the shuffle product on $\eTA$ induced by the shuffle product on $V$ is also be denoted by $\shuprod$.
    \end{defn}

\subsection{Signatures}

    We define the (path) signature of a semimartingale process as the sequence of iterated stochastic integrals in the sense of Stratonovich. Notice that, in the paper, we denote the It\^o integral by $\int_0^\cdot Y_t dX_t$ and the Stratonovich integral by $\int_0^\cdot Y_t \circ dX_t.$ 

    \begin{defn}\label{def:sig}
        Fix $T > 0$. Let $(X_t)_{t \geq 0}$ be a continuous semimartingale in $\R^d$ on some filtered probability space $(\Omega, \F, (\F_t)_{t \geq 0}, \P)$. The signature of $X$ is defined by  
        \begin{align*}
            \mathbb{X}: \Omega \times [0, T] &
            \to \eTA
            \\ (\omega, t) &
            \mapsto \sigX (\omega) := (1, \sigX^1(\omega), \dots, \sigX^n(\omega), \dots),
        \end{align*}
        where
        $$ \sigX^n := \int_{0 < u_1 < \cdots < u_n < t} \circ d X_{u_1} \otimes \cdots \otimes \circ d X_{u_n} $$
        takes value in $(\R^d)^{\otimes n}$, $n \geq 0$. Similarly, the truncated signature of order $M \in \N$ is defined by
        \begin{align} \label{def:sig-trunc}
            \mathbb{X}^{\leq M}: [0, T] &
            \to \tTA{M}
            \\ (\omega, t) &
            \mapsto \sigX^{\leq M}(\omega) := (1, \sigX^1(\omega), \dots, \sigX^M(\omega), 0, \dots, 0, \dots).
        \end{align}
    \end{defn} 
    
    The signature plays a similar role to polynomials on path-space. Indeed, in dimension $d=1$, the signature of $X$ is the sequence of monomials $\left( \frac{1}{n!} (X_t - X_0)^n \right)_{n \in \N}$. In particular, any finite combination of elements of the signature $\bracketsigX{\bell}$, defined in \eqref{eq:bracket} for $\bell \in \tTA{M}$, is a polynomial of degree $M$ in $X_t$. For example, if $d=2$ and $X_t = \widehat{W}_t := (t, W_t)$ where $W$ is a 1-dimensional Brownian motion, the first few signature orders are given by
    \begin{equation}
        \sig^0 = 1,
        \quad
        \sig^1 =
        \begin{pmatrix}
            t \\
             W_t
        \end{pmatrix},
        \quad
        \sig^2 =
        \begin{pmatrix}
            \frac{t^2}{2!} & \int_0^t s d W_s \\
            \int_0^t W_s ds & \frac{W_t^2}{2!}
        \end{pmatrix}
    \end{equation}
    and
    \begin{equation}
        \sig^3 =
        \begin{pmatrix}
            \frac{t^3}{3!} & \int_0^t \frac{s^2}{2!} d W_s & \\
            \int_0^t \int_0^s u d W_u ds & \int_0^t \int_0^s u d W_u \circ d W_s & \\
            & \int_0^t \int_0^s W_u du ds & \int_0^t \int_0^s W_u du \circ d W_s \\
            & \int_0^t \frac{W_s^2}{2!} ds & \frac{W_t^3}{3!}
        \end{pmatrix}.
    \end{equation}

\subsection{Infinite linear combinations of signature elements}\label{S:infinite}
    In this section, we recall some results on infinite linear combinations $\langle{\bell}, \mathbb{X}_t\rangle$ for certain admissible $\bell \in \eTA$ for which the infinite series makes sense. The crucial ingredients for this paper are the shuffle product (Proposition~\ref{prop:shufflepropertyextended}) and an It\^o's formula (Theorem~\ref{thm:sig_ito}). Those results are based on \cite{abi2024path}, and we refer to this paper for more details.\\

    We first consider the space $\A(\mathbb{X})$ of admissible elements $\bell$ below, using the associated semi-norm:
    
    $$ \norm{\bell}_t^{\mathcal{A}(\mathbb{X})} := \sum_{n=0}^\infty \left| \sum_{\word{v} \in V_n} \bell^\word{v} \mathbb{X}^\word{v}_t \right|, \quad t \geq 0, $$
    
    recall the definition of $V_n$ in \eqref{eq:sig_basis} and the decomposition \eqref{eq:sig_expansion}. 
    Whenever, $\norm{\bell}_t^{\mathcal{A}(\mathbb{X})} < \infty$ a.s., the infinite linear combination 
    
    $$ \langle{\bell}, {\mathbb{X}}_t\rangle = \sum_{n=0}^\infty \sum_{\word{v} \in V_n} \bell^{\word{v}} \mathbb{X}^{\word{v}} $$
    
    is well defined. This leads to the following definition for the admissible set $\mathcal{A}(\mathbb{X})$:
    
    $$ \mathcal{A}(\mathbb{X}) := \left\{ \bell \in \eTA[d] : \norm{\bell}_t^{\mathcal{A}(\mathbb{X})}< \infty \text{ for all } t \in [0, T] \text{ a.s.} \right\}. \label{eq:defA} $$ 
    
    Note that $\TA[d] \subset \mathcal{A}(\mathbb{X})$ and that $\langle{\bell}, \mathbb{X}_t \rangle$ is an extension of \eqref{eq:bracket}, as the two bracket operations $\langle \cdot, \cdot \rangle$ coincide whenever $\bell \in \TA[d]$. The admissible set $\mathcal{A}(\mathbb{X})$ enables us to linearize polynomials on infinite linear combination of the signature. This is what is commonly referred to in the literature as the linearization power of the signature.

    \begin{prop}[Shuffle property] \label{prop:shufflepropertyextended}
        If $\bell_1, \bell_2 \in \mathcal{A}(\mathbb{X})$, then $\bell_1 \shuprod \bell_2 \in \mathcal{A}(\mathbb{X})$ and
        $$ \langle {\bell_1}, {\mathbb{X}}_t\rangle \langle{\bell_2},{\mathbb{X}}_t\rangle = \langle\bell_1 \shuprod \bell_2, {\mathbb{X}}_t\rangle. $$
    \end{prop}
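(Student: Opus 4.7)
The plan is to reduce to the classical finite-dimensional shuffle identity and then to extend to infinite linear combinations via a Cauchy product / Mertens' theorem argument, taking advantage of the level-wise structure of the semi-norm $\norm{\cdot}_t^{\mathcal{A}(\mathbb{X})}$.

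First, I recall the base case. For two words $\word{v}_1, \word{v}_2 \in V$, the classical shuffle identity for iterated Stratonovich integrals of a continuous semimartingale (due to Chen, and valid in the Stratonovich setting because ordinary calculus rules apply pathwise at the rough-path level) gives
$$ \sigX^{\word{v}_1} \sigX^{\word{v}_2} = \langle \word{v}_1 \shuprod \word{v}_2, \sigX\rangle. $$
By bilinearity, the same identity extends to any finitely supported $\bell_1, \bell_2 \in \TA[d]$.

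Next, I decompose level by level. For $\bell \in \eTA[d]$ and $n \geq 0$, set $[\bell]_n := \sum_{\word{v} \in V_n} \bell^{\word{v}} \word{v} \in \TA[d]$, so that $\norm{\bell}_t^{\mathcal{A}(\mathbb{X})} = \sum_n |\langle [\bell]_n, \sigX\rangle|$ and $\langle \bell, \sigX\rangle = \sum_n \langle [\bell]_n, \sigX\rangle$ whenever $\bell \in \mathcal{A}(\mathbb{X})$. Since the shuffle product of two words of lengths $k$ and $n-k$ is a linear combination of words of length $n$, the level-$n$ part of the shuffle satisfies
$$ [\bell_1 \shuprod \bell_2]_n = \sum_{k=0}^n [\bell_1]_k \shuprod [\bell_2]_{n-k}. $$
Applying the finite-case shuffle identity to each summand yields
$$ \langle [\bell_1 \shuprod \bell_2]_n, \sigX\rangle = \sum_{k=0}^{n} \langle [\bell_1]_k, \sigX\rangle \, \langle [\bell_2]_{n-k}, \sigX\rangle, $$
which is exactly the Cauchy product of the absolutely summable sequences $a_k := \langle [\bell_1]_k, \sigX\rangle$ and $b_k := \langle [\bell_2]_k, \sigX\rangle$.

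Finally, I conclude in two steps. Admissibility follows from the triangle inequality
$$ \norm{\bell_1 \shuprod \bell_2}_t^{\mathcal{A}(\mathbb{X})} \;\leq\; \sum_{n=0}^{\infty} \sum_{k=0}^{n} |a_k| |b_{n-k}| \;=\; \norm{\bell_1}_t^{\mathcal{A}(\mathbb{X})} \norm{\bell_2}_t^{\mathcal{A}(\mathbb{X})} \;<\; \infty \quad \text{a.s.,} $$
so $\bell_1 \shuprod \bell_2 \in \mathcal{A}(\mathbb{X})$. Since both $\sum_k a_k$ and $\sum_k b_k$ converge absolutely, Mertens' theorem gives
$$ \langle \bell_1 \shuprod \bell_2, \sigX\rangle = \sum_{n=0}^\infty \sum_{k=0}^n a_k b_{n-k} = \Bigl(\sum_k a_k\Bigr)\Bigl(\sum_k b_k\Bigr) = \langle \bell_1, \sigX\rangle \langle \bell_2, \sigX\rangle. $$

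The main subtlety I anticipate is that the semi-norm $\norm{\cdot}_t^{\mathcal{A}(\mathbb{X})}$ only controls the absolute values of the \emph{grouped} sums $\langle [\bell]_n, \sigX\rangle$, not the individual coefficients $\bell^{\word{v}} \sigX^{\word{v}}$, so one cannot naively reorder the full double series over all pairs of words. Organising the argument level by level (as a Cauchy product indexed by the total length $n = |\word{v}_1|+|\word{v}_2|$) is precisely what circumvents this, letting Mertens' theorem do the rearrangement at the level of already-grouped sums.
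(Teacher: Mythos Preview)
Your argument is correct. The level-by-level decomposition together with the Cauchy product / Mertens step is exactly the right way to pass from the classical word-level shuffle identity to the extended one on $\mathcal{A}(\mathbb{X})$, and your remark about why grouping by total length is essential (the semi-norm only controls grouped sums, not individual $\bell^{\word v}\sigX^{\word v}$) identifies the genuine subtlety.

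As for comparison: the paper does not actually prove this proposition. It states the result and refers to \cite{abi2024path} for details, so there is no in-paper argument to compare against. Your write-up therefore supplies more than the paper does here; it is a complete self-contained proof rather than a citation.
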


    By the definition of the admissible set $\mathcal{A}(\mathbb{X})$, we know that, for elements $\bell \in \mathcal{A}(\mathbb{X})$, the process $(\langle \bell, \mathbb{X}_t \rangle)_{t \leq T}$ is well defined. This naturally raises the question of when it is a semimartingale and how to obtain its It\^o decomposition. For elements $\bell$ in the set $\mathcal{I}(\mathbb{X})$ defined by
    \begin{equation} \label{eq:I}
        \mathcal{I}(\mathbb{X}) := \left\{  \begin{array}{l}
        \bell \in \mathcal{A}(\mathbb{X}):\\
        \text{for every } \word{i} \in A_d \text{ and } \word{j} \in \mathcal{Q}_\word{i}(X),~\bell\proj{i}, \bell\proj{ji}\in \mathcal{A}(\mathbb{X}) \text{ and } \\ \int_0^T \left( \norm{\bell\proj{i}}_t^{\mathcal{A}(\mathbb{X})} d|X^\word{i}|_t + \norm{\bell\proj{ji}}_t^{\mathcal{A}(\mathbb{X})} d[X^\word{i},X^\word{j}]_t +\left(\norm{\bell\proj{i}}_t^{\mathcal{A}(\mathbb{X})}\right)^2 d[X^\word{i},X^\word{j}]_t \right)< \infty \text{ a.s.} \end{array} \right \},
    \end{equation}
    where $\mathcal{Q}_{\word{i}}(X)$ defines the set of coordinates of $X$ that have a
    non-zero quadratic covariation with $X^\word{i}$
    \begin{align*}
        \mathcal{Q}_{\word{i}}(X):=\left\{\word{j}\in A_d: [X^\word{j},X^\word{i}]_t \neq 0, \text{on a set of non-zero } dt \otimes d\P \text{ measure}\right\}, 
    \end{align*}
    then, we can prove that $(\langle \bell, \mathbb{X}_t \rangle)_{t \leq T}$ is a semimartingale and we can compute its It\^o decomposition. Notice that, for $M>0$, we have $T^M(\mathbb{R}^d)\in \mathcal{I}(\mathbb{X}) $, i.e.~showing that finite linear combinations of the signature are always semimartingales. More generally, we recall the result for time-dependent linear combinations $(\langle \bell_t, \mathbb{X}_t \rangle)_{t \leq T}$ with $\bell:[0,T] \to \mathcal{A}(\mathbb{X})$ in the set
    \begin{equation} \label{eq:Iprime}
        \mathcal{I}^{'}(\mathbb{X}) := \left\{~\begin{array}{l}
         \bell: [0, T] \to \mathcal{A}(\mathbb{X}) :\\
            \text{for every } \word{v}\in V,\: \bell^\word{v} \in C^1([0, T],\mathbb{R}),\\
            \text{for every } \word{i} \in A_d \text{ and } \word{j} \in \mathcal{Q}_\word{i}(X) \text{ and for all } t\in[0,T],~\bell_t\proj{i}, \bell_t\proj{ji}, \dot{\bell_t}\in \mathcal{A}(\mathbb{X}) \text{ and }
            \\ \int_0^T \left( \norm{\bell_t\proj{i}}_t^{\mathcal{A}(\mathbb{X})} d|X^\word{i}|_t + \norm{\bell_t\proj{ji}}_t^{\mathcal{A}(\mathbb{X})} d[X^\word{i},X^\word{j}]_t +\left(\norm{\bell_t\proj{i}}_t^{\mathcal{A}(\mathbb{X})}\right)^2 d[X^\word{i},X^\word{j}]_t \right) < \infty
        \end{array}
        \right \},
    \end{equation}
    where $\dot{\bell_t} :=\sum_{n=0}^\infty \sum_{\word{v} \in V_n} \frac{d}{dt} \bell_t^\word{v} \word{v}$ for all $t \in [0, T]$.
    
    \begin{thm}[It\^o's formula] \label{thm:sig_ito}
        Let $\bell \in \mathcal{I}(\mathbb{X}) $, then $\langle \bell, {\mathbb{X}}_t\rangle$ is an It\^o process such that
        \begin{align} \label{eq:ItoW}
            \langle{\bell}, {\mathbb{X}}_t\rangle = \bell^{\emptyword}+\sum_{i\in A_d} \int_0^t \langle{\bell \proj{i}, \mathbb{X}_s\rangle dX_s^i + \tfrac{1}{2} \sum_{i\in A_d} \sum_{j\in A_d} \int_0^t \langle\bell \proj{ji}}, {\mathbb{X}}_s \rangle d[X^j,X^i]_s,~t\leq T.
        \end{align} 
        Moreover, if $\bell \in \mathcal{I}^{'}(\mathbb{X}) $, then 
        \begin{align} \label{eq:ItohatW}
            \langle{\bell_t}, {\mathbb{X}}_t\rangle = \bell_t^{\emptyword}+\sum_{i\in A_d} \int_0^t \langle{\bell_s \proj{i}, \mathbb{X}_s\rangle dX_s^i + \tfrac{1}{2} \sum_{i\in A_d} \sum_{j\in A_d} \int_0^t \langle\bell_s \proj{ji}}, {\mathbb{X}}_s \rangle d[X^j,X^i]_s + \int_0^t \langle\dot{\bell}_s, \mathbb{X}_s\rangle ds,~t \leq T.
        \end{align} 
    \end{thm}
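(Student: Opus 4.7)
The plan is to first establish \eqref{eq:ItoW} for $\bell \in \tTA{M}$, and then to extend to the general case $\bell \in \mathcal{I}(\mathbb{X})$ by truncation and a dominated-convergence argument. The time-dependent formula \eqref{eq:ItohatW} will then follow by applying an It\^o product rule to each scalar coefficient $\bell_t^\word{v}$. For $\bell \in \tTA{M}$, the bracket $\langle \bell, \mathbb{X}_t \rangle$ is a finite linear combination of iterated Stratonovich integrals, so one can differentiate word by word using the recursive definition of the signature: for $\word{v} = \word{u}\word{i}$ of length $\geq 1$,
\[
  d\mathbb{X}^\word{v}_t \;=\; \mathbb{X}^\word{u}_t \circ dX^i_t \;=\; \mathbb{X}^\word{u}_t\, dX^i_t + \tfrac{1}{2}\, d[\mathbb{X}^\word{u}, X^i]_t,
\]
where the covariation further reduces to $\mathbb{X}^\word{w}_t\, d[X^j, X^i]_t$ whenever $\word{u} = \word{w}\word{j}$ has length $\geq 1$ (and vanishes when $\word{u} = \emptyword$). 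Grouping contributions by the position of the rightmost one or two letters, and using the identification $(\bell\proj{i})^\word{u} = \bell^{\word{u}\word{i}}$ built into \eqref{eq:projection}, the finite sums reorganize exactly into the right-hand side of \eqref{eq:ItoW}.

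\textbf{Passage to the limit.} For general $\bell \in \mathcal{I}(\mathbb{X})$, I would apply the finite case to the truncations $\bell^{\leq M}$ and send $M \to \infty$. Membership in $\mathcal{A}(\mathbb{X})$ yields pointwise convergence $\langle \bell^{\leq M}, \mathbb{X}_t \rangle \to \langle \bell, \mathbb{X}_t \rangle$, together with the uniform dominations $|\langle \bell^{\leq M}\proj{i}, \mathbb{X}_s \rangle| \leq \norm{\bell\proj{i}}_s^{\mathcal{A}(\mathbb{X})}$ and $|\langle \bell^{\leq M}\proj{ji}, \mathbb{X}_s \rangle| \leq \norm{\bell\proj{ji}}_s^{\mathcal{A}(\mathbb{X})}$. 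The three integrability requirements encoded in \eqref{eq:I} then furnish the dominating processes exactly where they are needed: the $|X^\word{i}|$-variation bound handles the drift piece arising from the Stratonovich-to-It\^o conversion, the $[X^\word{i}, X^\word{j}]$-variation bound controls the bounded-variation integral in \eqref{eq:ItoW}, and the quadratic-variation bound $\int_0^T (\norm{\bell\proj{i}}_t^{\mathcal{A}(\mathbb{X})})^2\, d[X^\word{i}, X^\word{j}]_t < \infty$ supplies the hypothesis of the stochastic dominated convergence theorem applied to the martingale piece.

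\textbf{Time-dependent case and main obstacle.} For $\bell \in \mathcal{I}'(\mathbb{X})$, I would write $\langle \bell_t, \mathbb{X}_t \rangle = \sum_\word{v} \bell_t^\word{v} \mathbb{X}^\word{v}_t$ and apply the It\^o product rule term by term: since $t \mapsto \bell_t^\word{v}$ is $C^1$ with zero quadratic variation, $d(\bell_t^\word{v} \mathbb{X}^\word{v}_t) = \bell_t^\word{v}\, d\mathbb{X}^\word{v}_t + \dot{\bell}_t^\word{v} \mathbb{X}^\word{v}_t\, dt$. Summing over $\word{v}$, the first pieces reproduce \eqref{eq:ItoW} with frozen-time integrand $\bell_s$, while the second pieces assemble into $\int_0^t \langle \dot{\bell}_s, \mathbb{X}_s \rangle\, ds$; the interchange of sum and integral is justified by the admissibility conditions in \eqref{eq:Iprime}. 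The main obstacle is the passage to the limit in the stochastic integral, which is not reachable by ordinary dominated convergence alone and instead requires its stochastic counterpart, ultimately justified by the second-moment control on $\norm{\bell\proj{i}}^{\mathcal{A}(\mathbb{X})}$; showing that the truncation errors vanish without leaving spurious tail contributions is where the careful architecture of the sets $\mathcal{A}(\mathbb{X})$, $\mathcal{I}(\mathbb{X})$ and $\mathcal{I}'(\mathbb{X})$ becomes essential.
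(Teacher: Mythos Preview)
Your proof plan is correct and follows the standard route---finite truncations handled word by word via the Stratonovich recursion and It\^o correction, then a limit argument using dominated convergence for the finite-variation parts and stochastic dominated convergence for the martingale part, with the integrability conditions in $\mathcal{I}(\mathbb{X})$ and $\mathcal{I}'(\mathbb{X})$ supplying exactly the required dominations. The paper does not give its own proof of this statement at all: it simply refers to \cite[Theorem~3.3 and Corollary~3.4]{abi2024path}, and your sketch is essentially what one expects that reference to contain.
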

    \begin{proof}
        We refer to \cite[Theorem 3.3. and Corollary 3.4.]{abi2024path} for the proof.
    \end{proof}

\section{Pricing and perfect hedging in frictionless market}\label{sec:perfect_hedging_frictionless_market}

In this section, we first show how to price and perfectly replicate a signature payoff in a frictionless market. This perfect hedging strategy will enter in the hedging problem with market impact in Section \ref{sec: MQV_hedging_impact}. In the absence of market impact, i.e.~$\eta=\nu=0$, the asset traded price is $S_t$. 
In this context, we consider that the trader wants to hedge a given signature payoff
\begin{equation}\label{eq:signature_payoff_without_impact}
    H_T=\left\langle \xi,\hat{\mathbb{S}}_T \right\rangle,
\end{equation} 
for a certain admissible $\xi \in T(\mathbb{R}^2)$, hence $\xi$ has a finite number of non-zero terms. We start by giving some examples of path-dependent payoffs that can be exactly represented as signature payoffs. 

\begin{sqexample} Signature payoffs include:
    \begin{itemize}
        \item European polynomial payoffs of the form $H_T=\sum_{k=0}^N \alpha_k(S_T-K)^k$, for given $N\in \mathbb{N}$, $K\in \mathbb{R}$ and $(\alpha_k)_{k\leq N}\in\mathbb{R}^N$, since $H_T=\left\langle\sum_{k=0}^N\alpha_k\left( \word{2}+\emptyword (S_0-K)\right)\shupow{k},\hat{\mathbb{S}}_T\right\rangle$;
        \item Asian polynomial payoffs of the form  $H_T=\sum_{k=0}^N\alpha_k\left(\frac{1}{T}\int_0^T S_t dt -K\right)^k$, for given $N\in \mathbb{N}$, $K\in \mathbb{R}$ and $(\alpha_k)_{k\leq N}\in\mathbb{R}^N$, since $H_T=\left\langle\sum_{k=0}^N\alpha_k\left(\frac{1}{T}\word{21}-\emptyword K\right)\shupow{k},\hat{\mathbb{S}}_T\right\rangle$. 
    \end{itemize}
\end{sqexample}

Let $(V_t^X)_{t\in[0,T]}$ denote the hedging portfolio of the trader such that $$dV_t^X= X_t dS_t,\quad V_0^X = V_0 \in \mathbb{R}^+.$$ where $X$ is an admissible hedging in
\begin{equation*}
    \mathcal{L}^2:=\left\{X:[0,T]\times\Omega\rightarrow\mathbb{R}~\text{prog. measurable process such that } \E\left(\int_0^T X_t^2 dt\right)< \infty \right\}.
\end{equation*}
The fair pricing of the option corresponds to $V_0.$ For $\bell \in T((\mathbb{R}^d))$, let us define the tensor product exponential as
\begin{equation*}
    \exp_{\otimes}(\bell):=\sum_{k \geq 0} \frac{\bell^{\otimes k}}{k!}.
\end{equation*}
In the frictionless market, a perfect hedging strategy can be deduced using the quantity 
\begin{equation}\label{eq: def_xi_t}
    \xi_t:= \xi|_{\mathcal E_{T-t}}, \quad t\leq T,
\end{equation}
recall the projection defined by \eqref{eq: proj_tensor}, where
\begin{align}\label{eq: expectation_time_augmented_sig}
 \mathcal E_t = \exp_{\otimes}\left( \left(\word{1} +\frac{\sigma^2}2 \word{22}  \right)  t\right)  
\end{align}
corresponds to the expectation of the signature of the time-augmented stock price $S$ under the risk-neutral probability measure thanks to the  \cite{fawcett} formula. 

\begin{thm}\label{thm: perfect_hedging}
      Let $\xi \in T(\mathbb{R}^2)$. Then, $\left(\xi_t \right)_{t\leq T}$ defined by \eqref{eq: def_xi_t}is well defined for all $t\leq T$ and  belongs to $\mathcal I'(\widehat{\mathbb S})$. Furthermore, the  option \eqref{eq:signature_payoff_without_impact}  can be perfectly hedged using the portfolio with initial wealth  $V_0=\xi_0^\emptyword$ and strategy $X_t = \left\langle \xi_{t} \proj{2},\hat{\mathbb{S}}_t \right\rangle$, that is 
    \begin{subequations}
    \begin{align}
    &V^X_T = H_T =  \left\langle \xi,\hat{\mathbb{S}}_T\right\rangle, \label{eq: value_at_T_port_perfect}\\
    \intertext{In particular,}
    &V^X_t = \langle\xi_t,\hat{\mathbb{S}}_t\rangle,\quad t\leq T. \label{eq: value_at_t_port_perfect}
    \end{align}
    \end{subequations}
  
\end{thm}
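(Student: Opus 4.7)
The plan is to recognize $\xi_t$ as the solution of a backward Kolmogorov-type equation on the tensor algebra, and then apply the signature It\^o formula (Theorem~\ref{thm:sig_ito}) to $\langle \xi_t, \hat{\mathbb{S}}_t \rangle$ to identify the hedging strategy. Since $\xi \in T(\mathbb{R}^2)$ has support on words of length at most some $N$, the component $\xi_t^{\word{v}} = \sum_{\word{u}} \xi^{\word{vu}} \mathcal{E}_{T-t}^{\word{u}}$ vanishes as soon as $|\word{v}|>N$ and, for $|\word{v}|\le N$, involves only finitely many non-zero $\mathcal{E}_{T-t}^{\word{u}}$, each of which is a polynomial in $t$ coming from the series defining $\exp_\otimes$. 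Consequently $\xi_t$ lives in the truncated tensor algebra $\tTA[2]{N}$, the coefficient maps $t \mapsto \xi_t^{\word{v}}$ are smooth, and membership in $\mathcal{I}'(\hat{\mathbb{S}})$ reduces to standard moment estimates on the signature of $\hat{S}$ over the compact interval $[0,T]$.

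Next I would derive the backward equation. Using $\mathcal{E}_s = \exp_\otimes\!\bigl((\word{1}+\tfrac{\sigma^2}{2}\word{22})s\bigr)$, one has $\frac{d}{ds}\mathcal{E}_s = (\word{1}+\tfrac{\sigma^2}{2}\word{22})\otimes \mathcal{E}_s$, and combining this with the elementary identity $(\bell|_\eta)|_{\word{w}} = \bell|_{\word{w}\otimes\eta}$ (a direct coefficient check from \eqref{eq: proj_tensor}) yields
\[
\dot{\xi}_t + \xi_t\proj{1} + \tfrac{\sigma^2}{2}\,\xi_t\proj{22} = 0, \qquad \xi_T = \xi|_{\mathcal{E}_0} = \xi.
\]

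Third, I would apply It\^o's formula \eqref{eq:ItohatW} to $\langle \xi_t, \hat{\mathbb S}_t\rangle$. Since $d\hat{S}^1_t = dt$, $d\hat{S}^2_t = dS_t$, $d[\hat S^2, \hat S^2]_t = \sigma^2\,dt$ and the remaining brackets vanish, one obtains
\[
d\langle \xi_t, \hat{\mathbb S}_t\rangle = \Bigl\langle \dot\xi_t + \xi_t\proj{1} + \tfrac{\sigma^2}{2}\xi_t\proj{22},\, \hat{\mathbb S}_t\Bigr\rangle dt + \langle \xi_t\proj{2}, \hat{\mathbb S}_t\rangle\, dS_t = X_t\, dS_t,
\]
by the backward equation. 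This is precisely the self-financing dynamics \eqref{eq: hedging_strat_trading_speed} of the replicating portfolio. Integrating, and using $\hat{\mathbb S}_0 = \emptyword$ so that $\langle \xi_0, \hat{\mathbb S}_0\rangle = \xi_0^{\emptyword}$, gives \eqref{eq: value_at_t_port_perfect}; evaluating at $t=T$ with $\xi_T = \xi$ delivers $V^X_T = \langle \xi, \hat{\mathbb S}_T\rangle = H_T$, which is \eqref{eq: value_at_T_port_perfect}.

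The one delicate point is the algebraic step: the proof hinges on the projection-tensor identity $(\bell|_\eta)|_{\word{w}} = \bell|_{\word{w}\otimes\eta}$ coupled with the fact that $\mathcal{E}_s$ is a genuine tensor exponential, so its derivative factors \emph{on the left} by $\word{1}+\tfrac{\sigma^2}{2}\word{22}$, matching exactly the left-projection structure appearing in the It\^o decomposition. Everything else, in particular the admissibility and integrability in $\mathcal{I}'(\hat{\mathbb S})$, is routine once one notices that finite support of $\xi$ propagates to $\xi_t$.
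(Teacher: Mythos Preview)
Your proof is correct and follows essentially the same approach as the paper: establish that $\xi_t$ lies in a truncated tensor algebra and solves the backward equation $\dot\xi_t + \xi_t\proj{1} + \tfrac{\sigma^2}{2}\xi_t\proj{22}=0$, then apply the signature It\^o formula (Theorem~\ref{thm:sig_ito}) to $\langle \xi_t,\hat{\mathbb S}_t\rangle$ so that the drift vanishes and only $\langle \xi_t\proj{2},\hat{\mathbb S}_t\rangle\,dS_t$ survives. The paper splits this into two lemmas (one for the ODE, one for the It\^o step) and verifies the backward equation coefficient by coefficient, whereas you use the projection identity $(\bell|_\eta)\proj{w}=\bell|_{\word{w}\otimes\eta}$ together with the left-factorisation of $\dot{\mathcal E}_s$; these are the same computation. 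One harmless slip: the label \eqref{eq: hedging_strat_trading_speed} refers to the inventory equation $X^\theta_t = X_0+\int_0^t\theta_s\,ds$, not to the self-financing relation $dV^X_t = X_t\,dS_t$ you actually need.
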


\begin{proof}
    The proof follows from two following lemmas. By combining Lemma \ref{lem: pde_perfect_hedging} and Lemma \ref{lem: ode_xi_t}, we deduce that the signature payoff $\langle \xi, \hat{\mathbb{S}}_T\rangle$ can be perfectly replicated using the portfolio with initial wealth $V_0=\xi_0^\emptyword$ and strategy $X_t = \left\langle \xi_{t} \proj{2},\hat{\mathbb{S}}_t \right\rangle$.
\end{proof}

\begin{lem}\label{lem: pde_perfect_hedging}
  Let $\xi \in T(\mathbb{R}^2)$.  Assume that there exists $(\xi_t)_{t\in [0,T]}   \in \mathcal I'(\widehat{\mathbb S})$ satisfying the following equation
\begin{align}\label{eq: pde_perfect_hedging}
    \dot{\xi}_t = -\xi_{t}\proj{1} - \frac{1}{2} \sigma^2 \xi_{t}\proj{22}, \quad  \xi_T = \xi.
\end{align}
Then, the  option \eqref{eq:signature_payoff_without_impact}  can be perfectly hedged using the portfolio with initial wealth  $V_0=\xi_0^\emptyword$ and strategy $X_t = \left\langle \xi_{t} \proj{2},\hat{\mathbb{S}}_t \right\rangle$, that is \eqref{eq: value_at_T_port_perfect} and \eqref{eq: value_at_t_port_perfect} are satisfied.
\end{lem}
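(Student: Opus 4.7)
The plan is a direct application of the time-dependent It\^o formula for signatures (Theorem~\ref{thm:sig_ito}, equation~\eqref{eq:ItohatW}), followed by a drift-cancellation argument engineered by the chosen ODE.

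First I would identify the relevant quadratic structure. Here the driving process is the two-dimensional time-augmented price $\widehat S_t = (t, S_t)$, so the letter $\word{1}$ corresponds to the finite-variation time coordinate and $\word{2}$ to $S$. Consequently $dX^1_s = ds$, $dX^2_s = dS_s = \mu\,ds + \sigma\,dW_s$, and the only non-vanishing quadratic covariation is $d[X^2,X^2]_s = \sigma^2\,ds$, so $\mathcal Q_{\word{2}}(\widehat S) = \{\word{2}\}$ and $\mathcal Q_{\word{1}}(\widehat S) = \emptyset$.

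Since by hypothesis $(\xi_t)_{t\in[0,T]} \in \mathcal I'(\widehat{\mathbb S})$, the It\^o formula \eqref{eq:ItohatW} applies and, using the vanishing brackets above, collapses to
\begin{align*}
\langle \xi_t, \widehat{\mathbb{S}}_t\rangle
&= \xi_0^{\emptyword}
+ \int_0^t \langle \xi_s\proj{1}, \widehat{\mathbb{S}}_s\rangle\,ds
+ \int_0^t \langle \xi_s\proj{2}, \widehat{\mathbb{S}}_s\rangle\,dS_s \\
&\quad + \tfrac{1}{2}\sigma^2 \int_0^t \langle \xi_s\proj{22}, \widehat{\mathbb{S}}_s\rangle\,ds
+ \int_0^t \langle \dot{\xi}_s, \widehat{\mathbb{S}}_s\rangle\,ds.
\end{align*}
Substituting $\dot{\xi}_s = -\xi_s\proj{1} - \tfrac{1}{2}\sigma^2 \xi_s\proj{22}$ from \eqref{eq: pde_perfect_hedging} makes the last integral cancel exactly the two $ds$ drift terms generated by $\xi_s\proj{1}$ and $\xi_s\proj{22}$, leaving the stochastic integral representation
\[
\langle \xi_t, \widehat{\mathbb{S}}_t\rangle = \xi_0^{\emptyword} + \int_0^t \langle \xi_s\proj{2}, \widehat{\mathbb{S}}_s\rangle\,dS_s
= V_0 + \int_0^t X_s\,dS_s = V_t^X,
\]
where I used $V_0 = \xi_0^{\emptyword}$ and $X_s = \langle \xi_s\proj{2}, \widehat{\mathbb{S}}_s\rangle$. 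This gives \eqref{eq: value_at_t_port_perfect}; setting $t=T$ and invoking $\xi_T = \xi$ yields \eqref{eq: value_at_T_port_perfect}.

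The whole argument is essentially a verification: the ODE \eqref{eq: pde_perfect_hedging} is precisely the backward Kolmogorov equation for the signature-linear functional $\langle \xi_t, \widehat{\mathbb{S}}_t\rangle$ along the dynamics of $\widehat S$, which is exactly what cancels the drift. The main technical point, then, is not the algebraic cancellation but justifying the application of the time-dependent It\^o formula to an infinite linear combination of signature elements; this is the role of the admissibility requirement $\xi \in \mathcal I'(\widehat{\mathbb S})$ and is precisely what is proved (together with existence and the explicit form $\xi_t = \xi|_{\mathcal E_{T-t}}$) in Lemma~\ref{lem: ode_xi_t}.
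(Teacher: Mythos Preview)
Your proof is correct and follows essentially the same approach as the paper: apply the time-dependent It\^o formula of Theorem~\ref{thm:sig_ito} to $\langle \xi_t,\widehat{\mathbb S}_t\rangle$, use the ODE~\eqref{eq: pde_perfect_hedging} to cancel the $ds$-drift terms, and identify the remaining stochastic integral with the self-financing portfolio. The only cosmetic difference is that you keep the integrator as $dS_s$ whereas the paper expands it as $\mu\,ds+\sigma\,dW_s$ before matching with $V_t^X$; the arguments are otherwise identical.
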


\begin{proof}
First, by applying the It\^o's formula in Theorem \ref{thm:sig_ito} to $\left(\langle \xi_t, \hat{\mathbb{S}}_t\rangle \right)_{t\leq T}$ and using \eqref{eq: pde_perfect_hedging}, we have that 
\begin{align*}
    \langle\xi_T,\hat{\mathbb{S}}_T\rangle &= \xi_0^{\emptyword}+\int_0^T \langle\dot{\xi_t} +\xi_{t}\proj{1}+\mu~\xi_t\proj{2} + \frac{1}{2} \sigma^2 \xi_{t}\proj{22}, \hat{\mathbb{S}}_t\rangle dt+ \int_0^T \sigma~\langle\xi_t\proj{2}, \hat{\mathbb{S}}_t \rangle dW_t, \\
    &=\xi_0^{\emptyword}+\int_0^T \langle\mu~\xi_t\proj{2}, \hat{\mathbb{S}}_t\rangle dt+ \int_0^T \sigma~\langle\xi_t\proj{2}, \hat{\mathbb{S}}_t \rangle dW_t.
\end{align*}
But as 
\begin{align*}
    V_T^X= V_0 + \int_0^T \mu X_t dt + \sigma \int_0^T X_t dW_t,  
\end{align*}
we observe that by taking
\begin{align*}
    X_t =\langle \xi_{t} \proj{2},\hat{\mathbb{S}}_t \rangle \text{ and } V_0=\xi_0^\emptyword,
\end{align*}
then, we directly obtain \eqref{eq: value_at_t_port_perfect} and \eqref{eq: value_at_T_port_perfect}.
\end{proof}

\begin{lem}\label{lem: ode_xi_t}
Let $\xi \in T(\mathbb{R}^2)$. Then, $(\xi_t)_{t\leq T}$ given by \eqref{eq: def_xi_t} is well defined, belongs to $\mathcal I'(\widehat{\mathbb S})$ and satisfies equation \eqref{eq: pde_perfect_hedging}.
\end{lem}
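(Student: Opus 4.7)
The plan is to exploit the finiteness of $\xi \in T(\mathbb{R}^2)$ together with two algebraic facts: the adjointness-style identity $\xi|_{\phi \otimes \psi} = (\xi|_\psi)|_\phi$, and the commutation $a \otimes \mathcal{E}_s = \mathcal{E}_s \otimes a$ where $a := \word{1} + \tfrac{\sigma^2}{2}\word{22}$ is the generator of $\mathcal{E}$. First, let $M$ be such that $\xi^{\word{v}} = 0$ whenever $|\word{v}|>M$. From \eqref{eq: proj_tensor}, $\xi_t^{\word{v}} = \sum_{\word{u}} \xi^{\word{vu}}\,\mathcal{E}_{T-t}^{\word{u}}$, which vanishes for $|\word{v}|>M$ and, for $|\word{v}|\leq M$, reduces to a finite sum over $|\word{u}|\leq M-|\word{v}|$. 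Hence $\xi_t \in T^M(\mathbb{R}^2)\subset T(\mathbb{R}^2)$ is well defined for every $t\in[0,T]$.

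For membership in $\mathcal{I}'(\widehat{\mathbb{S}})$: since $\mathcal{E}_s = \exp_\otimes(sa)$, each coordinate $\mathcal{E}_s^{\word{u}}$ is polynomial in $s$, so $t\mapsto\xi_t^{\word{v}}$ is polynomial and in particular $C^1([0,T],\mathbb{R})$. Because $\xi_t$, $\dot\xi_t$, $\xi_t\proj{i}$, $\xi_t\proj{ji}$ all remain in the truncated algebra $T^M(\mathbb{R}^2)$, the admissibility seminorms $\|\cdot\|^{\mathcal{A}(\widehat{\mathbb{S}})}_t$ are almost-surely finite. The integrability conditions in \eqref{eq:Iprime} then reduce to finiteness of polynomial moments of the time-augmented Brownian signature $\widehat{\mathbb{S}}$, which is standard.

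For the ODE, differentiating the tensor exponential gives $\tfrac{d}{ds}\mathcal{E}_s = a \otimes \mathcal{E}_s = \mathcal{E}_s \otimes a$ (the two orderings coincide since $a^{\otimes(k+1)}$ is unambiguous), so $\tfrac{d}{dt}\mathcal{E}_{T-t} = -\mathcal{E}_{T-t}\otimes a$. A direct coordinate-wise expansion yields the composition identity $\xi|_{\phi\otimes\psi} = (\xi|_\psi)|_\phi$, and projection is linear in its second argument, hence
\begin{align*}
\dot\xi_t \;=\; \xi\Big|_{\frac{d}{dt}\mathcal{E}_{T-t}} \;=\; -\bigl(\xi|_{\mathcal{E}_{T-t}}\bigr)\big|_{a} \;=\; -\,\xi_t\proj{1} \;-\; \tfrac{\sigma^2}{2}\,\xi_t\proj{22}.
\end{align*}
The terminal condition $\xi_T = \xi|_{\mathcal{E}_0} = \xi|_{\emptyword} = \xi$ holds because $\mathcal{E}_0$ is the identity $\emptyword$ of the tensor algebra.

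The argument is essentially algebraic, so the main obstacle is bookkeeping: carefully verifying the projection composition rule and tracking which side of the tensor product carries the generator when differentiating $\mathcal{E}_{T-t}$. The commutation $a \otimes \mathcal{E}_s = \mathcal{E}_s \otimes a$ makes the ordering choice immaterial in the final expression, and staying inside $T^M(\mathbb{R}^2)$ throughout trivialises the verification of $\mathcal{I}'(\widehat{\mathbb{S}})$-admissibility.
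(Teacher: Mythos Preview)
Your proof is correct and follows essentially the same route as the paper: both arguments rest on differentiating $\mathcal E_{T-t}$ and propagating the derivative through the projection $\xi|_{\,\cdot\,}$, with the finiteness $\xi\in T^M(\mathbb R^2)$ trivialising well-definedness and the $\mathcal I'(\widehat{\mathbb S})$ conditions. The only difference is packaging: the paper carries out the computation coordinate by coordinate, showing directly that $\dot\xi_t^{\word v}=-(\xi_t^{\word{v1}}+\tfrac{\sigma^2}{2}\xi_t^{\word{v22}})$, whereas you encapsulate that same calculation in the clean algebraic identity $\xi|_{\phi\otimes\psi}=(\xi|_\psi)|_\phi$ combined with the commutation $a\otimes\mathcal E_s=\mathcal E_s\otimes a$. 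This buys you a shorter and more conceptual argument at the cost of having to state (and in a fully rigorous write-up, verify) that auxiliary identity; the paper's version is more explicit but slightly longer.
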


\begin{proof}
    First, since $\xi \in T(\mathbb{R}^2)$, we know that there exists $M>0$ such that $\xi\in T^M(\mathbb{R}^2)$. Thus, we deduce that $\xi_t$ has finitely many non-zero terms and is well defined, such that $\xi_t\in \mathcal{I}'(\hat{\mathbb{S}})$. Then, as $\xi_t=\xi|_{\mathcal{E}_{T-t}}$, we have that
\begin{equation*}
    \xi_t = \sum_{n=0}^{M} \sum_{\word{v}\in V_n} \xi^{\word{v}}_t \mathbf{v},
\end{equation*}
with, for each $\word{v}\in V_n,~n\leq M$,
\begin{equation*}
    \xi_t^\word{v} = \sum_{m=0}^{M-n} \sum_{\word{w}\in V_m} \xi^{\word{vw}} \mathcal{E}_{T-t}^\word{w}. 
\end{equation*}
Since 

\begin{equation*}
    \dot{\mathcal{E}}_{T-t} =- \left(\word{1}+\frac{1}{2}\word{22}\right) \otimes \mathcal{E}_{T-t},
\end{equation*}
we observe that, for each $\word{v}\in V_n,~n\leq M$,
\begin{align*}
    \dot{\xi_t}^\word{v}&=-\sum_{m=0}^{M-n}\sum_{\word{w}\in V_m} \xi^{\word{vw}} \left(\left(\word{1}+\frac{\sigma^2}{2} \word{22}\right)\otimes\mathcal{E}_{T-t}\right)^\word{w}\\
    &=-\sum_{m=0}^{M-n}\sum_{\word{w}\in V_m} \xi^{\word{vw}} \sum_{l=0}^{m}\left(\sum_{\word{w}'\in V_{l}, ~\word{1}\word{w}'=\word{w}} \mathcal{E}_{T-t}^{\word{w}'}+\frac{\sigma^2}{2} \sum_{\word{w}'\in V_{l}, ~\mathbf{22}\word{w}'=\word{w}} \mathcal{E}_{T-t}^{\word{w}'}\right)\\
    &=-\sum_{l=0}^{M-n}\sum_{\word{w}\in V_l} \left(\xi^{\word{v1w}} + \frac{\sigma^2}{2} \xi^{\word{v22w}} \right) \mathcal{E}_{T-t}^\word{w}\\
    &=-\left(\xi_t^{\word{v1}} +\frac{\sigma^2}{2} \xi_t^\word{v22} \right).  
\end{align*}
Therefore, we infer that
\begin{align*}
    \dot{\xi}_t &= \sum_{n=0}^M \sum_{\word{v}\in V_n} \dot{\xi}_t^\word{v} \word{v}=-\sum_{n=0}^M \sum_{\word{v}\in V_n} \left({\xi}_t^\word{v1}+\frac{\sigma^2}{2} \xi_t^\word{v22}\right) \word{v}=-\xi_t\proj{1}-\frac{\sigma^2}{2} \xi_t\proj{22},
\end{align*}
which gives \eqref{eq: pde_perfect_hedging}.
\end{proof}

\section{Mean-quadratic variation hedging with market impact}\label{sec: MQV_hedging_impact}
Let us now consider the hedging problem of signature payoffs by assuming both temporary and permanent market impact as in \eqref{eq: public_traded_price}  and \eqref{eq: private_traded_price}. In this case, the trader aims  to optimally hedge a signature payoff given by 
\begin{equation}\label{eq:signature_payoff_permanent_impact}
    H_T^\theta=\left\langle \xi,\hat{\mathbb{P}}^\theta_T \right\rangle, 
\end{equation} 
with $\xi \in T^M(\mathbb{R}^2)$, for some $M>0$. Since we assume permanent price impact, the signature payoff \eqref{eq:signature_payoff_permanent_impact} depends now on the time-augmented signature of $P^\theta$ given by \eqref{eq: public_traded_price} instead of the time-augmented signature of $S$ (compare with \eqref{eq:signature_payoff_without_impact}). In addition, the hedging strategy will be of finite variation in the form \eqref{eq: hedging_strat_trading_speed}, thanks to the regularizing effect of the temporary impact component with coefficient $\eta >0$ present in \eqref{eq: private_traded_price}. In this frictional market, we follow \cite{almgren2016option} and we assume that the trader's P\&L is marked to market using the Bachelier option price without friction such that the P\&L at time $t\leq T$ resulting from the hedging strategy denoted by $R_t^\theta$ is given by
\begin{equation}\label{eq:PL_at_t}
    R_t^\theta=(V_0-X_0S_0)+X_t^\theta P_t^\theta-\int_0^t \tilde{P}^\theta_s \theta_s ds -\langle \xi_t,\hat{\mathbb{P}}_t^\theta\rangle,~t\leq T,
\end{equation}
with $V_0$ the initial value of the hedging portfolio composed of cash and holdings $X_0$ in the traded asset, and $(\xi_t)_{t\in[0,T]}$ defined by \eqref{eq: def_xi_t}. In particular,
\begin{equation}\label{eq:PL_at_T}
    R_T^\theta=(V_0-X_0 S_0)+X_T^\theta P_T^\theta-\int_0^T \tilde{P}^\theta_t \theta_t dt -H_T^\theta.
\end{equation}

Moreover, we assume that the trader optimally hedges her short position using a mean-quadratic variation criteria in the form
\begin{equation}\label{eq:optimal_control_1}
    \sup_{\theta\in \mathcal{A}} \E\left(R_T^\theta-\frac{\lambda}{2}[R^\theta,R^\theta]_T\right), 
\end{equation}
with $\lambda>0$, the risk-aversion parameter, and $\mathcal{A}$ the set of admissible trading speed defined by

\begin{equation} 
    \mathcal{A}:=\left\{\theta:[0,T]\times\Omega\rightarrow\mathbb{R}~\text{prog. meas. process such that }\E\left(\int_0^T \theta_t^{p} dt\right)< \infty, \text{ for } p:=2\vee 2\tilde{M}\mathbb{1}_{\{\nu>0\}} \right\},
\end{equation} 
$\tilde{M}$ being the truncation order of $\xi_t\proj{2}$.\\

Using the optimal strategy from the frictionless market derived in Section \ref{sec:perfect_hedging_frictionless_market}, we can rewrite the dynamic of $R^\theta$ given by \eqref{eq:PL_at_t}, which facilitates the computation of its quadratic variation and simplifies the expression of the objective criterion.

\begin{lem}
    Let $(\xi_t)_{t\in[0,T]}$ be defined by \eqref{eq: def_xi_t} and $\theta\in \mathcal{A}$, then 
    \begin{equation}\label{eq: rewritte_PL}
        R_T^\theta = \left(V_0-\xi_0^{\emptyword} \right) + \int_0^T (\mu+ \nu\theta_t)\left(X_t^\theta-\left\langle \xi_t\proj{2},\hat{\mathbb{P}}^\theta_t \right\rangle\right)dt-\int_0^T \eta \theta_t^2 dt+\sigma\int_0^T \left( X_t^\theta - \left\langle \xi_t\proj{2},\hat{\mathbb{P}}^\theta_t \right\rangle\right) dW_t, 
    \end{equation}
    and its quadratic variation is given by 
    \begin{equation*}
        \E \left([R^\theta,R^\theta]_T\right) = \sigma^2 \E \left(\int_0^T \left( X_t^\theta - \left\langle \xi_t\proj{2},\hat{\mathbb{P}}^\theta_t \right\rangle\right)^2dt\right). 
    \end{equation*}
    The stochastic control problem can be rewritten as 
    \begin{equation} \label{eq:control_problem_rewritten}
    \begin{aligned}
    \sup_{\theta\in \mathcal{A}} \E\left(R_T^\theta-\frac{\lambda}{2}[R^\theta,R^\theta]_T\right)=& \left(V_0 -\xi_0^{\emptyword} \right)+\sup_{\theta\in\mathcal{A}} \bigg[ \E \left( \int_0^T (\mu+ \nu\theta_t)\left(X_t^\theta -\left\langle \xi_t\proj{2},\hat{\mathbb{P}}^\theta_t \right\rangle\right)dt-\int_0^T \eta \theta_t^2 dt\right)\\
    &~~~~~~~- \frac{\lambda}{2}\sigma^2 \E \bigg(\int_0^T \left( X_t^\theta - \left\langle \xi_t\proj{2},\hat{\mathbb{P}}^\theta_t \right\rangle\right)^2 dt\bigg) \bigg].
    \end{aligned}
    \end{equation}
\end{lem}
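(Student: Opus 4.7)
The plan is to derive the SDE for $R^\theta$ directly from \eqref{eq:PL_at_t} by differentiating each summand and then integrating. The only non-trivial semimartingale decomposition is that of $\langle \xi_t, \hat{\mathbb{P}}_t^\theta\rangle$, for which I would apply the signature It\^o formula of Theorem~\ref{thm:sig_ito} with the impacted time-augmented price $\hat{\mathbb{P}}^\theta$ in place of $\hat{\mathbb{S}}$. Because $\xi\in T^M(\mathbb{R}^2)$ has only finitely many non-zero coordinates and $\xi_t = \xi|_{\mathcal{E}_{T-t}}$ is a polynomial expression in the deterministic tensor $\mathcal{E}_{T-t}$, the same reasoning as in Lemma~\ref{lem: ode_xi_t} shows that $\xi_\cdot$ lies in $\mathcal{I}'(\hat{\mathbb{P}}^\theta)$, so the formula applies.

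Using $d(\hat P^\theta)^1_t = dt$, $d(\hat P^\theta)^2_t = (\mu+\nu\theta_t)dt + \sigma dW_t$ and $d[P^\theta,P^\theta]_t=\sigma^2 dt$ (the other brackets being zero since the first coordinate is deterministic), Theorem~\ref{thm:sig_ito} yields
\begin{align*}
 d\langle\xi_t,\hat{\mathbb{P}}_t^\theta\rangle &= \Big(\langle\dot{\xi}_t,\hat{\mathbb{P}}_t^\theta\rangle + \langle\xi_t\proj{1},\hat{\mathbb{P}}_t^\theta\rangle + \tfrac{\sigma^2}{2}\langle\xi_t\proj{22},\hat{\mathbb{P}}_t^\theta\rangle\Big)\,dt \\
 &\quad + (\mu+\nu\theta_t)\langle\xi_t\proj{2},\hat{\mathbb{P}}_t^\theta\rangle\,dt + \sigma\langle\xi_t\proj{2},\hat{\mathbb{P}}_t^\theta\rangle\,dW_t,
\end{align*}
and the bracketed drift vanishes identically by the terminal-value ODE \eqref{eq: pde_perfect_hedging} satisfied by $\xi_t$, which is precisely the point of constructing $\xi_t$ that way. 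Differentiating $X_t^\theta P_t^\theta$ by the ordinary product rule (no cross bracket, since $X^\theta$ is absolutely continuous) and subtracting $\tilde P_t^\theta\theta_t\,dt = P_t^\theta\theta_t\,dt + \eta\theta_t^2\,dt$ makes the $P_t^\theta\theta_t\,dt$ contributions cancel exactly, leaving the integrand of \eqref{eq: rewritte_PL}. The initial value reads $R_0^\theta = (V_0-X_0 S_0) + X_0 S_0 - \xi_0^{\emptyword} = V_0 - \xi_0^{\emptyword}$ since $\hat{\mathbb{P}}_0^\theta = (1,0,\ldots)$, and the quadratic-variation formula is read off directly from the Brownian coefficient of $dR_t^\theta$.

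The hard part is the integrability needed to pass from \eqref{eq: rewritte_PL} to the rewritten criterion \eqref{eq:control_problem_rewritten}: one must verify that the stochastic integral in \eqref{eq: rewritte_PL} is a true (mean-zero) martingale, and that both the quadratic variation and the drift $(\mu+\nu\theta_t)(X_t^\theta - \langle \xi_t\proj{2}, \hat{\mathbb{P}}_t^\theta\rangle)$ are integrable against $\P\otimes dt$. This is precisely what the admissibility class $\mathcal{A}$ is engineered for: when $\nu>0$, the process $\langle \xi_t\proj{2}, \hat{\mathbb{P}}_t^\theta\rangle$ is a polynomial of degree at most $\tilde M - 1$ in $P^\theta$, hence in $X^\theta$, and $X_t^\theta$ inherits the $L^p$-moments of $\int_0^T |\theta_s|\,ds$. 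The choice $p = 2\vee 2\tilde M \mathbb{1}_{\{\nu>0\}}$, combined with H\"older and Fubini, then makes the second- and first-moment bounds finite; in the frictionless case $\nu=0$ the polynomial collapses and $p=2$ suffices. Taking expectations of \eqref{eq: rewritte_PL} and of the quadratic-variation identity and combining them with the relative weight $-\lambda/2$ then yields \eqref{eq:control_problem_rewritten} by linearity.
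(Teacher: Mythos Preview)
Your proof is correct and follows essentially the same route as the paper: apply the signature It\^o formula of Theorem~\ref{thm:sig_ito} to $\langle \xi_t,\hat{\mathbb{P}}_t^\theta\rangle$, use the ODE \eqref{eq: pde_perfect_hedging} to kill the extra drift, and combine with the product-rule expansion of $X_t^\theta P_t^\theta - \int_0^t \tilde P_s^\theta \theta_s\,ds$. You go further than the paper in explicitly tracking the initial value and in discussing the integrability needed to drop the stochastic integral in expectation; one small imprecision is that $\langle \xi_t\proj{2},\hat{\mathbb{P}}_t^\theta\rangle$ is a path functional (an iterated-integral polynomial of order $\tilde M$ in the increments of $\hat P^\theta$), not a polynomial in the terminal value $P_t^\theta$ alone, but the moment bounds you invoke are unaffected.
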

\begin{proof}
    An application of It\^{o} formula in Theorem \ref{thm:sig_ito} to $\langle \xi_t, \hat{\mathbb{P}}_t^\theta\rangle$ and \eqref{eq: pde_perfect_hedging} yields
    \begin{align*}
    H_T^\theta =\left\langle \xi_T,\hat{\mathbb{P}}^\theta_T \right\rangle =  \xi_0^{\emptyword}+ \int_0^T (\mu+\nu \theta_t)\left\langle \xi_t\proj{2},\hat{\mathbb{P}}^\theta_t \right\rangle dt + \sigma \int_0^T \left\langle \xi_t\proj{2},\hat{\mathbb{P}}^\theta_t \right\rangle dW_t,
    \end{align*}
    with $\xi_t$ defined by \eqref{eq: def_xi_t} and thus, we can rewrite $R_T^\theta$ in the form 
    \begin{align*}
    R_T^\theta =&V_0+\int_0^T \left((\mu+\nu \theta_t) X_t^\theta-\eta \theta_t^2 \right) dt + \sigma \int_0^T X_t^\theta dW_t - \left\langle \xi_T,\hat{\mathbb{P}}^\theta_T \right\rangle\\
    =&\left(V_0 -\xi_0^{\emptyword} \right) + \int_0^T (\mu+ \nu\theta_t)\left(X_t^\theta-\left\langle \xi_t\proj{2},\hat{\mathbb{P}}^\theta_t \right\rangle\right)dt-\int_0^T \eta \theta_t^2 dt+\sigma\int_0^T \left( X_t^\theta - \left\langle \xi_t\proj{2},\hat{\mathbb{P}}^\theta_t \right\rangle\right) dW_t. 
    \end{align*}
    The quadratic-variation of $R^\theta$ follows from this representation. 
\end{proof}

\begin{sqremark}
    From the reformulation of the stochastic control problem \eqref{eq:control_problem_rewritten}, we see that the problem is not Linear-Quadratic, except for the case of $\xi_t\proj{22} = \Gamma \emptyword$, $\Gamma\in \mathbb{R}$, which corresponds to the case of payoffs with constant Gamma as studied by \cite{almgren2016option}.
\end{sqremark}

Before tackling the control problem \eqref{eq:optimal_control_1}, 
it is worth first asking at what price the trader should sell the option in this market with frictions. Indeed, there is no reason for the trader to sell the option at the frictionless Bachelier price. We follow \cite{GueantPu2017} and consider an indifference pricing approach. More precisely, we assume that at time $t=0^-$ (just before the deal), the trader only holds a cash position $C\in \mathbb{R}$ and has no incentive to take a position on the risky asset. Then, the deal between the client and the trader is done as follows at $t=0$:
\begin{itemize}
    \item the trader writes the option with payoff $\langle\xi,\hat{\mathbb{P}}_T^\theta \rangle$ and the client pays a price $\pi$;
    \item the client gives $X_0$ shares to the trader and receives $X_0 S_0$ in cash from the trader.
\end{itemize}

As the trader considers a mean-quadratic variation criteria, the value she gives is the solution to the control problem \eqref{eq:optimal_control_1} with $V_0=C+\pi$. Moreover, as at inception (before the deal), the trader only holds cash $C$, her utility is simply $C$. As a consequence, the trader is indifferent to make the deal if the price $\pi$ is such that 
\begin{equation*}
    \sup_{\theta\in \mathcal{A}} \E\left(R_T^\theta-\frac{\lambda}{2}[R^\theta,R^\theta]_T\right) = C,
\end{equation*}
where $V_0 = C+\pi$. Finally, using \eqref{eq:control_problem_rewritten}, we deduce that the indifference price $\pi$ is given by
\begin{equation}\label{eq: indifference_price_sig_payoff_market_impact}
\begin{aligned}
    \pi &= \xi_0^{\emptyword} - \sup_{\theta\in\mathcal{A}} \bigg[\E \left( \int_0^T (\mu+ \nu\theta_t)\left(X_t^\theta -\left\langle \xi_t\proj{2},\hat{\mathbb{P}}^\theta_t \right\rangle\right)dt-\int_0^T \eta \theta_t^2 dt\right)\\
    &~~~~~~~~~~~- \frac{\lambda}{2}\sigma^2 \E \bigg(\int_0^T \left( X_t^\theta - \left\langle \xi_t\proj{2},\hat{\mathbb{P}}^\theta_t \right\rangle\right)^2 dt\bigg) \bigg].
\end{aligned}
\end{equation}
The indifference price is composed of two terms: the frictionless Bachelier price $\xi_0^{\emptyword}$, and a spread that depends on market impact parameters, so that the indifference price $\pi$ includes the additional risk incurred by the trader due to market frictions. In particular, when $\mu=0$ and $\nu=0$, as in \cite{GueantPu2017}, it follows from \eqref{eq: indifference_price_sig_payoff_market_impact} that $\pi\geq \xi_0^{\emptyword}$ for all $\eta\geq 0$, with $\pi=\xi_0^{\emptyword}$ for  $\eta=0$. However, when $\mu>0$ or $\nu>0$, there is no reason why this inequality should hold in all cases, as the trader could exploit a signal or an arbitrage to maximize her risk-adjusted P\&L, and therefore require an indifference price that could be lower than $\xi_0^{\emptyword}$.

\subsection{Existence and uniqueness result}\label{sec: existence_uniqueness}
In this section, we study existence and uniqueness result of the stochastic control problem \eqref{eq:control_problem_rewritten}. We consider the functional $J:\mathcal{A}\to \mathbb{R}$ by 
\begin{equation}
    J(\theta):= \E \left( \int_0^T \left[(\mu+ \nu\theta_t)\left(X_t^\theta -\left\langle \xi_t\proj{2},\hat{\mathbb{P}}^\theta_t \right\rangle\right)- \eta \theta_t^2 -\frac{\lambda}{2}\sigma^2 \left( X_t^\theta - \left\langle \xi_t\proj{2},\hat{\mathbb{P}}^\theta_t \right\rangle\right)^2 \right]dt\right), \quad \theta \in \mathcal{A}.
\end{equation}
Following \eqref{eq:control_problem_rewritten}, the stochastic control problem can be rewritten as
\begin{equation} \label{eq:rewritte_control_problem}
    \sup_{\theta\in\mathcal{A}} \E\left(R_T^\theta-\frac{\lambda}{2}[R^\theta,R^\theta]_T\right)=\sup_{\theta\in \mathcal{A}} J(\theta)+\left(V_0 -\xi_{0}^{\emptyword}  \right).
\end{equation}

We are interested in studying the concavity of this functional $J$ to deduce an existence and uniqueness result for the control problem. To this end, we define the inner product $\langle.,.\rangle_{L_2}$ by
\[\langle f,g\rangle_{L_2}:=\E\left(\int_0^T f_t g_t dt\right),\quad f,g\in\mathcal{A}. \]

We first prove that $J$ is G\^ateaux differentiable. Then, we deduce an existence and uniqueness theorem under a monotonicity condition on the G\^ateaux derivative of the functional $J$. 
\begin{lem}\label{lem: gamma_der} 
    For any $\theta,\phi,\upsilon\in \mathcal{A}$ and $\alpha_t\in T^{p}(\mathbb{R}^2)$ with $p=2\vee 2\tilde{M}\mathbb{1}_{\{\nu>0\}}$, we have that 
    \begin{align}\label{eq: gateaux_diff_P}
        \lim_{\varepsilon \to 0} \frac{1}{\varepsilon} \E\left(\upsilon_t\langle \alpha_t,\hat{\mathbb{P}}_t^{\theta+\varepsilon \phi}-\hat{\mathbb{P}}_t^{\theta}  \rangle \right)= \nu\int_0^t \E\left( \phi_s \upsilon_t\left\langle \alpha_t, \hat{\mathbb{P}}^\theta_s \otimes \word{2} \otimes \hat{\mathbb{P}}_{s,t}^\theta \right\rangle \right)ds,\quad t\leq T.
    \end{align}
\end{lem}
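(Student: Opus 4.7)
The plan is to reduce to a coordinate-wise pathwise expansion and then invoke dominated convergence. Since $\alpha_t\in T^p(\mathbb{R}^2)$ has finitely many non-zero coordinates, by linearity it is enough to fix a word $\word{w}=\word{i_1\cdots i_n}$ of length $n\leq p$ and establish
$$\lim_{\varepsilon\to 0}\frac{1}{\varepsilon}\E\bigl[\upsilon_t\bigl((\hat{\mathbb{P}}_t^{\theta+\varepsilon\phi})^{\word{w}}-(\hat{\mathbb{P}}_t^{\theta})^{\word{w}}\bigr)\bigr]=\nu\int_0^t\E\Bigl[\phi_s\,\upsilon_t\sum_{\word{w}=\word{u2v}}(\hat{\mathbb{P}}_s^\theta)^{\word{u}}(\hat{\mathbb{P}}_{s,t}^\theta)^{\word{v}}\Bigr]\,ds,$$
from which \eqref{eq: gateaux_diff_P} follows by summing over words weighted by $\alpha_t^{\word{w}}$.

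From \eqref{eq: public_traded_price} and \eqref{eq: hedging_strat_trading_speed}, $P^{\theta+\varepsilon\phi}=P^\theta+\nu\varepsilon\int_0^\cdot\phi_u\,du$, while the time component of $\hat{\mathbb{P}}$ is unperturbed. Hence in the iterated Stratonovich integral defining $(\hat{\mathbb{P}}_t^{\theta+\varepsilon\phi})^{\word{w}}$, every factor associated with the letter $\word{2}$ increases by $\nu\varepsilon\phi_u\,du$ while every factor associated with $\word{1}$ is unchanged. Writing $Y^\theta:=\hat{\mathbb{P}}^\theta$ and expanding the product $\prod_{k=1}^n\bigl(\circ dY^{\theta,i_k}_{u_k}+\mathbb{1}_{\{i_k=\word{2}\}}\nu\varepsilon\phi_{u_k}du_k\bigr)$ by powers of $\varepsilon$ yields (i) the zeroth-order term $(\hat{\mathbb{P}}_t^\theta)^{\word{w}}$, (ii) a first-order term equal to $\varepsilon\,\nu\sum_{k:\,i_k=\word{2}}\int_{0<u_1<\cdots<u_n<t}\phi_{u_k}du_k\prod_{j\neq k}\circ dY^{\theta,i_j}_{u_j}$, and (iii) a remainder $R_\varepsilon$ that is a finite sum of iterated integrals each multiplied by $\varepsilon^m$ with $m\geq 2$. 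Applying Fubini with $s=u_k$ and invoking Chen's multiplicativity $\hat{\mathbb{P}}^\theta_{0,t}=\hat{\mathbb{P}}^\theta_{0,s}\otimes\hat{\mathbb{P}}^\theta_{s,t}$ rewrites each first-order summand as $\nu\int_0^t\phi_s(\hat{\mathbb{P}}^\theta_s)^{\word{i_1\cdots i_{k-1}}}(\hat{\mathbb{P}}^\theta_{s,t})^{\word{i_{k+1}\cdots i_n}}ds$; summing over $k$ with $i_k=\word{2}$ reproduces exactly the integrand on the right-hand side above.

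The main obstacle is then to interchange the limit and the expectation, which I would do by dominated convergence term by term, noting that only finitely many words contribute. Each iterated Stratonovich integral appearing in the first-order term or in $\varepsilon^{-1}R_\varepsilon$ is a finite linear combination of integrals of products of $\phi$, $\theta$ and Stratonovich increments $\circ dP^\theta=\mu\,du+\sigma\circ dW+\nu\theta_u\,du$. Repeated Cauchy--Schwarz and Burkholder--Davis--Gundy estimates, combined with the moment bounds $\E\int_0^T\theta_u^p du<\infty$ and $\E\int_0^T\phi_u^p du<\infty$ built into $\mathcal{A}$ and the finite second moment of $\upsilon_t$, yield uniform $L^1$ bounds on $\upsilon_t$ times each contributing term for small $\varepsilon$. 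The precise tailoring of the exponent $p=2\vee 2\tilde{M}\mathbb{1}_{\{\nu>0\}}$ is what guarantees enough integrability: in the first-order contribution of the ultimate application (namely the Gateaux derivative of $\langle \xi_t\proj{2},\hat{\mathbb{P}}^\theta_t\rangle$), words of length up to $\tilde{M}$ in $\theta$ appear. In particular, $\varepsilon^{-1}\upsilon_t R_\varepsilon\to 0$ in $L^1$, and passing to the limit concludes the proof.
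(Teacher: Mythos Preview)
Your coordinate-wise expansion is correct and reaches the same conclusion, but the paper proceeds differently. Rather than expanding in powers of $\varepsilon$, the paper works at the level of the full tensor algebra: writing $\mathbb{Y}_t:=\varepsilon^{-1}(\hat{\mathbb{P}}^{\theta+\varepsilon\phi}_t-\hat{\mathbb{P}}^{\theta}_t)$, one observes that $d\mathbb{Y}_t=\mathbb{Y}_t\otimes d\hat{P}^\theta_t+\nu\phi_t\,\hat{\mathbb{P}}^{\theta+\varepsilon\phi}_t\otimes\word{2}\,dt$ and solves this linear equation by variation of constants (multiplying on the right by $(\hat{\mathbb{P}}^\theta_t)^{-1}$) to obtain the \emph{exact} identity
\[
\frac{\hat{\mathbb{P}}^{\theta+\varepsilon\phi}_t-\hat{\mathbb{P}}^{\theta}_t}{\varepsilon}=\nu\int_0^t\phi_s\,\hat{\mathbb{P}}^{\theta+\varepsilon\phi}_s\otimes\word{2}\otimes\hat{\mathbb{P}}^\theta_{s,t}\,ds,
\]
valid for every $\varepsilon$. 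Passing to the limit then only requires $\hat{\mathbb{P}}^{\theta+\varepsilon\phi}_s\to\hat{\mathbb{P}}^{\theta}_s$ inside the expectation, handled via an $L^{\tilde p}$-boundedness (uniform integrability) argument.

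The trade-off is this: the paper's Duhamel-type formula avoids any remainder term and packages Chen's relation implicitly through the group-like inverse, at the cost of invoking the multiplicative structure of the signature. Your direct expansion is more elementary and makes the combinatorics (the sum over decompositions $\word{w}=\word{u2v}$) explicit, but you must separately bound the higher-order remainder $R_\varepsilon$. Both routes ultimately rest on the same moment assumptions encoded in $\mathcal{A}$; your sketch of the integrability step (Cauchy--Schwarz/BDG against the $L^p$ bounds on $\theta,\phi,\upsilon$) is at the same level of detail as the paper's.
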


\begin{proof}
    The proof is given in Section \ref{sec: proofs}. 
\end{proof}

\begin{lem}\label{lem: gateau_diff} 
    For any $\theta\in \mathcal{A}$, the functional $J$ is G\^ateaux differentiable. For any $h\in \mathcal{A}$,
    \begin{align*}
        \lim_{\varepsilon\to 0} \frac{J(\theta+\varepsilon h)-J(\theta)}{\varepsilon}=\langle\nabla J(\theta), h\rangle_{L^2}, 
    \end{align*}with the G\^ateaux differential given by 
    \begin{equation*}
        \nabla J(\theta)=-2\eta \theta+\nu \mathbf{A}\theta-\lambda \sigma^2 \mathbf{B}\theta + \mu \mathbf{C}\theta, 
    \end{equation*}
    where the operators $\mathbf{A}:\mathcal{A}\rightarrow \mathcal{A}$, $\mathbf{B}:\mathcal{A}\rightarrow \mathcal{A}$ and $\mathbf{C}:\mathcal{A}\rightarrow \mathcal{A}$ are defined, for $\theta\in \mathcal{A}$, by
    \begin{align}
        &(\mathbf{A}\theta)_t =X_t^\theta- \left\langle \xi_{t} \proj{2},\hat{\mathbb{P}}^\theta_t \right\rangle+\int_t^T \E\left(\theta_s \left(1-\nu \Gamma_{t,s}^\theta \right)|\mathcal{F}_t\right)ds,\quad t\leq T, \label{eq: op_A}\\
        &(\mathbf{B}\theta)_t = \int_t^T \E\left(X_s^\theta \left(1-\nu \Gamma_{t,s}^\theta \right)|\mathcal{F}_t\right)ds+\int_t^T \E\left( \frac{1}{2} \nu \tilde{\Gamma}_{t,s}^\theta-\langle \xi_s\proj{2}, \hat{\mathbb{P}}_s^\theta\rangle|\mathcal{F}_t\right) ds,\quad t\leq T,  \label{eq: op_B}\\
        &(\mathbf{C}\theta)_t = (T-t)-\int_t^T\E(\nu  \Gamma_{t,s}^\theta|\mathcal{F}_t) ds, \quad t\leq T, \label{eq: op_C}
    \end{align}
with 
\begin{align*}
    \Gamma_{t,s}^\theta&:=\left\langle\xi_s\proj{2}, \hat{\mathbb{P}}^\theta_t \otimes \word{2} \otimes \hat{\mathbb{P}}_{t,s}^\theta \right\rangle, \quad \tilde{\Gamma}_{t,s}^\theta:= \left\langle\left(\xi_s\proj{2}\right)\shupow{2}, \hat{\mathbb{P}}^\theta_t \otimes \word{2} \otimes \hat{\mathbb{P}}_{t,s}^\theta \right\rangle,\quad t<s\leq T.
\end{align*}
\end{lem}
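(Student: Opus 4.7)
My approach is to compute $J(\theta + \varepsilon h) - J(\theta)$ term by term, divide by $\varepsilon$, pass to the limit, and rewrite the result in the form $\E\!\int_0^T h_s N_s\, ds$; reading off $N$ then yields the Gâteaux derivative $\nabla J(\theta)$. Writing $Y_t^\theta := X_t^\theta - \langle \xi_t\proj{2}, \hat{\mathbb{P}}_t^\theta\rangle$, the integrand of $J(\theta)$ splits naturally into four pieces to be handled separately: the drift term $\mu Y_t^\theta$, the temporary-impact term $-\eta \theta_t^2$, the bilinear permanent-impact term $\nu \theta_t Y_t^\theta$, and the quadratic tracking term $-\tfrac{\lambda}{2}\sigma^2 (Y_t^\theta)^2$.

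Two ingredients drive the computation. First, $X_t^{\theta + \varepsilon h} - X_t^\theta = \varepsilon \int_0^t h_s\, ds$ is immediate from \eqref{eq: hedging_strat_trading_speed}. Second, the signature perturbation $\E[\upsilon_t\langle \alpha_t, \hat{\mathbb{P}}_t^{\theta+\varepsilon h} - \hat{\mathbb{P}}_t^\theta\rangle]$ for admissible $\alpha_t,\upsilon_t$ is precisely what Lemma \ref{lem: gamma_der} handles. For $-\eta \theta_t^2$ elementary differentiation yields the $-2\eta \theta$ part of the gradient, and for the drift and bilinear $\nu$-terms, a direct application of Lemma \ref{lem: gamma_der} produces the $(T-t)$ and $\Gamma^\theta$-contributions of \eqref{eq: op_C} and \eqref{eq: op_A} respectively. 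The quadratic tracking term is the subtler one: to apply Lemma \ref{lem: gamma_der} to $\langle \xi_t\proj{2}, \hat{\mathbb{P}}_t^\theta\rangle^2$, I will first linearize this square using the shuffle property (Proposition \ref{prop:shufflepropertyextended}), namely $\langle \xi_t\proj{2}, \hat{\mathbb{P}}_t^\theta\rangle^2 = \langle (\xi_t\proj{2})\shupow{2}, \hat{\mathbb{P}}_t^\theta\rangle$, so that the lemma applies with $\alpha_t = (\xi_t\proj{2})\shupow{2}$. This is exactly what will produce the $\tfrac{1}{2}\nu \tilde{\Gamma}_{t,s}^\theta$ contribution inside \eqref{eq: op_B}.

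Once the four derivatives have been assembled, I will apply Fubini to swap $\int_0^T dt \int_0^t ds$ into $\int_0^T ds \int_s^T dt$, which is precisely the step that casts every contribution in the form $\E\!\int_0^T h_s (\cdot)_s\, ds$. The progressive measurability of $h$ then lets me insert a conditional expectation with respect to $\mathcal{F}_s$ inside the inner $\int_s^T dt$ by the tower property, and matching coefficients recovers the operators $\mathbf{A},\mathbf{B},\mathbf{C}$ as in \eqref{eq: op_A}--\eqref{eq: op_C}. The main technical obstacle will be the justification of the exchange of $\lim_{\varepsilon \to 0}$ with expectation and time integration, which requires uniform-in-$\varepsilon$ domination. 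Components of $\hat{\mathbb{P}}^{\theta + \varepsilon h}$ that appear are polynomials of degree at most $\tilde{M}$ in $\int_0^{\cdot}(\theta + \varepsilon h)_r\, dr$, and the quadratic tracking term boosts this to total degree $2\tilde{M}$ when $\nu > 0$. This is precisely the reason for the integrability exponent $p = 2 \vee 2\tilde{M}\mathbb{1}_{\{\nu > 0\}}$ built into $\mathcal{A}$: combined with Hölder's inequality, it supplies the uniform $L^1$-dominating function required by the dominated convergence theorem for $\varepsilon$ in a bounded neighbourhood of $0$.
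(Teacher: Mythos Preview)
Your proposal is correct and follows essentially the same route as the paper's proof: decompose $J$ into additive pieces, differentiate each via $X^{\theta+\varepsilon h}_t - X^\theta_t = \varepsilon\int_0^t h_s\,ds$ and Lemma~\ref{lem: gamma_der} (using the shuffle property to linearize the square for the $\tilde\Gamma$ term), then swap $\int_0^T dt\int_0^t ds$ by Fubini and insert $\E[\,\cdot\mid\mathcal F_s]$ by the tower property. The paper expands $(Y^\theta_t)^2$ and $\theta_t Y^\theta_t$ first, obtaining seven terms $J_1,\dots,J_7$ in place of your four, but this is purely cosmetic; the ingredients and their assembly are identical.
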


\begin{proof}
    The proof is given in Section \ref{sec: proofs}. 
\end{proof}

We are now ready to prove the existence and uniqueness result. To this end, we use general results of convex optimization in infinite dimensional spaces and we show that under suitable monotonicity conditions, the functional $J$ is strongly concave. We recall that, given a positive constant $\beta>0$, a map $\mathcal{T}: \mathcal{A}\to \mathbb{R}$ is $\beta-$strongly concave if, for every $\theta,\phi\in \mathcal{A}$, the following inequality holds
\begin{align} \label{eq:strongly_concavity}
    \mathcal{T}\left(\alpha \theta + (1-\alpha) \phi\right) \geq \alpha \mathcal{T}(\theta) + (1-\alpha) \mathcal{T}(\phi) +  \frac{\beta \alpha (1-\alpha)}{2} ||\theta-\phi||_{\mathcal{A}}^2,~\alpha\in[0,1],
\end{align}
with $||\theta||_{\mathcal{A}}:=\E\left(\int_0^T \theta_s^{p} ds\right)^{1/p}$, $p:=2\vee 2\tilde{M}\mathbb{1}_{\{\nu>0\}}$.  
\begin{thm}\label{thm: mono_cond_existence_unicity} For a given $\varepsilon>0$, assume that the operators $\mathbf{A},$ $\mathbf{B}$ and $\mathbf{C}$, defined by \eqref{eq: op_A}, \eqref{eq: op_B} and \eqref{eq: op_C}, satisfy the following monotonicity condition, for $\theta,\phi \in \mathcal{A},$ 
\begin{equation}\label{eq:monotonicity_cond}
\begin{aligned} 
&\bigg{ \langle} -2\eta(\theta-\phi)+\varepsilon \left(||\theta||_{\mathcal{A}}^{2-p} \theta^{p-1}-||\phi||_{\mathcal{A}}^{2-p} \phi^{p-1}\right)+\nu \left(\mathbf{A}(\theta)-\mathbf{A}(\phi)\right)-\lambda \sigma^2 \left(\mathbf{B}(\theta)-\mathbf{B}(\phi) \right)\\
&+ \mu \left(\mathbf{C}(\theta)-\mathbf{C}(\phi) \right), \theta-\phi \bigg\rangle_{L_2}\leq 0. 
\end{aligned}
\end{equation}
Then, $J$ is $\varepsilon$-strongly concave in the sense of \eqref{eq:strongly_concavity}, and there exists a unique admissible optimal control $\theta^*\in  \mathcal{A}$ satisfying  \eqref{eq:rewritte_control_problem}, which is also the unique solution of \begin{align} \label{eq: FOC}
    \nabla J(\theta^*)=0. 
\end{align}

\end{thm}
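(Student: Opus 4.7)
The plan is to deduce strong concavity of $J$ from the monotonicity assumption \eqref{eq:monotonicity_cond}, and then conclude existence, uniqueness, and the first-order characterization via standard convex-analysis arguments. For brevity, denote
$$D(\theta,\phi) := \left\langle \|\theta\|_{\mathcal{A}}^{2-p}\theta^{p-1} - \|\phi\|_{\mathcal{A}}^{2-p}\phi^{p-1},\, \theta - \phi\right\rangle_{L_2},$$
which is the $L_2$ pairing of increments of the subgradient of the convex functional $\tfrac{1}{2}\|\cdot\|_{\mathcal{A}}^2$. With this notation, \eqref{eq:monotonicity_cond} rearranges to $\langle \nabla J(\theta) - \nabla J(\phi), \theta-\phi\rangle_{L_2} \leq -\varepsilon D(\theta,\phi)$ for all $\theta,\phi \in \mathcal{A}$.

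The first step is to establish the strong concavity inequality \eqref{eq:strongly_concavity}. By the Gâteaux differentiability of $J$ from Lemma~\ref{lem: gateau_diff}, for any $\theta,\phi \in \mathcal{A}$ and $\psi_s := \phi + s(\theta-\phi)$, the fundamental theorem of calculus together with a splitting of the integrand yields
\begin{align*}
J(\theta) - J(\phi)
&= \int_0^1 \langle \nabla J(\psi_s), \theta - \phi \rangle_{L_2}\, ds \\
&= \langle \nabla J(\phi), \theta-\phi\rangle_{L_2} + \int_0^1 \frac{1}{s}\, \langle \nabla J(\psi_s) - \nabla J(\phi),\, \psi_s - \phi\rangle_{L_2}\, ds.
\end{align*}
Applying the rearranged monotonicity to the second bracket and using the quantitative convexity estimate $D(\psi_s,\phi) \geq s^2 \|\theta-\phi\|_{\mathcal{A}}^2$, integration in $s \in [0,1]$ gives the gradient inequality
$$J(\theta) \leq J(\phi) + \langle \nabla J(\phi), \theta-\phi\rangle_{L_2} - \frac{\varepsilon}{2}\|\theta-\phi\|_{\mathcal{A}}^2.$$
Writing this inequality with the pairs $(\theta, \alpha\theta+(1-\alpha)\phi)$ and $(\phi, \alpha\theta+(1-\alpha)\phi)$ and taking the convex combination with weights $\alpha$ and $1-\alpha$ produces exactly \eqref{eq:strongly_concavity}.

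Once strong concavity is established, existence follows from the direct method: coercivity implies that any maximizing sequence is bounded in $\mathcal{A}$, which embeds in a reflexive $L^p$-type space, so we extract a weakly convergent subsequence, and weak upper semicontinuity of $J$ (a consequence of concavity together with the continuity of $\theta \mapsto (X^\theta, \widehat{\mathbb{P}}^\theta)$ established in Lemma~\ref{lem: gamma_der}) delivers a maximizer $\theta^\ast$. Uniqueness is immediate from strict concavity. Finally, since $\mathcal{A}$ is linear and $J$ is Gâteaux differentiable, $\theta^\ast$ maximizes $J$ if and only if $\nabla J(\theta^\ast) = 0$: necessity follows from the vanishing of the directional derivative along every $h \in \mathcal{A}$, and sufficiency from the concavity inequality $J(\theta) \leq J(\theta^\ast) + \langle \nabla J(\theta^\ast), \theta-\theta^\ast\rangle_{L_2}$ combined with $\nabla J(\theta^\ast)=0$.

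The main technical obstacle is the quantitative lower bound $D(\psi_s,\phi) \geq s^2 \|\theta-\phi\|_{\mathcal{A}}^2$ when $p>2$, since $\|\cdot\|_{\mathcal{A}}$ is then not a Hilbert norm; the correct proof invokes uniform convexity of $L^p$ via a Clarkson-type inequality, exploiting that the map $\theta \mapsto \|\theta\|_{\mathcal{A}}^{2-p}\theta^{p-1}$ is precisely the subgradient of $\tfrac{1}{2}\|\cdot\|_{\mathcal{A}}^2$ in this non-Hilbertian setting. A secondary subtlety lies in verifying weak upper semicontinuity of $J$, where the nonlinear dependence of $\widehat{\mathbb{P}}^\theta$ on $\theta$ (through permanent impact $\nu$) must be handled carefully; typically this is done by extracting a further subsequence with almost sure convergence of the relevant path integrals and invoking dominated convergence, using the uniform $L^p$-bound obtained from coercivity. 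In the degenerate case $p=2$ (e.g.\ when $\nu = 0$), both difficulties disappear: $D(\theta,\phi) = \|\theta-\phi\|_{L_2}^2$ and the argument reduces to textbook Hilbert-space strong concavity.
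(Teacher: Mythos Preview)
Your overall architecture matches the paper's---monotonicity condition $\Rightarrow$ strong concavity $\Rightarrow$ existence/uniqueness/FOC---but the execution diverges at the key step and introduces a genuine error. The paper does not try to prove the gradient inequality directly; instead it sets $\tilde J := -J - \tfrac{\varepsilon}{2}\|\cdot\|_{\mathcal A}^2$, observes that \eqref{eq:monotonicity_cond} is \emph{precisely} the monotonicity of $\nabla\tilde J$, and invokes the standard equivalence between a monotone G\^ateaux derivative and convexity (Ekeland--Temam, Chapter~I, Proposition~5.5) to conclude that $\tilde J$ is convex. Strict concavity and coercivity of $J$ follow, and existence, uniqueness and the first-order condition are then read off from further results in Ekeland--Temam (Chapter~II, Propositions~1.2 and~2.1). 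No quantitative lower bound on $D(\cdot,\cdot)$ is ever used.

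Your route, by contrast, hinges on the bound $D(\psi_s,\phi) \geq s^2\|\theta-\phi\|_{\mathcal A}^2$, which is the assertion that the duality map of $L^p$ is $1$-strongly monotone, equivalently that $\tfrac12\|\cdot\|_{\mathcal A}^2$ is $2$-uniformly convex. For $p>2$ this is \emph{false}: in $\ell^p$ take $\phi = e_1$ and $\theta-\phi = t\,e_2$; a direct computation gives $D(\psi_s,\phi) \sim (st)^p$ while $s^2\|\theta-\phi\|^2 = s^2t^2$, so the ratio vanishes as $t\to 0$. Clarkson's inequalities do not help---for $p \geq 2$ they yield a modulus of convexity of power type $p$, not type $2$---so your proposed fix does not work. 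The integration argument can be repaired only by abandoning the pointwise bound and evaluating $\int_0^1 s^{-1}D(\psi_s,\phi)\,ds$ exactly, which gives concavity of $J + \tfrac{\varepsilon}{2}\|\cdot\|_{\mathcal A}^2$ and brings you back to the paper's approach. Separately, your direct-method existence argument leans on weak upper semicontinuity of $J$, which you flag but do not prove; given the nonlinear dependence of $\hat{\mathbb P}^\theta$ on $\theta$ when $\nu>0$, this is a nontrivial step that the paper avoids by citing an abstract result for coercive strictly concave functionals.
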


\begin{proof} 
 
    The proof of the theorem statements are consequences of well-known results of convex analysis in Banach spaces that can be found in \cite{ekeland1999convex}. First, let us prove that $J$ is $\varepsilon-$strongly concave. To this end, we define the functional $\tilde{J}$ by 
    \begin{align*}
        \tilde{J}(\theta):= -J(\theta)-\frac{\varepsilon }{2}||\theta||_{\mathcal{A}}^2. 
    \end{align*}
    If we now prove that $\tilde{J}$ is convex then, we deduce that $J$ is $\varepsilon-$strongly concave. By Lemma \ref{lem: gateau_diff}, we know that $J$ is G\^ateaux differentiable, therefore $\tilde{J}$ is also G\^ateaux differentiable. In this case,  \cite[Proposition 5.5. of Chapter I]{ekeland1999convex} gives an equivalence between the convexity of $\tilde{J}$ and the monotonicity of the G\^ateaux differential $\nabla \tilde{J}(\theta)$ defined as 
    \begin{equation*}
        \left\langle\nabla \tilde{J}(\theta),h \right \rangle_{L_2} = \lim_{\varepsilon\to 0} \frac{\tilde{J}(\theta+\varepsilon h)-\tilde{J}(\theta)} {\varepsilon},~h\in \mathcal{A}. 
    \end{equation*}
    In fact, $\tilde{J}$ is convex if, for $\theta,\phi\in \mathcal{A}$, 
    \begin{align}\label{eq: equivalence_convexity_mono}
        \left\langle \nabla\tilde{J}(\theta)-\nabla \tilde{J}(\phi),\theta-\phi \right\rangle_{L_2} \geq 0. 
    \end{align}
    Using Lemma \ref{lem: gateau_diff}, under our monotonicity assumption given by \eqref{eq:monotonicity_cond}, we deduce that \eqref{eq: equivalence_convexity_mono} is satisfied, and therefore, we obtain that $\tilde{J}$ is convex as well as $J$ is $\varepsilon-$strongly concave. As we prove that $J$ is $\varepsilon-$strongly concave, we have that $J$ is strictly concave and coercive, thus by relying on \cite[Proposition 1.2. of Chapter II]{ekeland1999convex}, we immediately have that there exists a unique admissible optimal control $\theta^*\in  \mathcal{A}$ satisfying  \eqref{eq:rewritte_control_problem}. Finally, since $J$ is  G\^ateaux differentiable, from \cite[Proposition 2.1. of Chapter II]{ekeland1999convex}, we know that the set of optimal strategies satisfying \eqref{eq:rewritte_control_problem} coincides with the set of solutions to \eqref{eq: FOC} and therefore, we deduce that the unique optimal control $\theta^* \in \mathcal{A}$, is also the unique solution of \eqref{eq: FOC}. 
\end{proof}
\begin{prop}\label{prop: example_monotonicity_ok_without_permanent}
    If $\nu=0$, then the monotonicity condition \eqref{eq:monotonicity_cond} is satisfied. 
\end{prop}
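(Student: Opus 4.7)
The plan is to exploit the substantial simplifications that occur when $\nu=0$ and to reduce the monotonicity inequality \eqref{eq:monotonicity_cond} to two manifestly non-positive terms. First, I would note that when $\nu=0$ we have $P^\theta = S$, so $\hat{\mathbb P}^\theta = \hat{\mathbb S}$ for every admissible $\theta$. Consequently:
\begin{itemize}
\item The term $\nu(\mathbf{A}\theta-\mathbf{A}\phi)$ disappears.
\item $(\mathbf{C}\theta)_t=(T-t)$ is independent of $\theta$, hence $\mathbf{C}\theta-\mathbf{C}\phi=0$.
\item $(\mathbf{B}\theta-\mathbf{B}\phi)_t = \int_t^T \E\!\left(X_s^\theta-X_s^\phi\,\big|\,\mathcal F_t\right)ds$, since the bracket term $\langle\xi_s\proj{2},\hat{\mathbb S}_s\rangle$ no longer depends on the control.
\item Since $\nu=0$ we have $p=2$, so the $\varepsilon$-term reduces to $\varepsilon(\theta-\phi)$.
\end{itemize}

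The second step is to compute the pairing $\langle\mathbf{B}\theta-\mathbf{B}\phi,\theta-\phi\rangle_{L_2}$. Writing $u:=\theta-\phi$ and $Y_s:=X_s^\theta-X_s^\phi=\int_0^s u_r\,dr$, I would apply Fubini and the tower property:
\begin{align*}
\langle \mathbf{B}\theta-\mathbf{B}\phi,u\rangle_{L_2}
&= \E\int_0^T u_t \int_t^T \E(Y_s\mid\mathcal F_t)\,ds\,dt
 = \int_0^T\!\!\int_0^s \E\!\left(u_t\,\E(Y_s\mid\mathcal F_t)\right)dt\,ds\\
&= \int_0^T\!\!\int_0^s \E(u_t Y_s)\,dt\,ds
 = \int_0^T \E\!\left(Y_s\int_0^s u_t\,dt\right)ds
 = \int_0^T \E(Y_s^2)\,ds \,\geq\, 0,
\end{align*}
using that $u_t$ is $\mathcal F_t$-measurable and that $Y_s=\int_0^s u_t\,dt$.

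The third step is to assemble the pieces. The left-hand side of \eqref{eq:monotonicity_cond} becomes
\begin{equation*}
(-2\eta+\varepsilon)\,\|\theta-\phi\|_{L_2}^2 \;-\; \lambda\sigma^2 \int_0^T \E\!\left((X_s^\theta-X_s^\phi)^2\right)ds,
\end{equation*}
which is non-positive for any $\varepsilon\in(0,2\eta]$ since $\eta>0$, $\lambda>0$, $\sigma\ge 0$. This establishes \eqref{eq:monotonicity_cond} and hence the proposition.

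There is no real obstacle here: the only slightly delicate step is the Fubini/tower-property manipulation that turns the $\mathbf{B}$-pairing into $\int_0^T\E(Y_s^2)\,ds$, and even that is routine once one recognises that $Y_s$ is precisely the cumulative integral of $u$. The essential point of the proof is that in the absence of permanent impact, the only source of non-linearity in the control disappears, leaving a quadratic structure whose monotonicity is automatic.
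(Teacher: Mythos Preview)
Your proof is correct and follows essentially the same route as the paper: both reduce the monotonicity condition to $(-2\eta+\varepsilon)\|\theta-\phi\|_{L_2}^2 - \lambda\sigma^2\int_0^T\E[(X_s^\theta-X_s^\phi)^2]\,ds\le 0$ after observing that the $\nu\mathbf A$ and $\mathbf C$ contributions vanish. The only difference is that you spell out the Fubini/tower-property computation showing $\langle \mathbf B\theta-\mathbf B\phi,\theta-\phi\rangle_{L_2}=\int_0^T\E[(X_s^\theta-X_s^\phi)^2]\,ds$, which the paper simply states.
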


\begin{proof}
    The proof is given in Section \ref{sec: proofs}.
\end{proof}

\begin{prop}\label{prop: example_monotonicity_ok}
Let us consider that $\nu>0$ and for all $t\leq T$, $\xi_t\proj{2} \in T^{\leq 1}(\mathbb{R}^2)$. Moreover, let us assume that, for a given $0<\varepsilon<2\eta$,
\begin{equation}\label{eq:condition_time_dependent_gamma}
    \frac{1}{\nu} \left(1-\frac{2\eta-\varepsilon}{T\nu}\right)\leq  \xi_t^{\word{22}}\leq \frac{1}{\nu}, \quad t \leq T. 
\end{equation}
Then, the monotonicity condition \eqref{eq:monotonicity_cond} is satisfied.
\end{prop}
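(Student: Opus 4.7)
The plan is to exploit the structural assumption $\xi_t\proj{2} \in T^{\leq 1}(\mathbb{R}^2)$ to reduce the operators $\mathbf{A}$, $\mathbf{B}$, $\mathbf{C}$ to elementary expressions, and then verify the monotonicity inequality by a direct integration-by-parts argument that uses crucially the hypothesis \eqref{eq:condition_time_dependent_gamma} on $\xi_t^\word{22}$. Writing $\xi_s\proj{2} = \xi_s^\word{2}\emptyword + \xi_s^\word{12}\word{1} + \xi_s^\word{22}\word{2}$ and expanding the tensor $\hat{\mathbb{P}}^\theta_t \otimes \word{2} \otimes \hat{\mathbb{P}}^\theta_{t,s}$ coordinate-by-coordinate, a short computation shows that $\Gamma^\theta_{t,s} = \xi_s^\word{22}$ is deterministic; using Definition \ref{def:shuffleprod} to expand $(\xi_s\proj{2})\shupow{2}$ similarly yields $\tilde\Gamma^\theta_{t,s} = 2\xi_s^\word{22}\langle \xi_s\proj{2}, \hat{\mathbb{P}}^\theta_s\rangle$. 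Setting $\delta := \theta - \phi$, $D_t := X_t^\theta - X_t^\phi = \int_0^t \delta_u\,du$ and $k(s) := 1 - \nu\xi_s^\word{22}$, the hypothesis ensures $0 \leq k(s) \leq (2\eta-\varepsilon)/(T\nu)$; substitution into \eqref{eq: op_A}--\eqref{eq: op_C} then gives $(\mathbf{A}\theta - \mathbf{A}\phi)_t = k(t)D_t + \int_t^T k(s)\E[\delta_s\mid\mathcal{F}_t]\,ds$, $(\mathbf{B}\theta - \mathbf{B}\phi)_t = \int_t^T k(s)^2 \E[D_s\mid\mathcal{F}_t]\,ds$, and $(\mathbf{C}\theta - \mathbf{C}\phi)_t = 0$.

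Since the truncation order of $\xi_t\proj{2}$ is $\tilde{M}=1$, we have $p=2$, so the $\varepsilon$-term in \eqref{eq:monotonicity_cond} reduces to $\varepsilon\delta$. Using Fubini and the tower property, the two non-trivial inner products collapse to $\langle \mathbf{A}\theta - \mathbf{A}\phi, \delta\rangle_{L_2} = 2\int_0^T k(t)\E[D_t\delta_t]\,dt$ and $\langle \mathbf{B}\theta - \mathbf{B}\phi, \delta\rangle_{L_2} = \int_0^T k(t)^2 \E[D_t^2]\,dt$, so that the monotonicity inequality reduces to
\begin{equation*}
(\varepsilon - 2\eta)||\delta||_{L_2}^2 + 2\nu\int_0^T k(t)\,\E[D_t\delta_t]\,dt - \lambda\sigma^2\int_0^T k(t)^2 \E[D_t^2]\,dt \leq 0.
\end{equation*}

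The crucial step is integration by parts against $2 D_t\delta_t\,dt = d(D_t^2)$, which is licit since $t \mapsto \xi_t = \xi|_{\mathcal{E}_{T-t}}$ is smooth and hence so is $k$: this yields $2\int_0^T k(t)\E[D_t\delta_t]\,dt = k(T)\E[D_T^2] - \int_0^T k'(t)\E[D_t^2]\,dt$. The pathwise Cauchy-Schwarz bound $D_T^2 \leq T\int_0^T \delta_s^2\,ds$ combined with $k(T) \leq (2\eta-\varepsilon)/(T\nu)$ gives $\nu k(T)\E[D_T^2] \leq (2\eta-\varepsilon)||\delta||_{L_2}^2$, which is exactly the quantity needed to cancel the $(\varepsilon - 2\eta)||\delta||_{L_2}^2$ term. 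What remains to bound is $-\int_0^T (\nu k'(t) + \lambda\sigma^2 k(t)^2)\E[D_t^2]\,dt$, and the conclusion follows upon showing that the integrand is non-negative. This is the main obstacle of the proof: one exploits the PDE $\dot\xi_t = -\xi_t\proj{1} - \frac{\sigma^2}{2}\xi_t\proj{22}$ from Lemma \ref{lem: ode_xi_t}, which specialises here to $\dot\xi_t^\word{22} = -\xi_t^\word{221}$ (since $\xi_t\proj{22}$ is scalar under the assumption), so that $k'(t) = \nu\xi_t^\word{221}$; the structural implications of $\xi_t\proj{2} \in T^{\leq 1}(\mathbb{R}^2)$ combined with the pointwise bound on $\xi_t^\word{22}$ then allow the $\nu k'$ contribution to be absorbed by the dominating $\lambda\sigma^2 k^2$ term, closing the argument.
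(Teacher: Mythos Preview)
Your computations through the reduction to
\[
(\varepsilon - 2\eta)\|\delta\|_{L_2}^2 + 2\nu\int_0^T k(t)\,\E[D_t\delta_t]\,dt - \lambda\sigma^2\int_0^T k(t)^2 \E[D_t^2]\,dt \leq 0
\]
are correct and agree with the paper's strategy: the paper also reduces $\Gamma^\theta_{t,s}$ to $\xi_s^\word{22}$, obtains $\langle\mathbf{B}(\theta)-\mathbf{B}(\phi),\delta\rangle_{L_2} = \int_0^T k(t)^2\E[D_t^2]\,dt$ exactly as you do, and bounds the $\mathbf{A}$-term via Cauchy--Schwarz. Your integration by parts together with $\nu k(T)\E[D_T^2] \leq (2\eta-\varepsilon)\|\delta\|_{L_2}^2$ is simply a more explicit rendering of the paper's one-line Cauchy--Schwarz step.

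The genuine gap is your last sentence. You correctly derive $k'(t) = \nu\xi_t^\word{221}$ and then assert that ``the structural implications of $\xi_t\proj{2} \in T^{\leq 1}(\mathbb{R}^2)$ combined with the pointwise bound on $\xi_t^\word{22}$ allow the $\nu k'$ contribution to be absorbed by the dominating $\lambda\sigma^2 k^2$ term''. No such absorption is available: the hypothesis \eqref{eq:condition_time_dependent_gamma} controls $k$ but says nothing about $k'$, and since $\lambda>0$ is an arbitrary fixed parameter not appearing in that hypothesis, one cannot ensure $\nu k'(t)+\lambda\sigma^2 k(t)^2\geq 0$ from the stated bounds alone. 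What actually closes the argument is the observation you stopped just short of: since $\xi_t\proj{2}\in T^{\leq 1}(\mathbb{R}^2)$, the coefficient $\xi_t^\word{221}=(\xi_t\proj{2})^\word{21}$ vanishes (the word $\word{21}$ has length $2$), hence $k'(t)\equiv 0$. With $k$ constant, your boundary term $\nu k(T)\E[D_T^2]$ is the entire $\mathbf{A}$-contribution, the residual $-\lambda\sigma^2 k^2\int_0^T\E[D_t^2]\,dt$ is manifestly non-positive, and the proof is complete. Insert this one line and your argument is correct---indeed more detailed than the paper's own rather terse treatment of the $\mathbf{A}$-bound.
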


\begin{proof}
    The proof is given in Section \ref{sec: proofs}.
\end{proof}

\begin{sqremark}
The class of payoffs concerned by Proposition \ref{prop: example_monotonicity_ok} (i.e. $\xi_t\proj{2}\in T^{\leq 1}(\mathbb{R}^2)$) includes all signature payoffs with $\xi\in T^{\leq 2}(\mathbb{R}^2)$, but also some signature payoffs with $\xi\in T^{>2}(\mathbb{R}^2)$ such as, for example, Asian quadratic payoffs. 
\end{sqremark}

Theorem \ref{thm: mono_cond_existence_unicity} gives a general existence and uniqueness result, and characterizes the optimal control through the first order condition \eqref{eq: FOC}. However, this is intricate to deduce the optimal control using this condition. In the next section, we provide a more tractable verification result to derive the optimal control. 

\subsection{Verification result using infinite-dimensional Riccati equations}\label{sec: verification}
In this section, we consider a verification result for the stochastic control problem. We first give a general verification theorem such that, under given assumptions, including the existence of a solution to an infinite-dimensional system of Riccati equations, the optimal control associated with the mean-quadratic variation hedging problem can be written in a feedback form, as a linear combination of elements of the signature with time-dependent coefficients. Then, we provide two examples for which the assumptions of the verification theorem are satisfied and for which we have an explicit solution to the system of infinite-dimensional Riccati equations.  

\subsubsection{General verification theorem}
The following theorem provides a verification result, with a semi-explicit solution for the mean-quadratic variation hedging problem of signature payoffs when the trader incurs both temporary and permanent market impact, in terms of the infinite-dimensional system of Riccati equations on the extended tensor algebra space  $T((\mathbb{R}^3))$:
\begin{equation}\label{eq:infinite_dim_riccati_psi_t} 
     \begin{aligned}
        &\dot{\bpsi_t} = - \bpsi_t\proj{1} - \frac{1}{2} \sigma^2 \bpsi_t\proj{22}-\frac{\lambda}{2} \sigma^2 (\word{3}+\emptyword X_0-\tilde{\xi}_t)\shupow{2}+\frac{1}{4\eta} \bigg[ \nu (\word{3}+\emptyword X_0-\tilde{\xi}_t)-(\nu \bpsi_t\proj{2}+\bpsi_t\proj{3})\bigg]\shupow{2}\\
        &~~~~~~+\mu \left[(\word{3}+\emptyword X_0-\tilde{\xi}_t)-\bpsi_t\proj{2} \right],\\
        &\bpsi_T 
            = 0,
            \end{aligned}
    \end{equation}
     where $\tilde{\xi}_t$ is defined, for all $\mathbf{w}=\word{i_1}...\word{i_n}$ with $n\in \mathbb{N}$, by
     \begin{equation*}
          \tilde{\xi}_t\element{w}:=\begin{cases} 0,&\text{if}~\exists k\in\{1,...,n\}~\text{such that}~\word{i_k}=\word{3},\\
            \left(\xi_t\proj{2}\right)\element{w},&\text{if}~\nexists k\in \{1,...,n \}~\text{such that}~\word{i_k}=\word{3},
        \end{cases}
     \end{equation*}
     with $\xi_t$ given by \eqref{eq: def_xi_t}.
     
\begin{thm} \label{thm: verification_result}
     Let us define $\hat{Z}_t^\theta:=(t,P_t^\theta,X_t^\theta)$ and suppose that there exists $(\bpsi_t)_{t\in[0,T]}$ solution of the following infinite-dimensional system of Riccati equations \eqref{eq:infinite_dim_riccati_psi_t}. Moreover, let us assume that, for all $\theta\in \mathcal{A}$, $(\bpsi_t)_{t\in[0,T]}\in \mathcal{I}^{'}(\hat{\mathbb{Z}}^\theta)$,
     \begin{equation}\label{eq: cond_martingality}
         \E\left(\int_0^T \langle\psi_t\proj{2},\hat{\mathbb{Z}}_t^\theta \rangle^2 dt \right)<\infty,
     \end{equation}
     and that there exists $\theta^*\in \mathcal{A}$ satisfying the feedback equation
        \begin{align} \label{eq:hedgingoptimal_speed}
            \theta_t^* = \frac{1}{2\eta} \left[ \nu \left(X_t^{\theta^*}  -\left\langle \xi_t\proj{2},\hat{\mathbb{P}}^{\theta^*}_t \right\rangle \right) - \left\langle \nu \bpsi_t\proj{2}+\bpsi_t\proj{3},\hat{\mathbb{Z}}_t^{\theta^*}\right\rangle\right]. 
        \end{align}
     Then, the value of the mean-quadratic variation hedging problem is given by 
        \begin{align} \label{eq:hedgeopti}
        \sup_{\theta\in \mathcal{A}} \E\left(R_T^\theta-\frac{\lambda}{2}[R^\theta,R^\theta]_T\right) = \left(V_0 - \xi_0^{\emptyword} \right)-\bpsi_0^{\emptyword}, 
        \end{align}
    and the optimum is attained for $\theta^*$.
\end{thm}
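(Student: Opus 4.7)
The plan is to combine the martingale optimality principle with a square-completion argument, exploiting the fact that the Riccati system \eqref{eq:infinite_dim_riccati_psi_t} has been engineered precisely so that inserting the ansatz $\langle\bpsi_t,\hat{\mathbb{Z}}_t^\theta\rangle$ into It\^o's formula produces the integrand of $J(\theta)$ plus a perfect square in the control. In view of the rewriting \eqref{eq:rewritte_control_problem}, it suffices to show that $J(\theta)\leq -\bpsi_0^\emptyword$ for every $\theta\in\mathcal{A}$, with equality attained at the feedback $\theta^*$. Introduce the shorthands $Y_t^\theta := X_t^\theta - \langle\xi_t\proj{2},\hat{\mathbb{P}}_t^\theta\rangle$ and $K_t^\theta := \langle\nu\bpsi_t\proj{2} + \bpsi_t\proj{3},\hat{\mathbb{Z}}_t^\theta\rangle$. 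By the very construction of $\tilde{\xi}_t$, the element $\word{3}+\emptyword X_0 - \tilde{\xi}_t$ pairs with $\hat{\mathbb{Z}}_t^\theta$ to give exactly $Y_t^\theta$; the shuffle property (Proposition~\ref{prop:shufflepropertyextended}) then yields $\langle(\word{3}+\emptyword X_0 - \tilde{\xi}_t)\shupow{2},\hat{\mathbb{Z}}_t^\theta\rangle = (Y_t^\theta)^2$ and, analogously, $\langle(\nu(\word{3}+\emptyword X_0-\tilde{\xi}_t) - \nu\bpsi_t\proj{2} - \bpsi_t\proj{3})\shupow{2},\hat{\mathbb{Z}}_t^\theta\rangle = (\nu Y_t^\theta - K_t^\theta)^2$.

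\textbf{Main computation.} Applying It\^o's formula (Theorem~\ref{thm:sig_ito}) to $\langle\bpsi_t,\hat{\mathbb{Z}}_t^\theta\rangle$, which is legitimate under the assumption $\bpsi\in\mathcal{I}^{'}(\hat{\mathbb{Z}}^\theta)$, and using that the only non-zero quadratic covariation of $\hat{Z}^\theta=(t,P^\theta,X^\theta)$ is $d[Z^2,Z^2]_t = \sigma^2 dt$, I obtain an expansion whose drift reads
\begin{align*}
    \left\langle \dot{\bpsi}_t + \bpsi_t\proj{1} + (\mu+\nu\theta_t)\bpsi_t\proj{2} + \theta_t\bpsi_t\proj{3} + \tfrac{\sigma^2}{2}\bpsi_t\proj{22},\, \hat{\mathbb{Z}}_t^\theta\right\rangle.
\end{align*}
Substituting the Riccati equation \eqref{eq:infinite_dim_riccati_psi_t} for $\dot{\bpsi}_t$, the $\bpsi_t\proj{1}$, $\bpsi_t\proj{22}$ and $\mu\bpsi_t\proj{2}$ contributions cancel, while the shuffle identities above collapse the remaining drift to
\begin{align*}
    -\tfrac{\lambda\sigma^2}{2}(Y_t^\theta)^2 + \tfrac{1}{4\eta}(\nu Y_t^\theta - K_t^\theta)^2 + \mu Y_t^\theta + \theta_t K_t^\theta.
\end{align*}
The martingale part is $\sigma\int_0^\cdot \langle\bpsi_s\proj{2},\hat{\mathbb{Z}}_s^\theta\rangle dW_s$, which is a true martingale thanks to the integrability assumption \eqref{eq: cond_martingality}. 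Taking expectations and using $\bpsi_T=0$, this gives
\begin{align*}
    -\bpsi_0^\emptyword \,=\, \E\!\int_0^T\! \left[-\tfrac{\lambda\sigma^2}{2}(Y_t^\theta)^2 + \tfrac{1}{4\eta}(\nu Y_t^\theta - K_t^\theta)^2 + \mu Y_t^\theta + \theta_t K_t^\theta\right] dt.
\end{align*}

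\textbf{Square completion and conclusion.} Subtracting the above identity from $J(\theta)$ and regrouping the resulting quadratic in $\theta_t$ produces a perfect square:
\begin{align*}
    J(\theta) + \bpsi_0^\emptyword \,=\, -\,\E\!\int_0^T \eta\!\left[\theta_t - \tfrac{1}{2\eta}\big(\nu Y_t^\theta - K_t^\theta\big)\right]^2 dt \,\leq\, 0.
\end{align*}
Hence $J(\theta)\leq -\bpsi_0^\emptyword$ for every $\theta\in\mathcal{A}$, with equality iff $\theta$ satisfies the feedback equation \eqref{eq:hedgingoptimal_speed}. The hypothesized existence of $\theta^*\in\mathcal{A}$ solving that equation gives $J(\theta^*) = -\bpsi_0^\emptyword$, and \eqref{eq:rewritte_control_problem} then delivers \eqref{eq:hedgeopti}.

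\textbf{Main obstacle.} The conceptual content of the proof is essentially built into the Riccati ansatz: the system \eqref{eq:infinite_dim_riccati_psi_t} has been designed exactly so that the It\^o drift collapses as above. The delicate part of the argument lies instead in the infinite-series manipulations inside $\eTA[3]$ --- namely, the validity of It\^o's formula for $\langle\bpsi_t,\hat{\mathbb{Z}}_t^\theta\rangle$, the legitimacy of the shuffle identifications of the squared brackets, and the true-martingality of the It\^o integral. These three points are precisely what the standing assumptions $\bpsi\in\mathcal{I}^{'}(\hat{\mathbb{Z}}^\theta)$, the admissibility of the shuffle-squared Riccati entries in $\mathcal{A}(\hat{\mathbb{Z}}^\theta)$, and \eqref{eq: cond_martingality} are designed to secure, and verifying them in concrete cases (as in the examples of Section~\ref{sec: verification}) is where the real work takes place.
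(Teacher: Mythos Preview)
Your proof is correct and follows essentially the same approach as the paper: apply the signature It\^o formula to $\langle\bpsi_t,\hat{\mathbb{Z}}_t^\theta\rangle$, substitute the Riccati equation \eqref{eq:infinite_dim_riccati_psi_t}, use the shuffle property to identify the squared terms, invoke \eqref{eq: cond_martingality} for true martingality, and complete the square in $\theta_t$. The only cosmetic difference is that the paper packages the argument through an auxiliary process $U_t^\theta$ and derives the conditional identity $J_t(\theta)+\langle\bpsi_t,\hat{\mathbb{Z}}_t^\theta\rangle=-\eta\,\E\!\big[\int_t^T(\theta_s-\mathcal{T}_s(\theta))^2\,ds\,\big|\,\mathcal{F}_t\big]$ for all $t$, whereas you work directly with expectations at $t=0$; since the theorem only asserts the value at $t=0$, your shortcut is sufficient.
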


\begin{proof}
    Assume that there exists $(\bpsi_t)_{t\in[0,T]}\in \mathcal{I}^{'}(\hat{\mathbb{Z}}^\theta)$ solution of the infinite-dimensional system of Riccati equations \eqref{eq:infinite_dim_riccati_psi_t} and define, for any $\theta\in\mathcal{A}$, the process $(U_t^\theta)_{t\in[0,T]}$ by
    \begin{equation}\label{eq:ansatz_dyn}
    U_t^\theta =  \int_0^t\left[ (\mu+\nu\theta_s)\left(X_s^\theta -\left\langle \xi_s\proj{2},\hat{\mathbb{P}}^\theta_s \right\rangle\right)- \eta \theta_s^2 - \frac{\lambda}{2}\sigma^2   \left( X_s^\theta - \left\langle \xi_s\proj{2},\hat{\mathbb{P}}^\theta_s     \right\rangle\right)^2\right]  ds -\left\langle \bpsi_t,\hat{\mathbb{Z}}^\theta_t\right\rangle,~t\leq T. 
    \end{equation}
Since, $\psi\in \mathcal{I}'(\hat{\mathbb{Z}}_t^\theta)$, an application of Ito's lemma in Theorem \ref{thm:sig_ito} leads to 
\begin{align*}
    dU_t^\theta =&\bigg{[}(\mu+ \nu\theta_t)\left(X_t^\theta -\left\langle \xi_t\proj{2},\hat{\mathbb{P}}^\theta_t \right\rangle\right)- \eta \theta_t^2 - \frac{\lambda}{2}\sigma^2   \left( X_t^\theta - \left\langle \xi_t\proj{2},\hat{\mathbb{P}}^\theta_t \right\rangle\right)^2 \bigg{]} dt - d\left\langle \bpsi_t,\hat{\mathbb{Z}}^\theta_t\right\rangle \\
    =&\bigg{[}\nu\theta_t\left(X_t^\theta -\left\langle \xi_t\proj{2},\hat{\mathbb{P}}^\theta_t \right\rangle\right)- \eta \theta_t^2 - \frac{\lambda}{2}\sigma^2   \left( X_t^\theta - \left\langle \xi_t\proj{2},\hat{\mathbb{P}}^\theta_t \right\rangle\right)^2   \\
    &-\left \langle \dot{\bpsi}_t+\bpsi_t\proj{1}+ \mu \bpsi_t\proj{2}+\theta_t (\nu\bpsi_t\proj{2}+ \bpsi_t\proj{3})+\frac{1}{2} \sigma^2 \bpsi_t\proj{22}, \hat{\mathbb{Z}}_t^\theta\right\rangle \bigg{]} dt\\
    &-\sigma \left\langle \bpsi_t\proj{2}, \hat{\mathbb{Z}}_t^\theta\right\rangle dW_t. 
\end{align*}
By a square completion, we observe that
\begin{align*}
    dU_t^\theta =& \bigg{[} -\eta \left(\theta_t-\frac{1}{2\eta} \left[ \nu \left(X_t^\theta -\left\langle \xi_t\proj{2},\hat{\mathbb{P}}^\theta_t \right\rangle \right) - \left\langle \nu \bpsi_t\proj{2}+\bpsi_t\proj{3},\hat{\mathbb{Z}}_t^\theta \right\rangle\right]  \right)^2 \\
    &+\frac{1}{4\eta} \left( \nu \left(X_t^\theta -\left\langle \xi_t\proj{2},\hat{\mathbb{P}}^\theta_t \right\rangle \right) - \left\langle \nu \bpsi_t\proj{2}+\bpsi_t\proj{3},\hat{\mathbb{Z}}_t^\theta \right\rangle\right)^2\\
    &-\frac{\lambda}{2}\sigma^2   \left( X_t^\theta - \left\langle \xi_t\proj{2},\hat{\mathbb{P}}^\theta_t \right\rangle\right)^2-\left \langle \dot{\bpsi}_t+\bpsi_t\proj{1} + \mu \bpsi_t\proj{2} +\frac{1}{2} \sigma^2 \bpsi_t\proj{22}, \hat{\mathbb{Z}}_t^\theta\right\rangle\\
    &+\mu\left(X_t^\theta -\left\langle \xi_t\proj{2},\hat{\mathbb{P}}^\theta_t \right\rangle\right)\bigg{]} dt- \sigma \left\langle \bpsi_t\proj{2}, \hat{\mathbb{Z}}_t^\theta\right\rangle dW_t. 
\end{align*}

As, for $t\leq T$, $\tilde{\xi}_t\in \mathcal{A}{(\hat{\mathbb{Z}}^\theta)}$ and $\left(\nu \psi_t\proj{2}+\psi_t\proj{3}\right) \in \mathcal{A}{(\hat{\mathbb{Z}}^\theta)}$, since $\tilde{\xi}\in \mathcal{I}'(\hat{\mathbb{Z}}^\theta)$ and $\psi\in \mathcal{I}'(\hat{\mathbb{Z}}^\theta)$, using the shuffle property of Proposition \ref{prop:shufflepropertyextended}, we deduce that 
\begin{align*}
    &\left(X_t^\theta - \left\langle \xi_t\proj{2},\hat{\mathbb{P}}^\theta_t \right\rangle\right)^2 = \left\langle \left(\mathbf{3}+\emptyword X_0-\tilde{\xi}_t\right)\shupow{2}, \hat{\mathbb{Z}}_t^\theta\right\rangle,
\end{align*}
and 
\begin{align*}
    &\left( \nu \left(X_t^\theta -\left\langle \xi_t\proj{2},\hat{\mathbb{P}}^\theta_t \right\rangle \right) - \left\langle \nu \bpsi_t\proj{2}+\bpsi_t\proj{3},\hat{\mathbb{Z}}_t^\theta \right\rangle\right)^2= \left\langle \left( \nu (\word{3}+\emptyword X_0-\tilde{\xi}_t)-(\nu \bpsi_t\proj{2}+\bpsi_t\proj{3})\right)\shupow{2}, \hat{\mathbb{Z}}_t^\theta\right\rangle. 
\end{align*}

Moreover, using the Riccati equation \eqref{eq:infinite_dim_riccati_psi_t} satisfied by $\bpsi_t$, we get that 
\begin{align*}
    dU_t^\theta = -\eta (\theta_t-\mathcal{T}_t(\theta))^2 dt -  \sigma \left\langle \bpsi_t\proj{2}, \hat{\mathbb{Z}}_t^\theta\right\rangle dW_t, 
\end{align*}
with $$\mathcal{T}_t(\theta):= \frac{1}{2\eta} \left[ \nu \left(X_t^{\theta}  -\left\langle \xi_t\proj{2},\hat{\mathbb{P}}^{\theta}_t \right\rangle \right) - \left\langle \nu \bpsi_t\proj{2}+\bpsi_t\proj{3},\hat{\mathbb{Z}}_t^{\theta}\right\rangle\right].$$

Let us now define $(M_t^\theta)_{t\in[0,T]}$ as
\begin{align*}
    M_t^\theta:=&U_t^\theta + \eta \int_0^t (\theta_s-\mathcal{T}_s(\theta))^2 ds=-\left \langle \bpsi_0, \hat{\mathbb{Z}}^\theta_0 \right \rangle - \sigma \int_0^t \left \langle \bpsi_s\proj{2}, \hat{\mathbb{Z}}^\theta_s\right \rangle dW_s. 
\end{align*}
Since we assume \eqref{eq: cond_martingality}, we have that $(M_t^\theta)_{t\in[0,T]}$ is a true martingale. In this case, since $\E(M_T^\theta|\mathcal{F}_t)=M_t^\theta$ and $\left\langle\bpsi_T,\hat{\mathbb{Z}}^\theta_T\right \rangle=0$,  we observe that 
\begin{align*}
    \left \langle \bpsi_t, \hat{\mathbb{Z}}^\theta_t \right \rangle =&-\E(M_T^\theta|\mathcal{F}_t)+\int_0^t\left[ \nu\theta_s\left(X_s^\theta -\left\langle \xi_s\proj{2},\hat{\mathbb{P}}^\theta_s \right\rangle\right)- \eta \theta_s^2 - \frac{\lambda}{2}\sigma^2   \left( X_s^\theta - \left\langle \xi_s\proj{2},\hat{\mathbb{P}}^\theta_s     \right\rangle\right)^2\right]  ds\\
    &+\eta \int_0^t (\theta_s-\mathcal{T}_s(\theta))^2 ds. \\
    =&-\E \left(\int_t^T\left[ \nu\theta_s\left(X_s^\theta -\left\langle \xi_s\proj{2},\hat{\mathbb{P}}^\theta_s \right\rangle\right)- \eta \theta_s^2 - \frac{\lambda}{2}\sigma^2   \left( X_s^\theta - \left\langle \xi_s\proj{2},\hat{\mathbb{P}}^\theta_s     \right\rangle\right)^2\right]  ds\bigg{|} \mathcal{F}_t\right)\\
    &-\eta \E \left(\int_t^T (\theta_s-\mathcal{T}_s(\theta))^2 ds\bigg{|} \mathcal{F}_t\right). 
\end{align*}
By defining $J_t(\theta)$ as 
\begin{align*}
    J_t(\theta):=\E \left(\int_t^T\left[ \nu\theta_s\left(X_s^\theta -\left\langle \xi_s\proj{2},\hat{\mathbb{P}}^\theta_s \right\rangle\right)- \eta \theta_s^2 - \frac{\lambda}{2}\sigma^2   \left( X_s^\theta - \left\langle \xi_s\proj{2},\hat{\mathbb{P}}^\theta_s     \right\rangle\right)^2\right]  ds\bigg{|} \mathcal{F}_t\right), 
\end{align*}
we obtain that 
\begin{align} \label{eq:J_t_psi_t_Z_t}
    J_t(\theta)+\left \langle \bpsi_t, \hat{\mathbb{Z}}^\theta_t \right \rangle = -\eta \E \left(\int_t^T (\theta_s-\mathcal{T}_s(\theta))^2 ds\bigg{|} \mathcal{F}_t\right). 
\end{align}
We observe that the right hand side of \eqref{eq:J_t_psi_t_Z_t} is always nonpositive and vanishes for $\theta_t=\mathcal{T}_t(\theta)$, i.e. $\theta_t = \theta_t^*$ with $\theta_t^*$ given by \eqref{eq:hedgingoptimal_speed}. As we assume that there exists an admissible control process $\theta^*\in \mathcal{A}$ such that \eqref{eq:hedgingoptimal_speed} holds, then the proof is complete. In fact, in this case, for fixed $t\leq T$, we have that, $\left \langle \bpsi_t, \hat{\mathbb{Z}}^{\theta^*}_t \right \rangle=\left \langle \bpsi_t, \hat{\mathbb{Z}}^{\theta^{'}}_t \right \rangle$, for all $\theta^{'}\in \mathcal{A}_t(\theta^*):=\left\{ \theta \in \mathcal{A}: \theta_s = \theta_s^*,~\text{for}~s\leq t\right\}$. 
Therefore, from \eqref{eq:J_t_psi_t_Z_t}, we deduce that 
\begin{equation*}
    \sup_{\theta\in\mathcal{A}_t(\theta^*)}J_t(\theta) =J_t(\theta^*) =-  \left \langle \bpsi_t, \hat{\mathbb{Z}}^{\theta^*}_t \right \rangle, 
\end{equation*}
which shows that $\theta^* $ is an optimal control. In particular, for $t=0$, we have that 
\begin{align*}
    \sup_{\theta\in\mathcal{A}}J_0(\theta) =J_0(\theta^*) = -\bpsi_0^{\emptyword},
\end{align*}
and as 
\begin{align*}
    \sup_{\theta\in \mathcal{A}} \E\left(R_T^\theta-\frac{\lambda}{2}[R^\theta,R^\theta]_T\right)  =\left(V_0 -\left\langle \xi_0,\hat{\mathbb{P}}^\theta_0 \right\rangle \right) + \sup_{\theta\in\mathcal{A}}J_0(\theta), 
\end{align*}
we finally obtain \eqref{eq:hedgeopti}. 
\end{proof}

\begin{sqremark}[Interpretation of $\theta^*$]\label{rem: interpretation_theta_opt}
$\theta^*$ has a feedback form in $\hat{\mathbb{Z}}_t^{\theta^*}$ that can be decomposed into two terms:
\begin{itemize}
    %\item (OLD) The first term $\frac{1}{2\eta} \nu \left(X_t^{\theta^*}  -\left\langle \xi_t\proj{2},\hat{\mathbb{P}}^{\theta^*}_t \right\rangle \right)$ corresponds to an arbitrage. As we assume that the P\&L is marked-to-market, when $\nu>0$, the trader will increase her P\&L at time $t<T$ by trading such as to increase her inventory if she is relatively long with respect to the perfect hedging strategy, i.e. if $X_t^\theta>\langle\xi_t\proj{2}, \hat{\mathbb{P}}^\theta_t\rangle$, or to decrease her inventory if she is relatively short ($X_t^\theta<\langle\xi_t\proj{2}, \hat{\mathbb{P}}^\theta_t\rangle$). This can be clearly seen by examining the expression of $R_T^\theta$ given by \eqref{eq: rewritte_PL}. By trading in this way, the trader, knowing that $\nu>0$, takes advantage of arbitrage to push the price in the direction that increases the value of her holdings at time $t<T$. 
    \item The first term $\frac{1}{2\eta} \nu \left(X_t^{\theta^*}  -\left\langle \xi_t\proj{2},\hat{\mathbb{P}}^{\theta^*}_t \right\rangle \right)$ corresponds to an arbitrage opportunity. Since P\&L is marked-to-market, when $\nu>0$, the trader benefits from pushing the price upward by buying more, when her inventory $X_t^\theta$ is larger than the the perfect hedging strategy $\langle\xi_t\proj{2}, \hat{\mathbb{P}}^\theta_t\rangle$. Similarly, she benefits from pushing the price downward by selling if it is smaller. This can be clearly seen by examining the expression of $R_T^\theta$ given by \eqref{eq: rewritte_PL}. When $\nu>0$, the trader can deliberately move the market to make her current position more valuable.
    
    \item The second term $-\frac{1}{2\eta} \left\langle \nu \bpsi_t\proj{2}+\bpsi_t\proj{3},\hat{\mathbb{Z}}_t^{\theta^*}\right\rangle$ expresses the trader's incentive to adjust both her inventory $X^\theta$ and, when $\nu>0$, the public traded price $P^\theta$ in the direction that increases her utility at time $t<T$. 
\end{itemize}
    
\end{sqremark}

\begin{sqremark}
    Assumptions of Theorem \ref{thm: verification_result} are intricate to prove in general, especially concerning the existence of a solution to the infinite-dimensional system of Riccati equations. Nevertheless, in the next section, we provide two concrete examples for which the existence of an explicit solution to the infinite-dimensional system of Riccati equations can be established. 
\end{sqremark}
\begin{sqremark}[Polynomial payoffs]
    If $\xi=\left(\word{2}^{\otimes m}\right)_{m=0,...M}$, then we can show that the infinite-dimensional system of Riccati equations reduces to \eqref{eq: infinite_riccati_eq_Markovian_case}. 
\end{sqremark}
\begin{sqremark}[Numerical implementation]\label{rem: num_impl_riccati}
    Care should be taken regarding the numerical resolution of the system of Riccati equations and the choice of the truncation order associated to $\psi$. In fact, the shuffle product $\shuprod$ cannot be exact since at each discretization steps of the ODE, it will double the truncation order of $\bpsi$. If we assume that $\xi_t\proj{2}\in T^{\tilde{M}}(\mathbb{R}^2)$, then, as in \cite{abi2025signature}, we decide to fix, at each step, the truncation order of $\psi$  to $2\tilde{M}$, and to consider the shuffle product projected on $T^{2\tilde{M}}(\mathbb{R}^3)$ defined as $\widetilde{\shuprod}:(T^{2\tilde{M}}(\mathbb{R}^3))^2\to T^{2\tilde{M}}(\mathbb{R}^3)$. Hence, given $\xi_t\proj{2}\in T^{\tilde{M}}(\mathbb{R}^2)$, $\psi$ is an element of $T^{2\tilde{M}}(\mathbb{R}^3)$. We refer to \cite[Section 5.1.]{abi2025signature} for a more in-depth discussion, and an analysis of the quality of convergence given different truncation orders for $\bpsi$. 
\end{sqremark}

\subsubsection{Explicit solution to the infinite-dimensional system of Riccati equations: two examples}\label{sec: explicit_expressions_riccati}

We now consider two examples for which the infinite-dimensional system of Riccati equations admits an explicit solution and for which the assumptions of the verification theorem are satisfied. The first example is the case without permanent impact i.e.~$\nu = 0$. In this case, we fall back into the \cite*{bank2017hedging} framework, so it is not surprising to obtain an existence result for the system of Riccati equations. The second example is the case where we restrict ourselves to European quadratic payoff such that $\xi=\Gamma \word{22}$ with $\Gamma\in \mathbb{R}$, and we retrieve the result of \cite{almgren2016option}. Furthermore, for both examples, we also deduce the explicit form of the optimal trading speed $\theta^*$. \\

Before discussing the two examples, we consider a lemma that gives us a sufficient condition for the existence and admissibility of $\theta^*$. 

\begin{lem}\label{prop: sufficient_cond_admissible_trading_speed}
Assume that there exists $(\bpsi_t)_{t\in[0,T]}\in \mathcal{I}^{'}(\hat{\mathbb{Z}}^\theta)$ solution of \eqref{eq:infinite_dim_riccati_psi_t} such that \eqref{eq: cond_martingality} holds. Moreover, assume there exists a constant $C>0$ such that 
     \begin{equation}\label{eq: fixed_point_assumption} 
         \bigg{|}\frac{1}{2\eta}\left\langle  \left(\nu (\word{3}+\emptyword X_0-\tilde{\xi}_t)-(\nu \bpsi_t\proj{2}+\bpsi_t\proj{3}) \right), \hat{\mathbb{Z}}_s^{\theta} \otimes \left(\nu \word{2}+\word{3} \right) \otimes \hat{\mathbb{Z}}_{s,t}^{\phi}\right\rangle \bigg{|} \leq  C,~s<t\leq T \text{ and } \theta,\phi\in \mathcal{A}.
     \end{equation} 
     Then, there exists $\theta^*\in \mathcal{A}$ satisfying \eqref{eq:hedgingoptimal_speed}.
\end{lem}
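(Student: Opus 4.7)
The feedback identity \eqref{eq:hedgingoptimal_speed} is a fixed-point equation in $\theta^*$, since the processes $X^{\theta^*}$, $\hat{\mathbb{P}}^{\theta^*}$ and $\hat{\mathbb{Z}}^{\theta^*}$ all depend on $\theta^*$. The plan is to introduce the map $\Phi: \mathcal{A} \to \mathcal{A}$ defined by
\begin{equation*}
    \Phi(\theta)_t := \frac{1}{2\eta}\Bigl[\nu\bigl(X_t^\theta - \langle \xi_t\proj{2}, \hat{\mathbb{P}}_t^\theta\rangle\bigr) - \langle \nu \bpsi_t\proj{2} + \bpsi_t\proj{3}, \hat{\mathbb{Z}}_t^\theta\rangle\Bigr],
\end{equation*}
and to produce $\theta^*$ as a fixed point of $\Phi$ via a Picard iteration in the spirit of Banach's fixed-point theorem.

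\textbf{Step 1: $\Phi$ is well defined on $\mathcal{A}$.} Using $\xi \in T^M(\R^2)$, the assumption $\bpsi_t \in \mathcal{I}'(\hat{\mathbb{Z}}^\theta)$ and the shuffle property of Proposition~\ref{prop:shufflepropertyextended}, one shows that each bracket in the definition of $\Phi(\theta)_t$ is well defined and that the process $(\Phi(\theta)_t)_{t\leq T}$ is progressively measurable with $\E\int_0^T \Phi(\theta)_t^p\, dt < \infty$, the latter being a consequence of \eqref{eq: cond_martingality} together with $\theta \in \mathcal{A}$ and routine shuffle computations.

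\textbf{Step 2: Lipschitz estimate via the Gâteaux derivative.} This is the crux of the proof. Fix $\theta, \phi \in \mathcal{A}$ and interpolate $\theta_\lambda := \theta + \lambda(\phi - \theta)$ for $\lambda \in [0,1]$. Noting that $\partial P_t^{\theta}/\partial \theta_s = \nu \ind{s \leq t}$ and $\partial X_t^{\theta}/\partial \theta_s = \ind{s \leq t}$, one extends Lemma~\ref{lem: gamma_der} from $\hat{\mathbb{P}}^\theta$ to the augmented process $\hat{\mathbb{Z}}^\theta$, the letter $\nu\word{2}$ being replaced by $\nu\word{2} + \word{3}$. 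Applying the fundamental theorem of calculus to $\lambda \mapsto \Phi(\theta_\lambda)_t$ and setting $\bell_t := \nu(\word{3} + \emptyword X_0 - \tilde{\xi}_t) - (\nu\bpsi_t\proj{2} + \bpsi_t\proj{3})$, one obtains the representation
\begin{equation*}
    \Phi(\phi)_t - \Phi(\theta)_t = \int_0^1 \int_0^t (\phi_s - \theta_s) \cdot \frac{1}{2\eta} \left\langle \bell_t,\; \hat{\mathbb{Z}}_s^{\theta_\lambda} \otimes (\nu\word{2} + \word{3}) \otimes \hat{\mathbb{Z}}_{s,t}^{\theta_\lambda} \right\rangle ds\, d\lambda.
\end{equation*}
Assumption \eqref{eq: fixed_point_assumption} then yields the Lipschitz bound $|\Phi(\phi)_t - \Phi(\theta)_t| \leq C \int_0^t |\phi_s - \theta_s|\, ds$ pathwise.

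\textbf{Step 3: Picard iteration and conclusion.} Starting from $\theta^{(0)} \equiv 0 \in \mathcal{A}$ and setting $\theta^{(n+1)} := \Phi(\theta^{(n)})$, iterating the Lipschitz bound gives $|\theta^{(n+1)}_t - \theta^{(n)}_t| \leq \frac{(CT)^n}{n!} \sup_{s\leq T}|\theta^{(1)}_s - \theta^{(0)}_s|$ for $t \leq T$, so the sequence is Cauchy pathwise uniformly on $[0,T]$, with limit $\theta^*$. Passing to the limit in $\theta^{(n+1)} = \Phi(\theta^{(n)})$ (using the continuity of $\Phi$ on compact sets, itself a consequence of Step 2) yields $\theta^* = \Phi(\theta^*)$, i.e.\ \eqref{eq:hedgingoptimal_speed}. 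Admissibility $\theta^* \in \mathcal{A}$ follows from the uniform bound and Step 1.

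\textbf{Main obstacle.} The delicate point is Step 2, where one must justify differentiating an infinite linear combination $\langle \bpsi_t\proj{2} + \bpsi_t\proj{3}, \hat{\mathbb{Z}}_t^{\theta_\lambda}\rangle$ with respect to the control $\theta_\lambda$. The derivative identity of Lemma~\ref{lem: gamma_der} is stated for $\alpha_t \in T^p(\R^2)$, i.e.\ finite linear combinations, whereas $\bpsi_t$ generically lives in the extended tensor algebra. Legitimising the interchange of the derivative with the infinite series requires the admissibility $\bpsi_t \in \mathcal{I}'(\hat{\mathbb{Z}}^\theta)$ combined with the uniform bound \eqref{eq: fixed_point_assumption} to dominate the tail contributions before passing to the limit.
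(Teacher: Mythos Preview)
Your overall fixed-point strategy matches the paper's, but the execution differs in two respects worth noting.

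First, and most importantly, the paper obtains the Lipschitz estimate without any interpolation or differentiation. Instead of considering $\lambda\mapsto\theta_\lambda$ and invoking a Gâteaux derivative, the paper writes down directly the variation-of-constants identity for the \emph{difference} of two signatures,
\[
\hat{\mathbb{Z}}^{\theta}_t - \hat{\mathbb{Z}}^{\phi}_t \;=\; \int_0^t \bigl[\hat{\mathbb{Z}}_s^{\theta}\otimes(\nu\word{2}+\word{3})\otimes\hat{\mathbb{Z}}_{s,t}^{\phi}\bigr](\theta_s-\phi_s)\,ds,
\]
which is a purely algebraic pathwise identity (note the mixed $\theta$/$\phi$ structure, which is exactly why assumption~\eqref{eq: fixed_point_assumption} is stated for two distinct controls). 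Pairing both sides with $\bell_t$ and invoking~\eqref{eq: fixed_point_assumption} yields $|F(\theta)_t-F(\phi)_t|\leq C\int_0^t|\theta_s-\phi_s|\,ds$ immediately. This completely sidesteps what you correctly flagged as the ``main obstacle'' in your Step~2: there is no limit to pass through an infinite series, and Lemma~\ref{lem: gamma_der} is not needed at all. Your interpolation route can be made rigorous, but it is strictly harder.

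Second, for the fixed point itself the paper works in the weighted space $\mathcal{L}^p(\alpha)$ with norm $\bigl(\E\int_0^T e^{-\alpha t}|\theta_t|^p\,dt\bigr)^{1/p}$, applies Hölder to get $\|F(\theta)-F(\phi)\|^p_{\mathcal{L}^p(\alpha)}\leq \frac{C^pT^{p-1}}{\alpha}\|\theta-\phi\|^p_{\mathcal{L}^p(\alpha)}$, and chooses $\alpha$ large so that $F$ is a contraction; admissibility $\theta^*\in\mathcal{A}$ then follows from $\theta^*\in\mathcal{L}^p(\alpha)$ and $e^{\alpha T}<\infty$. Your Picard iteration with the $\tfrac{(CT)^n}{n!}$ decay is an equally valid alternative, though your final admissibility claim (``follows from the uniform bound and Step~1'') is a bit quick: you are iterating pathwise, so you should check that the pathwise uniform limit inherits the $L^p$ integrability, which the weighted-norm approach handles automatically.
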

\begin{proof}
    First, for a given $\alpha>0$, we define the Banach space $\mathcal{L}^{p}(\alpha)$ by
    \begin{equation*}
    \mathcal{L}^{p}(\alpha):=\left\{\theta:[0,T]\times\Omega\rightarrow\mathbb{R}~\text{prog. measurable process such that } ||\theta||_{\mathcal{L}^{p}(\alpha)}^p:=\E\left(\int_0^T e^{-\alpha t} |\theta_t|^{p} dt\right)<\infty\right\},
    \end{equation*}
    with $p=2 \vee 2\tilde{M}\mathbb{1}_{\{\nu >0\}}$. To prove the existence of $\theta^*$, we use a fixed point approach. Namely, we prove that, for $\theta \in \mathcal{L}^{p}(\alpha)$, the functional $F$ defined such as, for $t\leq T$,
\begin{equation*}
    F(\theta)_t:= \frac{1}{2\eta}\langle   \nu (\word{3}+\emptyword X_0-\tilde{\xi}_t)-(\nu \bpsi_t\proj{2}+\bpsi_t\proj{3}), \hat{\mathbb{Z}}^\theta_t\rangle,
\end{equation*}
is a contraction. As done in the proof of Lemma \ref{lem: gamma_der}, using a variation of constant approach, we observe that, for $\theta$, $\phi\in \mathcal{L}^{p}(\alpha)$ and $t\leq T$,
\begin{align*}
    \hat{\mathbb{Z}}^{\theta}_t - \hat{\mathbb{Z}}^{\phi}_t = \int_0^t \left[ \hat{\mathbb{Z}}_s^{\theta} \otimes\left(\nu \word{2}+\word{3}\right)\otimes\hat{\mathbb{Z}}_{s,t}^{\phi}\right] (\theta_s-\phi_s)~ds.  
\end{align*}
Therefore, we have that
\begin{align*}
    |F(\theta)_t-F(\phi)_t| & = \left|\int_0^t\frac{1}{2\eta}\langle  \nu (\word{3}+\emptyword X_0-\tilde{\xi}_t)-(\nu \bpsi_t\proj{2}+\bpsi_t\proj{3}), \hat{\mathbb{Z}}_s^{\theta} \otimes\left(\nu \word{2}+\word{3}\right)\otimes\hat{\mathbb{Z}}_{s,t}^{\phi}  \rangle~(\theta_s-\phi_s)~ds\right|,
\end{align*}
and as we assume \eqref{eq: fixed_point_assumption}, using H\"{o}lder's inequality,  we observe that, for $t\leq T$,
\begin{align*}
    |F(\theta)_t-F(\phi)_t|^{p} &\leq  C^{p} t^{p-1} \int_0^t (\theta_s-\phi_s)^{p}~ds. 
\end{align*}
In this case, we infer that
\begin{align*}
    ||F(\theta)-F(\phi)||^{p}_{\mathcal{L}^{p}(\alpha)} &= \E\left(\int_0^T e^{-\alpha t} \left|F(\theta)_t- F(\phi)_t \right|^{p} dt \right)\\
    & \leq C^{p} \E\left(\int_0^T e^{-\alpha t} t^{p-1}\int_0^t (\theta_s-\phi_s)^{p}~ds dt \right) \\
    & \leq\frac{ C^{p} T^{p-1}}{\alpha}||\theta - \phi||^{p}_{\mathcal{L}^{p}(\alpha)}.
\end{align*}
 Then, there exists $\alpha>0,$ such that $F$ is a contraction for $\theta\in \mathcal{L}^{p}(\alpha)$. Therefore, using the  Banach fixed-point theorem, there exists a unique $\theta^*$  that satisfies \eqref{eq:hedgingoptimal_speed}. Since there exists $\alpha>0$ such that $\theta^*\in \mathcal{L}^{p}(\alpha),$ we deduce that
\begin{align*}
    \E\left(\int_0^T |\theta^*_t|^{p} dt\right) \leq e^{\alpha T} ||\theta^*||^{p}_{\mathcal{L}^{p}(\alpha)}  < \infty,
\end{align*}
and then $\theta^*\in \mathcal{A}$. 
\end{proof}

\begin{prop}[No permanent market impact] \label{prop: explicit_sol_without_permanent_impact} Let us assume that $\nu=0$ and $\mu=0$. Let us define the time-dependent functions $f(t)$ and $K(t,s)$ as, for $t<T$,
\begin{equation}
    f(t):=c \tanh\left(\frac{c(T-t)}{\eta}\right),
\end{equation}
with $c:=\sqrt{\frac{\lambda \sigma^2 \eta}{2}}$ and, for $t<s<T$,
\begin{equation}
    K(t,s):={\frac{c}{\eta}}\frac{\cosh\left(\frac{c(T-s)}{\eta}\right)}{\sinh\left( \frac{c(T-t)}{\eta}\right)}.
\end{equation}

Moreover, let us define $\tilde{\mathcal{E}}_t$ such that for all $\word{w}=\word{i_1}...\word{i_n}$ with $n\in \mathbb{N}$, 
\begin{equation*}
          \tilde{\mathcal{E}}_t\element{w}:=\begin{cases} 0,&\text{if}~\exists k\in\{1,...,n\}~\text{such that}~\word{i_k}=\word{3},\\
            \mathcal{E}_t\element{w},&\text{if}~\nexists k\in \{1,...,n \}~\text{such that}~\word{i_k}=\word{3},
        \end{cases}
\end{equation*}
with $\mathcal{E}_t$ given by \eqref{eq: expectation_time_augmented_sig}, and $\hat{\xi}_t$ such as, for $t<T$, 
\begin{equation}
    \hat{\xi}_t:=\int_t^T K(t,s) \tilde{\xi}_s|_{\tilde{\mathcal{E}}_{s-t}} ds.
\end{equation} 

Then, $\psi_t$ defined, for $t<T$, by
    \begin{equation}
    \begin{aligned}
        \psi_t &= f(t) \left[\word{3}+\emptyword X_0 -\hat{\xi}_t \right]\shupow{2}+\frac{\lambda \sigma^2}{2}\int_t^T \left(\tilde{\xi}_s-\hat{\xi}_s\right)\shupow{2}\bigg{|}_{\tilde{\mathcal{E}}_{s-t}} ds\\
        &+\sigma^2\int_t^T f(s) \left(\hat{\xi}_s\proj{2}\right)\shupow{2}\bigg{|}_{\tilde{\mathcal{E}}_{s-t}}ds,  
    \end{aligned}
    \end{equation}
    is a solution to the infinite-dimensional system of Riccati equation \eqref{eq:infinite_dim_riccati_psi_t} belongs to $\mathcal{I}'(\hat{\mathbb{Z}}^\theta)$, and satisfies \eqref{eq: cond_martingality}. Moreover, there exists $\theta^*\in \mathcal{A}$ solution to
    \begin{equation}
        \theta_t^*=\frac{1}{\eta} f(t) \left(\langle \hat{\xi}_t,\hat{\mathbb{Z}}_t^{\theta^*}\rangle-X_t^{\theta^*}\right),\quad t\leq T,
    \end{equation}
    and $\theta^*$ is the optimal control. 
\end{prop}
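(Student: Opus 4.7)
The plan is direct verification: substitute the three-term ansatz into the Riccati system, exploit the scalar Riccati identity satisfied by $f$, and then invoke Lemma~\ref{prop: sufficient_cond_admissible_trading_speed} to produce $\theta^*\in\mathcal A$. With $\nu=\mu=0$, the system \eqref{eq:infinite_dim_riccati_psi_t} reduces to
\[\dot\psi_t = -\psi_t\proj{1} - \tfrac{\sigma^2}{2}\psi_t\proj{22} - \tfrac{\lambda\sigma^2}{2}(\word{3} + \emptyword X_0 - \tilde\xi_t)\shupow{2} + \tfrac{1}{4\eta}(\psi_t\proj{3})\shupow{2},\]
which is of LQ type on the extended tensor algebra. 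Writing $A_t := \word{3} + \emptyword X_0 - \hat\xi_t$, the summand $\psi_t^{(1)}=f(t)\,A_t\shupow{2}$ is the square-of-the-gap scaled by the scalar Riccati coefficient, while $\psi_t^{(2)}$ and $\psi_t^{(3)}$ are integral terms transporting the tracking penalty and the residual running cost along the flow $\tilde{\mathcal E}$.

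I would first collect the scalar identities needed. Direct computation gives that $f$ solves $\dot f = f^2/\eta - \lambda\sigma^2/2$ with $f(T)=0$ (using $c^2 = \lambda\sigma^2\eta/2$), and the kernel satisfies $K(t,t) = c^2/(\eta f(t))$ and $\partial_t K(t,s) = K(t,s)\,c^2/(\eta f(t))$. Since $\tilde\xi_s$ vanishes on every word containing the letter $\word{3}$, so does $\hat\xi_t$, so $\hat\xi_t\proj{3}=0$; combined with the Leibniz rule $(a\shuprod b)\proj{i} = a\proj{i}\shuprod b + a\shuprod b\proj{i}$, this forces $\psi_t\proj{3} = \psi_t^{(1)}\proj{3} = 2 f(t) A_t$, and hence $\tfrac{1}{4\eta}(\psi_t\proj{3})\shupow{2} = \tfrac{f(t)^2}{\eta}\,A_t\shupow{2}$.

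Next I would differentiate the ansatz term by term. Using $\tfrac{d}{dt}\mathcal E_{s-t} = -(\word{1}+\tfrac{\sigma^2}{2}\word{22})\otimes \mathcal E_{s-t}$, pulling the time derivative inside the projection $\cdot|_{\tilde{\mathcal E}_{s-t}}$ in $\psi^{(2)}$ and $\psi^{(3)}$ reproduces the contributions $-\psi^{(2,3)}_t\proj{1}-\tfrac{\sigma^2}{2}\psi^{(2,3)}_t\proj{22}$. Leibniz differentiation at the boundary $s=t$ of the same integrals yields $-\tfrac{\lambda\sigma^2}{2}(\tilde\xi_t-\hat\xi_t)\shupow{2}$ and $-\sigma^2 f(t)(\hat\xi_t\proj{2})\shupow{2}$. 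For $\psi^{(1)}$, the chain rule combined with the identity $\partial_t K/K = c^2/(\eta f)$ gives an explicit expression for $\dot{\hat\xi}_t$ in terms of $\hat\xi_t$, $\tilde\xi_t$, and the $\word{1}$- and $\word{22}$-projections; collecting everything and using the scalar Riccati ODE for $f$, the cross terms cancel and the right-hand side regroups into $-\psi_t\proj{1}-\tfrac{\sigma^2}{2}\psi_t\proj{22}-\tfrac{\lambda\sigma^2}{2}(\word{3}+\emptyword X_0-\tilde\xi_t)\shupow{2}+\tfrac{f^2}{\eta}A_t\shupow{2}$, as required. The terminal condition $\psi_T = 0$ is immediate from $f(T)=0$ and the vanishing of the two integrals at $t=T$.

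Because $\xi \in T^M(\mathbb R^2)$ keeps $\psi_t$ inside the fixed truncated algebra $T^{2M}(\mathbb R^3)$, membership $\psi \in \mathcal I'(\hat{\mathbb Z}^\theta)$ and the bound \eqref{eq: cond_martingality} follow from standard BDG estimates on the finitely many signature coordinates together with the $\mathcal L^2$ bound on $\theta\in\mathcal A$ (here $p=2$ since $\nu=0$). With $\nu=0$, condition \eqref{eq: fixed_point_assumption} of Lemma~\ref{prop: sufficient_cond_admissible_trading_speed} reduces to the uniform boundedness on $[0,T]$ of the coefficient $\psi_t\proj{3} = 2 f(t) A_t$ against the $\word{3}$-slot of the signature, which holds because $f$ is bounded on $[0,T]$; the lemma delivers $\theta^*\in\mathcal A$, and evaluating \eqref{eq:hedgingoptimal_speed} with $\nu=0$ gives $\theta^*_t = -\tfrac{1}{2\eta}\langle\psi_t\proj{3},\hat{\mathbb Z}_t^{\theta^*}\rangle = \tfrac{f(t)}{\eta}\bigl(\langle\hat\xi_t,\hat{\mathbb Z}_t^{\theta^*}\rangle - X_t^{\theta^*}\bigr)$. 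The main obstacle is the bookkeeping in the differentiation step: the boundary terms at $s=t$, the derivatives pulled inside $\cdot|_{\tilde{\mathcal E}_{s-t}}$, and $\dot{\hat\xi}_t$ must regroup exactly into the right-hand side of the Riccati equation; the key identity enabling this is $\partial_t K/K = c^2/(\eta f)$, which is precisely what makes the scalar Riccati ODE for $f$ absorb the nonlinear feedback $(\psi_t\proj{3})\shupow{2}/(4\eta)$ against the quadratic tracking penalty.
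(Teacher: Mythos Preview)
Your proposal is correct and follows essentially the same route as the paper: direct verification via the scalar Riccati ODE $\dot f=f^2/\eta-\lambda\sigma^2/2$, the kernel identities $K(t,t)=c^2/(\eta f(t))$ and $\partial_t K=K\,K(t,t)$, and the flow relation $\partial_t(\cdot|_{\tilde{\mathcal E}_{s-t}})=-(\cdot|_{\tilde{\mathcal E}_{s-t}})\proj{1}-\tfrac{\sigma^2}{2}(\cdot|_{\tilde{\mathcal E}_{s-t}})\proj{22}$, followed by the observation that $\psi_t$ stays in a fixed truncated algebra (giving $\mathcal I'$ membership and \eqref{eq: cond_martingality}) and invocation of Lemma~\ref{prop: sufficient_cond_admissible_trading_speed}. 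The paper's argument is the same in structure, only slightly more explicit in writing out $\psi_t\proj{1}$, $\psi_t\proj{22}$, $\dot{\hat\xi}_t$ and in the moment bound for \eqref{eq: cond_martingality}; your justification of \eqref{eq: fixed_point_assumption} is also correct, since with $\nu=0$ the pairing $\langle\psi_t\proj{3},\hat{\mathbb Z}_s^\theta\otimes\word{3}\otimes\hat{\mathbb Z}_{s,t}^\phi\rangle$ only sees the $\word{3}$-coefficient $2f(t)$ of $\psi_t\proj{3}$ (as $\hat\xi_t$ contains no letter $\word{3}$).
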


\begin{proof} 
First, let us define, for $t<s<T$, $\bell^1_{t,s}:=\left(\hat{\xi}_s\proj{2}\right)\shupow{2}\bigg{|}_{\tilde{\mathcal{E}}_{s-t}}$ and $\bell^2_{t,s}:= \left(\tilde{\xi}_s-\hat{\xi}_s\right)\shupow{2}\bigg{|}_{\tilde{\mathcal{E}}_{s-t}}$. Then, we observe that
\begin{align*}
    &\dot{\psi}_t=\dot{f}(t) \left[\word{3}+\emptyword X_0 -\hat{\xi}_t \right]\shupow{2}- 2 f(t) \left[\word{3}+\emptyword X_0 -\hat{\xi}_t \right]\shuprod \dot{\hat{\xi}}_t-\sigma^2{f}(t)\left(\hat{\xi}_t\proj{2}\right)\shupow{2}\\
    &+\sigma^2\int_t^T f(s)\dot{\bell}^1_{t,s}~ds-\frac{\lambda \sigma^2}{2}\left(\tilde{\xi}_t-\hat{\xi}_t\right)\shupow{2}+\frac{\lambda \sigma^2}{2} \int_t^T\dot{\bell}^2_{t,s}~ds\\
    &\psi_t\proj{1} =-2 f(t) \left[\word{3}+\emptyword X_0 -\hat{\xi}_t \right]\shuprod \hat{\xi}_t\proj{1}+\sigma^2 \int_t^T f(s) {\bell}^1_{t,s}\proj{1} ds+\frac{\lambda \sigma^2}{2} \int_t^T {\bell}^2_{t,s}\proj{1} ds\\
    &\psi_t\proj{22} =-2 f(t) \left[\word{3}+\emptyword X_0 -\hat{\xi}_t \right]\shuprod \hat{\xi}_t\proj{22} + 2 f(t) (\hat{\xi}_t\proj{2})\shupow{2} +\sigma^2 \int_t^T f(s) {\bell}^1_{t,s}\proj{22} ds+\frac{\lambda\sigma^2}{2} \int_t^T {\bell}^2_{t,s}\proj{22} ds   \\
    &\psi_t\proj{3} = 2 f(t) \left[\word{3}+\emptyword X_0 -\hat{\xi}_t \right],
\end{align*}
with 
\begin{equation*}
    \dot{\hat{\xi}}_t = -K(t,t) \tilde{\xi}_t + \int_t^T \left(\dot{K}(t,s) \tilde{\xi}_s|_{\tilde{\mathcal{E}}_{s-t}} + K(t,s) \dot{\tilde{\xi}}_s|_{\tilde{\mathcal{E}}_{s,t}} \right) ds. 
\end{equation*}
Using similar arguments as in the proof of Lemma \ref{lem: ode_xi_t}, we have that for a given $\bell\in T(\mathbb{R}^3)$ then, for $t<s$,
\begin{equation}
    \dot{\bell|}_{\tilde{\mathcal{E}}_{s-t}} = -\left(\dot{\bell|}_{\tilde{\mathcal{E}}_{s-t}}\right)\proj{1} - \frac{1}{2} \sigma^2 \left(\dot{\bell|}_{\tilde{\mathcal{E}}_{s-t}}\right)\proj{22}.
\end{equation}
Therefore, we observe that 
\begin{equation}\label{eq: first_part_riccati}
\begin{aligned}
     -\psi_t\proj{1}-\frac{1}{2} \sigma^2 \psi_t\proj{22}=&- 2 f(t) \left[\word{3}+\emptyword X_0 -\hat{\xi}_t \right]\shuprod \int_t^T K(t,s) \dot{\tilde{\xi}}_s|_{\tilde{\mathcal{E}}_{s,t}} ds\\
     &-\sigma^2{f}(t)\left(\hat{\xi}_t\proj{2}\right)\shupow{2}+\sigma^2 \int_t^T f(s) \dot{\bell}^1_{t,s} ds+\frac{\lambda\sigma^2}{2} \int_t^T \dot{\bell}^2_{t,s} ds. 
    \end{aligned}
\end{equation}

Moreover, we also have that 
\begin{align*}
&\dot{f}(t) = \frac{1}{\eta} f(t)^2 -\frac{\lambda}{2} \sigma^2,\quad \dot{K}(t,s) = K(t,s) K(t,t),
\end{align*}
with $K(t,t) = \frac{c^2}{{{\eta}}f(t)} = {\frac{\lambda \sigma^2}{2}} \frac{1}{f(t)}$.
Thus, we infer that 
\begin{equation}\label{eq: second_part_riccati}
\begin{aligned}
    &\dot{f}(t) \left[\word{3}+\emptyword X_0 -\hat{\xi}_t \right]\shupow{2} - 2 f(t) \left[\word{3}+\emptyword X_0 -\hat{\xi}_t \right]\shuprod \left[-K(t,t) \tilde{\xi}_t + \int_t^T \dot{K}(t,s) \tilde{\xi}_s|_{\tilde{\mathcal{E}}_{s-t}}ds\right]-\frac{\lambda \sigma^2}{2}\left(\tilde{\xi}_t-\hat{\xi}_t\right)\shupow{2}\\
    &=\dot{f}(t) \left[\word{3}+\emptyword X_0 -\hat{\xi}_t \right]\shupow{2} -\frac{\lambda}{2} \sigma^2 \left[(\word{3}+\emptyword X_0-\tilde{\xi}_t)\shupow{2}-(\word{3}+\emptyword X_0-\hat{\xi}_t)\shupow{2}\right]\\
    &=-\frac{\lambda}{2} \sigma^2 (\word{3}+\emptyword X_0-\tilde{\xi}_t)\shupow{2}+\frac{1}{4\eta} (\bpsi_t\proj{3})\shupow{2}.
\end{aligned}
\end{equation}
Putting together \eqref{eq: first_part_riccati} and \eqref{eq: second_part_riccati}, we obtain that $\psi_t$ satisfies the system of Riccati equations  \eqref{eq:infinite_dim_riccati_psi_t}. As $K(t,s)$ is bounded for $t<s<T$, $\hat{\xi}_t\in T(\mathbb{R}^3)$ for all $t<T$, then, since $f(t)$ is also bounded for all $t<T$, $\psi_t$ is well defined, has finitely many non-zero terms and belongs to $\mathcal{I}'(\hat{\mathbb{Z}}^\theta)$. We also observe that, as $\hat{\xi}_t\in T(\mathbb{R}^3)$ and $f(t)$ is bounded, there exists a positive bounded time-dependent function $C(t)$ such that, for $\theta\in \mathcal{A}$,
\begin{align*}
    \E \left(\int_0^T \langle\psi_t\proj{2}, \hat{\mathbb{Z}}_t^\theta \rangle^2 dt\right)&= \E \left(\int_0^T \langle2 f(t) [\word{3}+\emptyword X_0 -\hat{\xi}_t ]\shuprod \hat{\xi}_t\proj{2}-\sigma^2  \int_t^T f(s) {\bell}^1_{t,s}\proj{2} ds-\frac{\lambda \sigma^2}{2} \int_t^T {\bell}^2_{t,s}\proj{2} ds, \hat{\mathbb{Z}}_t^\theta \rangle^2 dt\right)\\
    &\leq \E \left(\int_0^T C(t) \left(1+|X_t^\theta|^{2} + |S_t|^{4\tilde{M}} \right)dt\right)<\infty,
\end{align*}
and hence \eqref{eq: cond_martingality} is satisfied. Furthermore, using the explicit form of $\psi_t$, the feedback equation \eqref{eq:hedgingoptimal_speed} reduces
\begin{align}\label{eq: reduced_form_feedback_ex1}
    \theta_t^* &= -\frac{f(t)}{\eta} \langle \word{3}+\emptyword X_0-\hat{\xi}_t,\hat{\mathbb{Z}}_t^{\theta^*}\rangle = \frac{f(t)}{\eta} \left(\langle \hat{\xi}_t,\hat{\mathbb{Z}}_t^{\theta^*}\rangle-X_t^{\theta^*}\right). 
\end{align}
It remains to show that there exists $\theta^*$ satisfying \eqref{eq: reduced_form_feedback_ex1} and belonging to $\mathcal{A}$. For that, it is sufficient to show that the assumption \eqref{eq: fixed_point_assumption} of Lemma \ref{prop: sufficient_cond_admissible_trading_speed} is satisfied. But as we have that 
\begin{align*}
    \psi_t\proj{3} = 2 f(t) \left[\word{3}+\emptyword X_0 -\hat{\xi}_t \right],
\end{align*}
with $f(t)$ bounded, then we observe that 
\begin{align*}
    \bigg{|}\frac{1}{2\eta}\left\langle \bpsi_t\proj{3}, \hat{\mathbb{Z}}_s^{\theta} \otimes \word{3} \otimes \hat{\mathbb{Z}}_{s,t}^{\phi}\right\rangle \bigg{|} &=\frac{1}{\eta}|f(t)|\leq \frac{1}{\eta}\sup_{t\leq T}|f(t)|,
\end{align*}
and thus the assumption \eqref{eq: fixed_point_assumption} is satisfied.
\end{proof}

\begin{prop}[European quadratic payoff]\label{prop: explicit_sol_eu_quad}  Let us assume that $\mu=0$, $\nu<\sqrt{2\eta\lambda\sigma^2}$, and $\xi = \Gamma \word{22}$ with $\Gamma\in\mathbb{R}$ such that $|\nu \Gamma|<1$. Moreover, let us consider the time-dependent function $f(t)$ given by 
\begin{equation}\label{eq: f_t_time_dep_gamma_explicit}
    f(t)=\frac{1}{2(1-\nu \Gamma)} \left[\nu+c \tanh\left(\frac{c (1-\nu \Gamma)}{2\eta}(T-t)-\frac{1}{2} \ln\left(\frac{c+\nu}{c-\nu}\right)\right)\right],
\end{equation}
with $c:=\sqrt{2\eta\lambda\sigma^2}$. Then, $\psi_t$ defined as, for $t<T$,
    \begin{equation}
        \psi_t =  f(t)\left[\word{3}+\emptyword X_0 -\tilde{\xi}_t \right]\shupow{2}+\sigma^2 \emptyword \int_t^T f(s) \Gamma^2 ds
    \end{equation}

    is a solution to the infinite-dimensional system of Riccati equation \eqref{eq:infinite_dim_riccati_psi_t}, belongs to $\mathcal{I}'(\hat{\mathbb{Z}}^\theta)$, and satisfies \eqref{eq: cond_martingality}. Moreover, there exists $\theta^*\in \mathcal{A}$ solution to
    \begin{equation}
        \theta_t^*=\frac{\nu-2f(t)(1-\nu \Gamma)}{2\eta} \left(X_t^{\theta^*}-\langle {\xi_t\proj{2}},\hat{\mathbb{P}}_t^{\theta^*}\rangle \right),
    \end{equation}
    and $\theta^*$ is the optimal control.
\end{prop}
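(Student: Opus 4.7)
}

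The plan is to exploit the simple algebraic form of $\xi=\Gamma\word{22}$ to reduce the infinite-dimensional Riccati system \eqref{eq:infinite_dim_riccati_psi_t} to a single scalar Riccati ODE for $f(t)$, then verify that the explicit $\tanh$-formula \eqref{eq: f_t_time_dep_gamma_explicit} is its solution. First, I would compute $\xi_t=\xi|_{\mathcal{E}_{T-t}}$ explicitly: since $\mathcal{E}_t^{\emptyword}=1$, $\mathcal{E}_t^{\word{2}}=0$, $\mathcal{E}_t^{\word{22}}=\tfrac{\sigma^2}{2}t$, we get $\xi_t=\Gamma\word{22}+\Gamma\tfrac{\sigma^2}{2}(T-t)\emptyword$, hence $\xi_t\proj{2}=\Gamma\word{2}$ and therefore $\tilde{\xi}_t=\Gamma\word{2}$ is time-independent. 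Setting $L:=\word{3}+\emptyword X_0-\Gamma\word{2}$, the ansatz reads $\bpsi_t=f(t)\,L\shupow{2}+\sigma^2\Gamma^2\emptyword\int_t^T f(s)ds$.

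The second step is to compute the projections of $\bpsi_t$ using the Leibniz-type rule $(\bell\shuprod\bell)\proj{i}=2(\bell\proj{i})\shuprod\bell$. Since $L\proj{1}=0$, $L\proj{2}=-\Gamma\emptyword$, $L\proj{3}=\emptyword$, this gives
\[
\bpsi_t\proj{1}=0,\quad \bpsi_t\proj{22}=2\Gamma^2 f(t)\emptyword,\quad \bpsi_t\proj{2}=-2\Gamma f(t)L,\quad \bpsi_t\proj{3}=2f(t)L.
\]
Consequently $\nu\bpsi_t\proj{2}+\bpsi_t\proj{3}=2f(t)(1-\nu\Gamma)L$, so that $\nu(\word{3}+\emptyword X_0-\tilde{\xi}_t)-(\nu\bpsi_t\proj{2}+\bpsi_t\proj{3})=[\nu-2f(t)(1-\nu\Gamma)]L$. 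Plugging these into \eqref{eq:infinite_dim_riccati_psi_t} with $\mu=0$ and matching the coefficient of $L\shupow{2}$ and of $\emptyword$ separately, the whole tensor-valued Riccati system collapses to the scalar ODE
\[
\dot{f}(t)=-\tfrac{\lambda\sigma^2}{2}+\tfrac{1}{4\eta}\bigl[\nu-2f(t)(1-\nu\Gamma)\bigr]^2,\qquad f(T)=0,
\]
together with a trivial consistency check for the $\emptyword$ part arising from $-\tfrac12\sigma^2\bpsi_t\proj{22}=-\sigma^2\Gamma^2 f(t)\emptyword$.

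The third step is to verify that \eqref{eq: f_t_time_dep_gamma_explicit} solves this scalar Riccati. Writing $h(t):=2f(t)(1-\nu\Gamma)-\nu$, the equation becomes $\dot h=(1-\nu\Gamma)\bigl(-\lambda\sigma^2+h^2/(2\eta)\bigr)$, whose equilibria are $\pm c$ with $c=\sqrt{2\eta\lambda\sigma^2}$. The assumption $\nu<c$ ensures $\tfrac12\ln\tfrac{c+\nu}{c-\nu}$ is well-defined and the standard $\tanh$ solution $h(t)=c\tanh\bigl(\tfrac{c(1-\nu\Gamma)}{2\eta}(T-t)-\tfrac12\ln\tfrac{c+\nu}{c-\nu}\bigr)$ follows by direct substitution using $\mathrm{sech}^2=1-\tanh^2$; the identity $\tanh(-\tfrac12\ln\tfrac{c+\nu}{c-\nu})=-\nu/c$ gives $h(T)=-\nu$, i.e.\ $f(T)=0$. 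The condition $|\nu\Gamma|<1$ prevents division by zero in the change of variables.

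Finally, admissibility and optimality are short: since $\bpsi_t$ is a \emph{finite} linear combination of signature elements (elements of $T(\mathbb{R}^3)$), it automatically belongs to $\mathcal{I}'(\hat{\mathbb{Z}}^\theta)$, and the bound on $\langle\bpsi_t\proj{2},\hat{\mathbb{Z}}_t^\theta\rangle^2$ in \eqref{eq: cond_martingality} follows from the uniform boundedness of $f$ on $[0,T]$ together with the moment bounds implied by $\theta\in\mathcal{A}$. To invoke Lemma~\ref{prop: sufficient_cond_admissible_trading_speed}, I would observe that $\nu(\word{3}+\emptyword X_0-\tilde{\xi}_t)-(\nu\bpsi_t\proj{2}+\bpsi_t\proj{3})=h(t)L$ is a fixed element of $T(\mathbb{R}^3)$ scaled by the bounded function $h$, so the pairing with $\hat{\mathbb{Z}}_s^\theta\otimes(\nu\word{2}+\word{3})\otimes\hat{\mathbb{Z}}_{s,t}^\phi$ reduces to a finite combination of constants, giving the uniform bound \eqref{eq: fixed_point_assumption}. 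The feedback equation \eqref{eq:hedgingoptimal_speed} then simplifies to $\theta_t^*=\tfrac{\nu-2f(t)(1-\nu\Gamma)}{2\eta}\bigl(X_t^{\theta^*}-\langle\xi_t\proj{2},\hat{\mathbb{P}}_t^{\theta^*}\rangle\bigr)$. The main obstacle, and the step requiring most care, is the bookkeeping that ensures the tensor-valued Riccati system decouples cleanly into the scalar ODE for $f$ plus the trivial ODE for the $\emptyword$ coefficient; once this is done, the rest is standard.
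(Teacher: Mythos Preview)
Your proposal is correct and follows essentially the same route as the paper: compute $\tilde{\xi}_t=\Gamma\word{2}$, evaluate the projections $\bpsi_t\proj{1},\bpsi_t\proj{2},\bpsi_t\proj{22},\bpsi_t\proj{3}$ via the shuffle-Leibniz rule, collapse the tensor Riccati system to the scalar ODE $\dot f=-\tfrac{\lambda\sigma^2}{2}+\tfrac{1}{4\eta}(\nu-2f(1-\nu\Gamma))^2$, verify the $\tanh$ formula, and finish by checking \eqref{eq: cond_martingality} and the fixed-point bound \eqref{eq: fixed_point_assumption}. Your substitution $h(t)=2f(t)(1-\nu\Gamma)-\nu$ and the explicit boundary check $\tanh(-\tfrac12\ln\tfrac{c+\nu}{c-\nu})=-\nu/c$ are in fact slightly more detailed than what the paper records, and for the final step the paper computes the pairing in \eqref{eq: fixed_point_assumption} exactly as $\tfrac{1}{2\eta}\lvert(\nu-2f(t)(1-\nu\Gamma))(1-\nu\Gamma)\rvert$, which is precisely the ``finite combination of constants'' you allude to.
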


\begin{proof}
    First, we observe that, since $\xi_t\proj{2}=\Gamma \word{2}$, then $\dot{\tilde{\xi}}_t=0$, ${\tilde{\xi}}_t\proj{2}=\Gamma \emptyword$ and ${\tilde{\xi}}_t\proj{1}={\tilde{\xi}}_t\proj{22}=0$. In this case,
    \begin{align*}
    &\dot{\psi}_t=\dot{f}(t) \left[\word{3}+\emptyword X_0 -\tilde{\xi}_t \right]\shupow{2}-\sigma^2{f}(t)\Gamma^2 \emptyword\\
    &\psi_t\proj{1} = 0 \\ 
    &\psi_t\proj{2} =-2 \Gamma f(t) \left[\word{3}+\emptyword X_0 -\tilde{\xi}_t \right]\\
    &\psi_t\proj{22} =2 f(t) \Gamma^2 \emptyword   \\
    &\psi_t\proj{3} = 2 f(t) \left[\word{3}+\emptyword X_0 -\tilde{\xi}_t \right]. 
\end{align*}
We deduce that 
\begin{equation}
    -\sigma^2{f}(t)\Gamma^2 \emptyword = -\bpsi_t\proj{1}-\frac{1}{2} \sigma^2 \bpsi_t\proj{22}.
\end{equation}
Moreover, we also observe that 
\begin{align*}
    \frac{1}{4 \eta} \left[\nu (\word{3}+\emptyword X_0-\tilde{\xi}_t)-(\nu \bpsi_t\proj{2}+\bpsi_t\proj{3})\right]\shupow{2}=\frac{1}{4 \eta} (\nu -2 f(t)(1-\Gamma\nu))^2 (\word{3}+\emptyword X_0-\tilde{\xi}_t)\shupow{2}.
\end{align*}

Thus, using the form of $f(t)$ given by \eqref{eq: f_t_time_dep_gamma_explicit}, we observe that 
\begin{equation*}
    \begin{aligned}
    &\dot{f}(t)=\frac{1}{4\eta} \left(\nu -2f(t) (1-\nu\Gamma\right))^2-\frac{\lambda}{2} \sigma^2,~t<T \text{ and } f(T)=0,
    \end{aligned}
\end{equation*}
and we infer that
\begin{equation*}
    \dot{f}(t) \left[\word{3}+\emptyword X_0-\tilde{\xi}_t\right] \shupow{2} = -\frac{\lambda}{2} \sigma^2 (\word{3}+\emptyword X_0-\tilde{\xi}_t)\shupow{2}+\frac{1}{4\eta} \left[\nu (\word{3}+\emptyword X_0-\tilde{\xi}_t)-(\nu \bpsi_t\proj{2}+\bpsi_t\proj{3})\right]\shupow{2}. 
\end{equation*}

Putting all together, we deduce that $\psi_t$ satisfies the system of Riccati equations  \eqref{eq:infinite_dim_riccati_psi_t}. Since $f(t)$ is bounded for all $t<T$, $\psi_t$ is well defined, has finitely many non-zero terms and belongs to $\mathcal{I}'(\hat{\mathbb{Z}}^\theta)$.
We also have that \eqref{eq: cond_martingality} is satisfied since, for $\theta\in \mathcal{A}$,
\begin{align*}
    \E \left(\int_0^T \langle\psi_t\proj{2}, \hat{\mathbb{Z}}_t^\theta \rangle^2 dt\right)=4 \Gamma^2 \E \left(\int_0^T f(t)^2\left(X_t^\theta-\Gamma P_t^\theta \right)^2 dt\right)<\infty. 
\end{align*}
Furthermore, using the explicit form of $\psi_t$, the feedback equation \eqref{eq:hedgingoptimal_speed} reduces
\begin{align}\label{eq: reduced_form_feedback_ex2}
    \theta_t^* &=\frac{\nu-2f(t)(1-\nu \Gamma)}{2\eta} \left(X_t^{\theta^*}-\langle \xi_t\proj{2},\hat{\mathbb{P}}_t^{\theta^*}\rangle \right).
\end{align}
It remains to show that there exists $\theta^*$ satisfying \eqref{eq: reduced_form_feedback_ex2} and belonging to $\mathcal{A}$. For that, it is sufficient to show that the assumption \eqref{eq: fixed_point_assumption} of Lemma \ref{prop: sufficient_cond_admissible_trading_speed} is satisfied. As 
\begin{align*}
    \nu (\word{3}+\emptyword X_0-\tilde{\xi}_t)-(\nu \bpsi_t\proj{2}+\bpsi_t\proj{3})=\left(\nu-2f(t)(1-\nu \Gamma)\right)\left[\word{3}+\emptyword X_0 -\hat{\xi}_t \right],
\end{align*}
with $f(t)$ bounded, $\tilde{\xi}_t\proj{2}=\Gamma \emptyword$, then we get that
\begin{align*}
    \bigg{|}\frac{1}{2\eta}\left\langle  \left(\nu (\word{3}+\emptyword X_0-\tilde{\xi}_t)-(\nu \bpsi_t\proj{2}+\bpsi_t\proj{3}) \right), \hat{\mathbb{Z}}_s^{\theta} \otimes \left(\nu \word{2}+\word{3} \right) \otimes \hat{\mathbb{Z}}_{s,t}^{\phi}\right\rangle \bigg{|} &=\frac{1}{2\eta} |\left(\nu-2f(t)(1-\nu \Gamma)\right) (1-\nu \Gamma)|\\
    &\leq \frac{1}{2\eta} \sup_{t\leq T} |\left(\nu-2f(t)(1-\nu \Gamma)\right) (1-\nu \Gamma)|,
\end{align*}
and the assumption \eqref{eq: fixed_point_assumption} is satisfied. 
\end{proof}

\section{Numerical illustration} \label{sec: numerical_results}
In this section, we illustrate numerically  the signature-based hedging strategies given by Theorem~\ref{thm: verification_result}  by solving numerically the system of infinite-dimensional Riccati equations \eqref{eq:infinite_dim_riccati_psi_t}. First we consider exact signature payoffs such as European or Asian quadratic options and then we take more general path-dependent payoffs that we approximate by signature payoffs relying on the universal approximation theorem.\\

Moreover, we  compare the signature-based strategy taking into account both temporary and permanent market impact with the optimal hedging strategy that only accounts for temporary market impact  deduced for general path-dependent payoffs in \cite*{bank2017hedging}.\\

For our numerical results, we set the initial inventory $X_0$  equal to the Bachelier $\Delta$. For signature payoffs, this reduces to $X_0 = \left(\xi_0\proj{2} \right)^{\emptyword}$. Moreover, we assume that the initial portfolio value $V_0$ is equal to this indifference price $\pi$ defined in Section \ref{sec: MQV_hedging_impact}. Therefore, the initial portfolio value will vary depending on the choice of market impact parameters $\eta$ and $\nu$, which will impact the final value of the P\&L but not the trading speed. It reduces to $V_0 = \xi_0^{\emptyword}-\psi_0$ for signature payoffs. Finally, recall that, as explained in Remark \ref{rem: num_impl_riccati}, when we solve numerically the Riccati equation \eqref{eq:infinite_dim_riccati_psi_t}, we consider the shuffle product projected on $T^{2\tilde{M}}(\mathbb{R}^3)$ defined as $\widetilde{\shuprod}:(T^{2\tilde{M}}(\mathbb{R}^3))^2\to T^{2\tilde{M}}(\mathbb{R}^3)$, such that $\bpsi_t\in T^{2\tilde{M}}(\mathbb{R}^3)$, $\tilde{M}$ being the truncation order $\xi_t\proj{2}\in T^{\tilde{M}}(\mathbb{R}^2)$. Unless otherwise stated, we consider the following parameters for our numerical results: $$S_0=10,~\mu=0,~\sigma=2,~T=0.2,~\eta=0.001,~\nu=0.001,~ \lambda=0.01.$$

\subsection{Signature payoffs} 
We consider two particular signature payoffs:
\begin{itemize}
    \item European quadratic payoff: $H_T^\theta=N \times\ (P_T^\theta-K)^2$,
    \item Asian quadratic payoff: $H_T^\theta=N \times \left(\frac{1}{T}\int_0^T P_t^\theta dt-K\right)^2$,
\end{itemize}
where $N$ corresponds to a given nominal.\\

First, since the European quadratic payoff has a constant Gamma, an application of Proposition~\ref{prop: explicit_sol_eu_quad}  yields an explicit expression for the optimal trading speed, in the spirit of \cite{almgren2016option}. Thus, this allows us to perform a sanity check on the stability of the numerical solution of the Riccati equation \eqref{eq:infinite_dim_riccati_psi_t}.  
Figure \ref{fig:compa_eu_quadratic_options}  compares a sample path of trading speed obtained for a given European quadratic option. As expected, we see that the optimal strategy obtained by numerically solving the Riccati equation perfectly matches the closed form expression deduced in Proposition \ref{prop: explicit_sol_eu_quad}. This suggests that the numerical resolution of the equation appears to be stable.

\begin{figure}[H]
\begin{centering}
\includegraphics[width=0.47\textwidth]{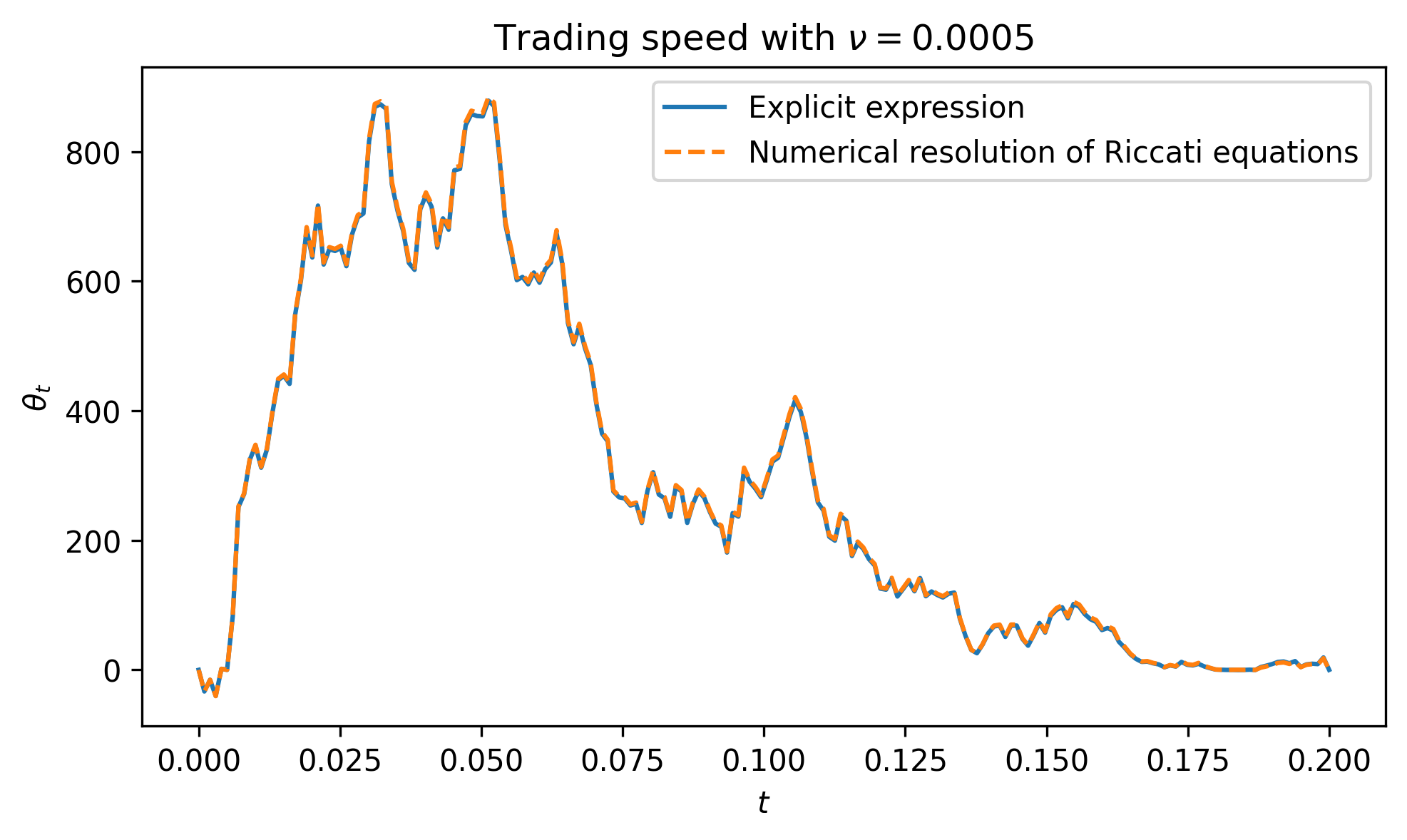} \includegraphics[width=0.47\textwidth]{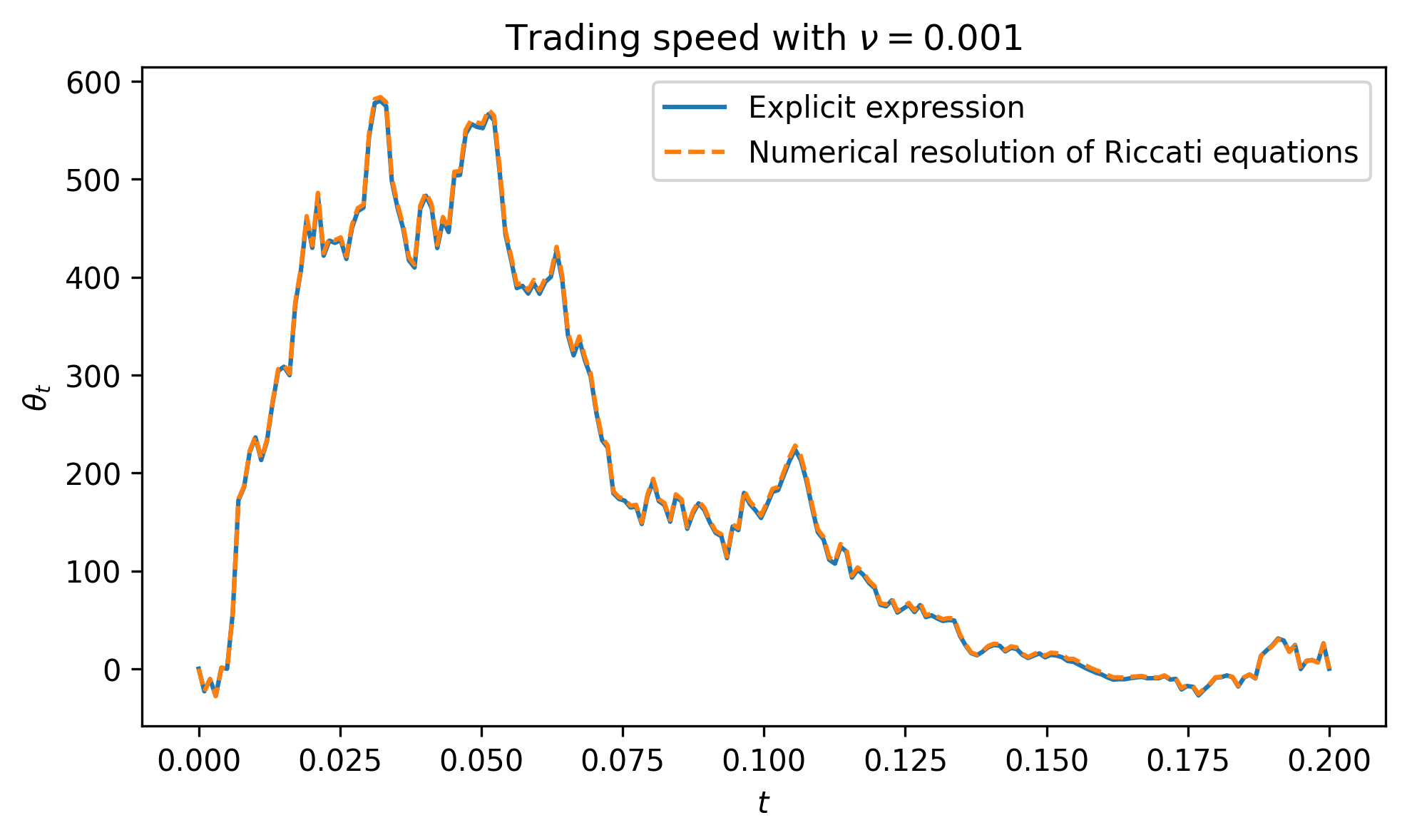}
\par\end{centering}
\caption{\protect\label{fig:compa_eu_quadratic_options} Sample path of trading speed for the European quadratic payoff: explicit expression given by Proposition \ref{prop: explicit_sol_eu_quad} vs numerical resolution of Riccati equation with truncated shuffle product. Parameters: $M=2$, $K=S_0$ and $N=200$.}
\end{figure}

Moving to Asian quadratic options, we no  longer have an explicit expression for the optimal trading speed, except for $\nu=0$, as stated by Proposition \ref{prop: explicit_sol_without_permanent_impact}, where we retrieve the same strategy as in \cite*{bank2017hedging}. Therefore, for $\nu>0$,  numerically solving the Riccati equation becomes essential to determine the optimal strategy.
As a sanity check, as $\nu \to 0$, the solution obtained from the Riccati system is expected to converge to the explicit solution. This is illustrated in Figure~\ref{fig:compa_asian_quadratic_option}, which shows a sample path of the optimal trading speed and inventory for a given Asian quadratic option under different permanent market impact parameters. We clearly observe the convergence phenomenon as $\nu \to 0$, providing an additional validation of our numerical implementation of the Riccati equations. When permanent market impact is introduced ($\nu > 0$), a gap appears whose magnitude increases with the value of the permanent impact parameter $\nu$. As expected, the larger $\nu$ is, the slower and flatter the resulting trading speed becomes.\\

\begin{figure}[H]
\begin{centering}
\includegraphics[width=1\textwidth]{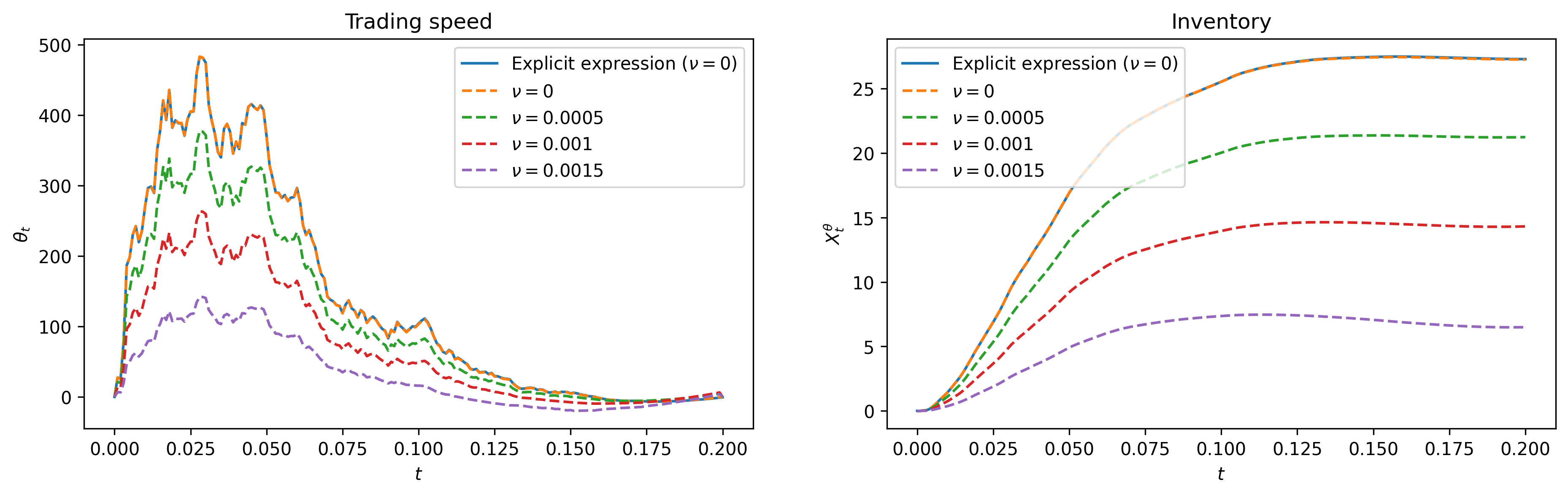}
\par\end{centering}
\caption{\protect\label{fig:compa_asian_quadratic_option}Sample path of trading speed and inventory for Asian quadratic payoff. Solid line:  explicit hedging strategy assuming $\nu=0$ given by Proposition \ref{prop: explicit_sol_without_permanent_impact}; dashed lines: signature-based strategies obtained by numerically solving the Riccati equations with projected shuffle product. Parameters: $M=4$, $K=S_0$, and $N=200$.}
\end{figure}

Moreover, to highlight the benefits of incorporating the permanent market impact information into the trading strategy, we compare the P\&L generated by two traders, when $\nu>0$. The first one is aware of the permanent market impact, incorporates this information into her strategy and follows the optimal trading speed for $\nu>0$. The second one ignores the permanent impact and follows the explicit trading strategy assuming $\nu=0$ given by Proposition \ref{prop: explicit_sol_without_permanent_impact}.  We assume that both traders have the same risk aversion and ask an indifference price to hedge the option.   Therefore, the trader assuming $\nu>0$ asks a price $\pi(\nu>0)$ while the second trader, ignoring the permanent impact, asks $\pi(\nu=0)$. Both traders' strategies are evaluated using the same criterion with $\nu > 0$: the first trader's strategy is optimal, whereas the second trader's is not. Results are shown in Figure \ref{fig:hist_asian_quadratic_options}. We can see that, when there is a permanent market impact, there is an advantage to be gained by incorporating this information into the hedging strategy and following the signature-based strategy assuming $\nu>0$, rather than the strategy assuming only a temporary impact. Obviously, this information gain depends on the size of the permanent impact parameter $\nu$ and will be higher when the permanent impact is higher. The gap between the two P\&Ls can be attributed to two things:
\begin{itemize}
    \item the trader omitting the permanent impact will misprice the option, underestimates the risk she takes (see Table \ref{tab: indif_price_asian_quad}). This will affect the final value of her P\&L,
    \item the trader who overlooks the permanent impact will trade too aggressively (see Figure \ref{fig:compa_asian_quadratic_option}), and push the price further in the wrong direction, which will adversely impact her P\&L over time. The optimal strategy with $\nu>0$, taking into account this impact, attempts to push the price in the right direction (see Remark \ref{rem: interpretation_theta_opt}).  
\end{itemize}

\begin{figure}[H]
\begin{centering}
\includegraphics[width=0.5\textwidth]{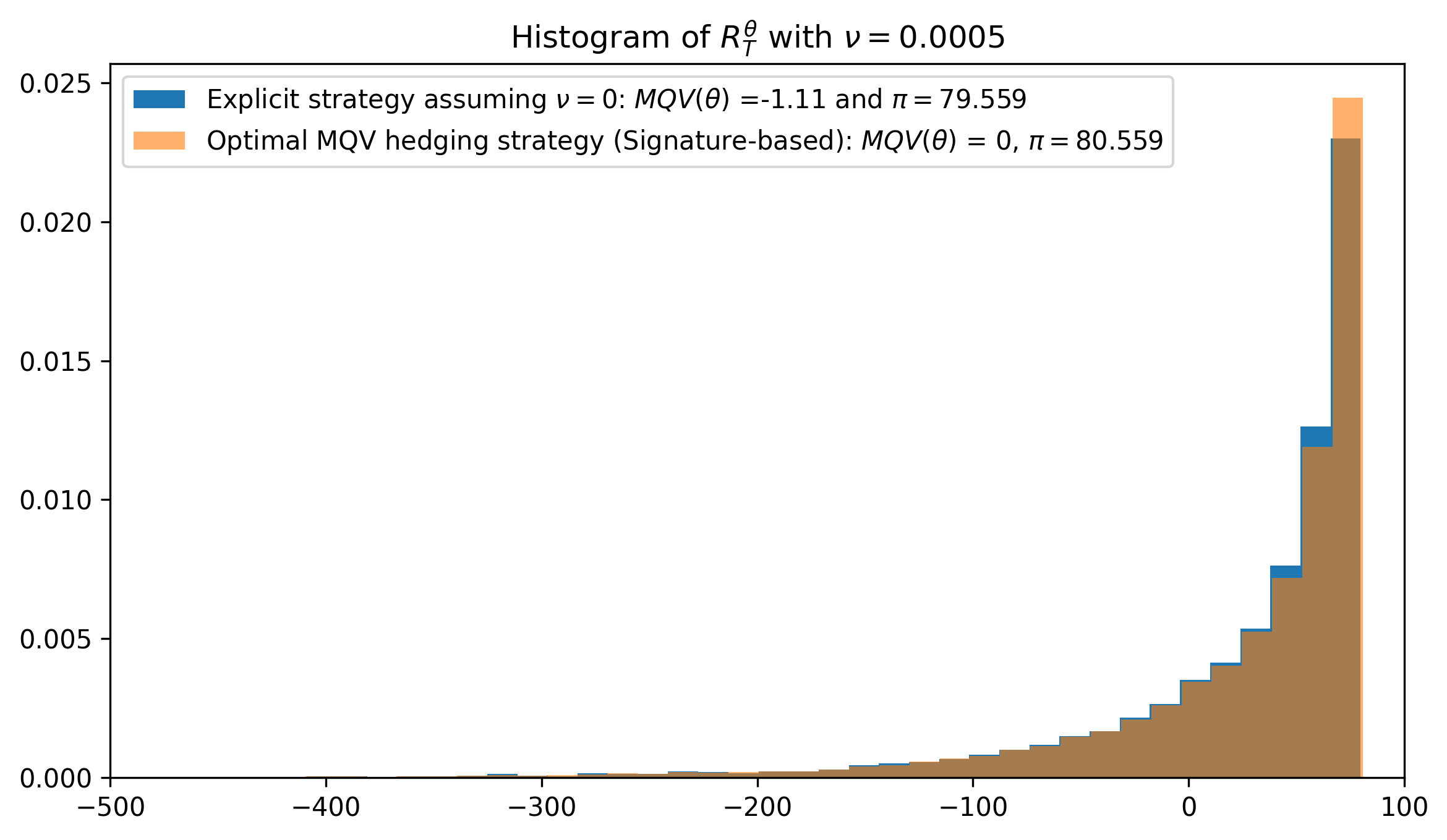}\includegraphics[width=0.5\textwidth]{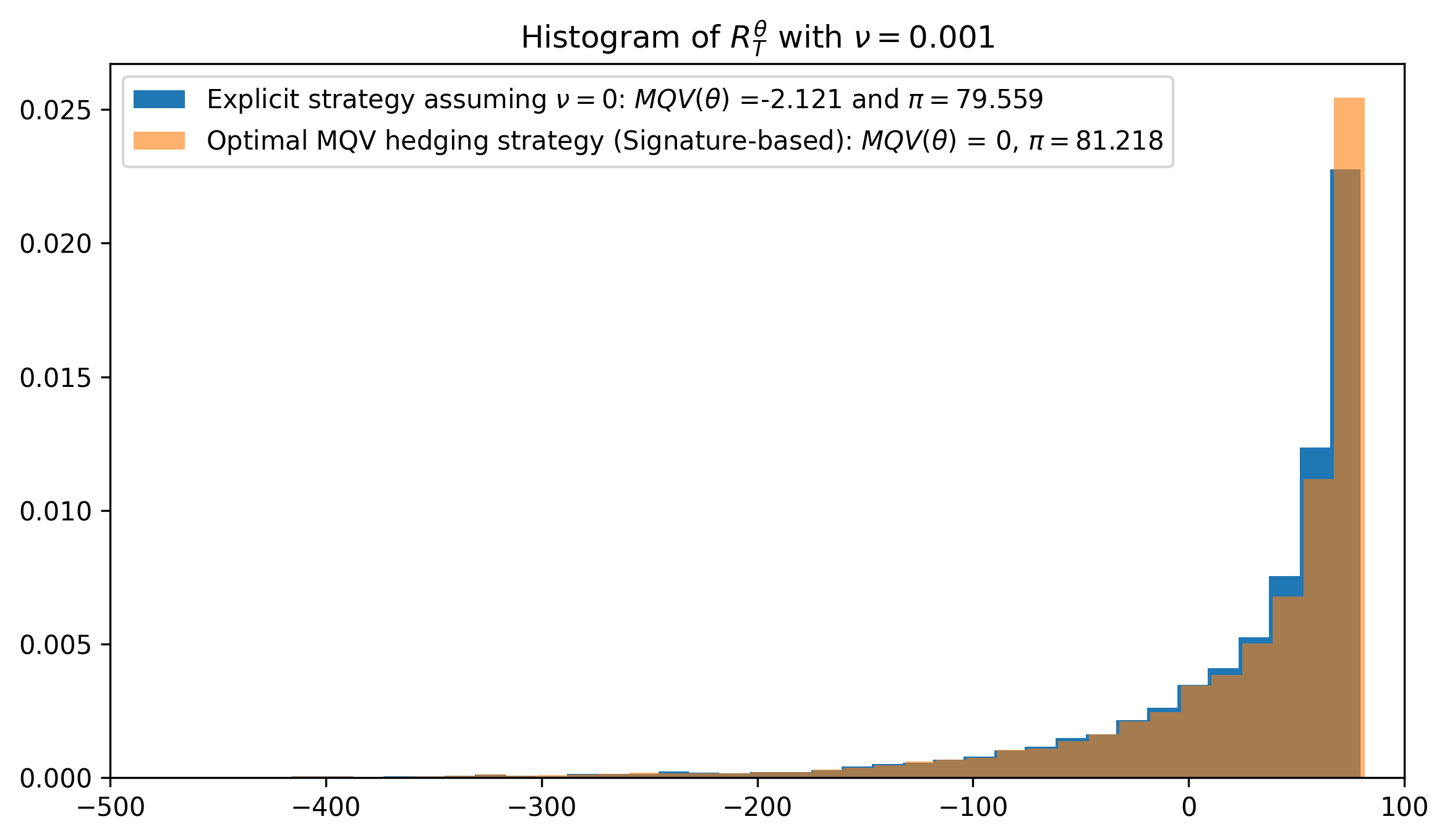} \\
\includegraphics[width=0.5\textwidth]{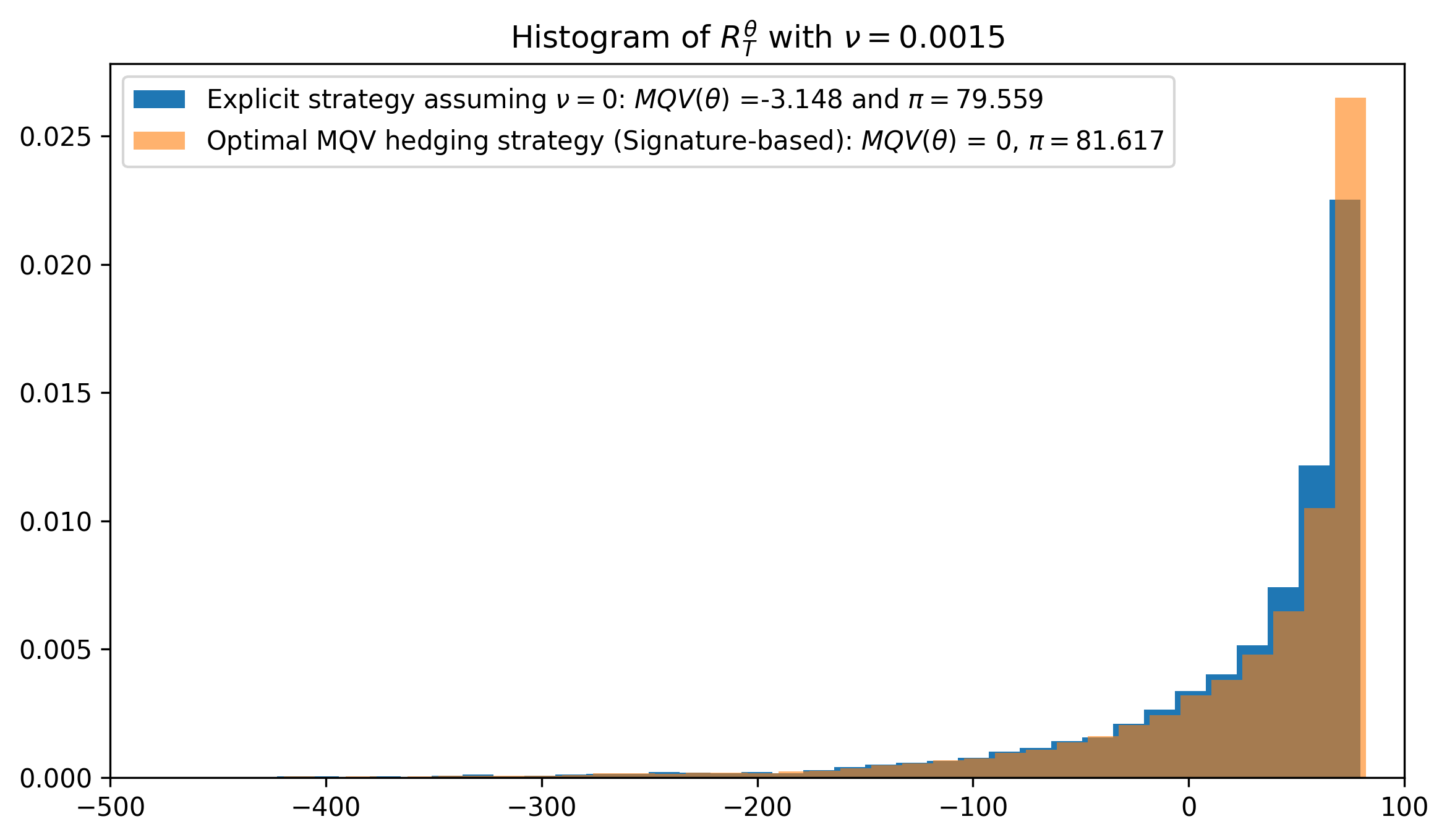} 
\par\end{centering}
\caption{\protect\label{fig:hist_asian_quadratic_options}Histogram of $R_T^\theta$ for Asian quadratic payoff with $\nu>0$: signature-based strategy vs explicit hedging strategy assuming $\nu=0$ given by Proposition \ref{prop: explicit_sol_without_permanent_impact}. Other parameters: $M=4$, $K=S_0$, and $N=200$. $MQV(\theta)=\E\left(R_T^\theta-\frac{\lambda}{2}[R^\theta,R^\theta]_T\right)$.}
\end{figure}

\begin{table}[H]
\centering
\begin{tabular}{|c|c|c|c|c|}
\hline
Bachelier price & $\nu=0$ & $\nu=0.0005$ & $\nu=0.001$& $\nu=0.0015$ \\ \hline
53.333 & 79.559 & 80.559 & 81.218 & 81.617 \\ \hline
\end{tabular}
\caption{\protect\label{tab: indif_price_asian_quad}Indifference prices $\pi$ for Asian quadratic payoff with respect to $\nu$. Other parameters: $M=4$, $K=S_0$, and $N=200$.}
\end{table}

\subsection{Non-signature path-dependent payoffs}\label{sec: hedging_call}
We consider now payoffs that cannot be exactly represented as signature payoffs as it is the case for European, Asian or barrier call options. By relying on the universal linearization property of signatures, see \cite*{lyons2020non, cuchiero2025universal}, we can approximate those path-dependent payoffs by signature payoffs, i.e. by linear combination of time-augmented signature elements. More precisely, for a continuous functional $F$, a continuous semimartingale $(Y_t)_{t\in[0,T]}$ valued in $\mathbb{R}^d$, and a given truncation order $M>0$, there exists $\bell\in T^M(\mathbb{R^{d+1}})$ such that
\begin{equation*}
    F\left((t,Y_t)_{t \leq T}\right)\approx \langle \bell, \hat{\mathbb{Y}}_t \rangle.
\end{equation*}

This means that the semi-explicit signature-based hedging strategy can also be used to hedge general (non-signature) path-dependent payoffs. However, to apply the signature-based hedging strategy, we first have to regress those path-dependent payoffs against truncated signature payoffs to determine $ \bell$, and then solve the infinite-dimensional Riccati equation associated with the signature approximate payoff $\langle \bell, \hat{\mathbb{Y}}_t\rangle $. Algorithm \ref{alg:regression} explicitly states how to get a linear signature representation for a given path-dependent payoff. Furthermore, as we consider non-signature payoffs, the indifference price $\pi$ does not have an explicit form and should also be approximated. In this sense, we approximate the indifference price by $\tilde{\pi}$ such that it satisfies the following equation
\begin{equation}\label{eq: sig_approx_indif_price}
    \E \left(R_T^{\theta^{*}_{sig}} - \frac{\lambda}{2} [R^{\theta^{*}_{sig}},R^{\theta^{*}_{sig}}]_T\right) = 0,
\end{equation}
where 
\begin{equation*}
    R_t^{\theta^{*}_{sig}}=(\tilde{\pi}-X_0S_0)+X_t^{\theta^{*}_{sig}} P_t^{\theta^{*}_{sig}}-\int_0^t \tilde{P}^{\theta^{*}_{sig}}_s {(\theta^{*}_{sig})}_s ds -h\left(\left(s,P_s^{\theta^*_{sig}}\right)_{s\leq t}\right),~t\leq T,
\end{equation*}
with $h\left(\left(s,P_s^{\theta^*_{sig}}\right)_{s\leq t}\right)$ the Bachelier price of the path-dependent payoff at time $t<T$, and $\theta^*_{sig}$ the optimal trading speed associated to the regressed signature payoff.

\begin{algorithm}[H]
        \caption{Regression of path-dependent payoffs against truncated signature payoffs}\label{alg:regression} 
        Consider a payoff of the following form
        $$ H_T= F\left((t,Y_t)_{t \leq T}\right), $$
        with $(Y_t)_{t\in [0,T]}$ a given stochastic process valued in $\mathbb{R}$. 
        
        \textbf{Input}:
        \begin{itemize}
            \item Fix $J > 0$ the number of points in the discretization of $[0, T]$,
            \item Fix $L > 0$ the number of realisations of trajectory of $Y$,
            \item Fix $M \geq 0$ the truncation order of $\bell$,
        \end{itemize}
        
        \textbf{Online}:
        \begin{enumerate}
            \item Generate $L$ realizations of the $Y$, denoted by $Y^{(1)}, \ldots, Y^{(L)}$,
            \item For each realization $l = 1, \ldots, L$, compute $H_T^{(l)}$ and the truncated signature $\hat{\mathbb{Y}}_T^{(l),~\leq M}$ up to order $M$ for $t \in [0, T]$,
            \item
            Regress $(H_T^{(l)})_{1 \leq l \leq L}$ against $(\hat{\mathbb{Y}}_T^{(l),~\leq M})_{1 \leq l \leq L}$ to learn the coefficients of $\bell \in T^M(\mathbb{R}^2)$ that minimize
            $$ \mathcal{L} = \frac{1}{L}\sum_{l=1}^L \left| H^{(l)}_{T} - \left \langle \bell, \hat{\mathbb{Y}}_T^{(l),~\leq M}\right \rangle \right|^2. $$
        \end{enumerate}
    \end{algorithm}

Let us move on to some numerical illustrations. We consider different types of payoffs: 
\begin{itemize}
    \item European call: $H_T^\theta=N \times(P_T^\theta-K)_+$,
    \item Asian call: $H_T^\theta=N \times (\frac{1}{T} \int_0^TP_s^\theta ds-K)_+$,
    \item One-touch max: $H_T^\theta=N \times \mathbb{1}_{\{\max_{t<T} P_t^\theta \geq H\}}$,
    \item Look-back call with floating strike: $H_T^\theta=N \times \left(P_T^\theta-\min_{\{t<T\}} P_t^\theta\right)$,
\end{itemize}
where $N$ corresponds to a given nominal. \\ \\
First, recall that, for general European options, \cite{almgren2016option} characterized the optimal hedging strategy as a solution to the HJB equation given by \eqref{eq: HJB_european_payoff}. Using this information, for an European call payoff, we compare the signature-based strategy associated to the regressed signature payoff with the optimal strategy obtained by numerically solving the corresponding HJB equation given by \eqref{eq: HJB_european_payoff}. This allows us to see whether, in this context of market impact, the signature-based strategies are able to well approximate the optimal strategy when payoffs do not admit an exact signature representation.\\

We consider two types of market: a frictionless market with $\eta=\nu=0$ and a frictional market with $\eta>0$ and $\nu>0$. 
In the frictionless market, see Figure \ref{fig:compa_eu_call_options_frictionless}, the signature-based strategy provides a close approximation of the perfect hedging strategy for small times $t$, but it fails to recover the correct behavior as $t\to T$. The discrepancy is primarily due to the strong nonlinearity of the perfect hedging strategy for call options near maturity. {This observation mitigates the conclusion of \cite*[Section 7.1]{lyons2020non} regarding the ability of signature-based strategies to accurately recover the perfect hedging strategy in a frictionless setting}. By contrast, in the presence of market frictions, see  Figure \ref{fig:compa_eu_call_options}, the signature approximation performs significantly better. In this case, the sample paths of trading speeds and inventories generated by the signature-based approach are much closer to those of the optimal strategy. This improvement can be attributed to two effects: first, market impact naturally smoothens the optimal trading speeds; second, the signature-approximated payoff leads to a smoother (target) perfect hedging strategy compared to the call payoff. Together, these effects mitigate approximation errors and enhance the accuracy of the signature-based strategy. Thus, in the frictional market, the signature-based strategy can be considered a sufficiently accurate approximation of the optimal strategy, showing a striking improvement compared to the frictionless setting (compare the error on the inventory in Figures \ref{fig:compa_eu_call_options_frictionless} and \ref{fig:compa_eu_call_options}). Clearly, in the absence of frictions, approximating the payoff of a call option by a fifth-order polynomial is not sufficiently accurate for pricing and hedging. However, once market frictions are introduced, the fifth-order polynomial approximation becomes significantly more accurate, even though it additionally requires  truncating and numerically solving infinite-dimensional Riccati equations.

%It is worth noting that numerical instabilities may arise when solving the HJB equation, which can cause the computed optimal strategy to deviate from the true optimal strategy. Nevertheless, the primary objective of this comparison is to assess the ability of signature-based strategies to approximate non-signature strategies.

\begin{figure}[H]
\begin{centering}
\includegraphics[width=1\textwidth]{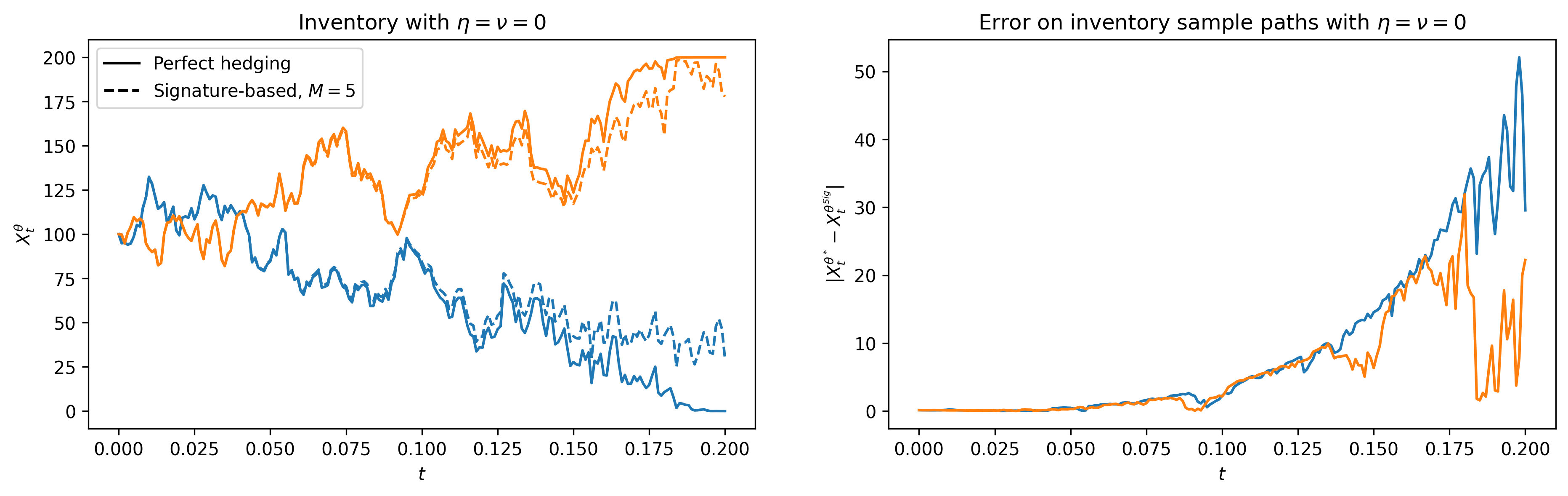}
\par\end{centering}
\caption{\protect\label{fig:compa_eu_call_options_frictionless}Sample paths of inventory $X_t^\theta$ for European call payoff with $\eta=\nu=0$: signature-based approximated strategy vs Bachelier Delta hedging strategy. Parameters: $M=5$, $K=S_0$, and $N=200$.}
\end{figure}

\begin{figure}[H]
\begin{centering}
\includegraphics[width=1\textwidth]{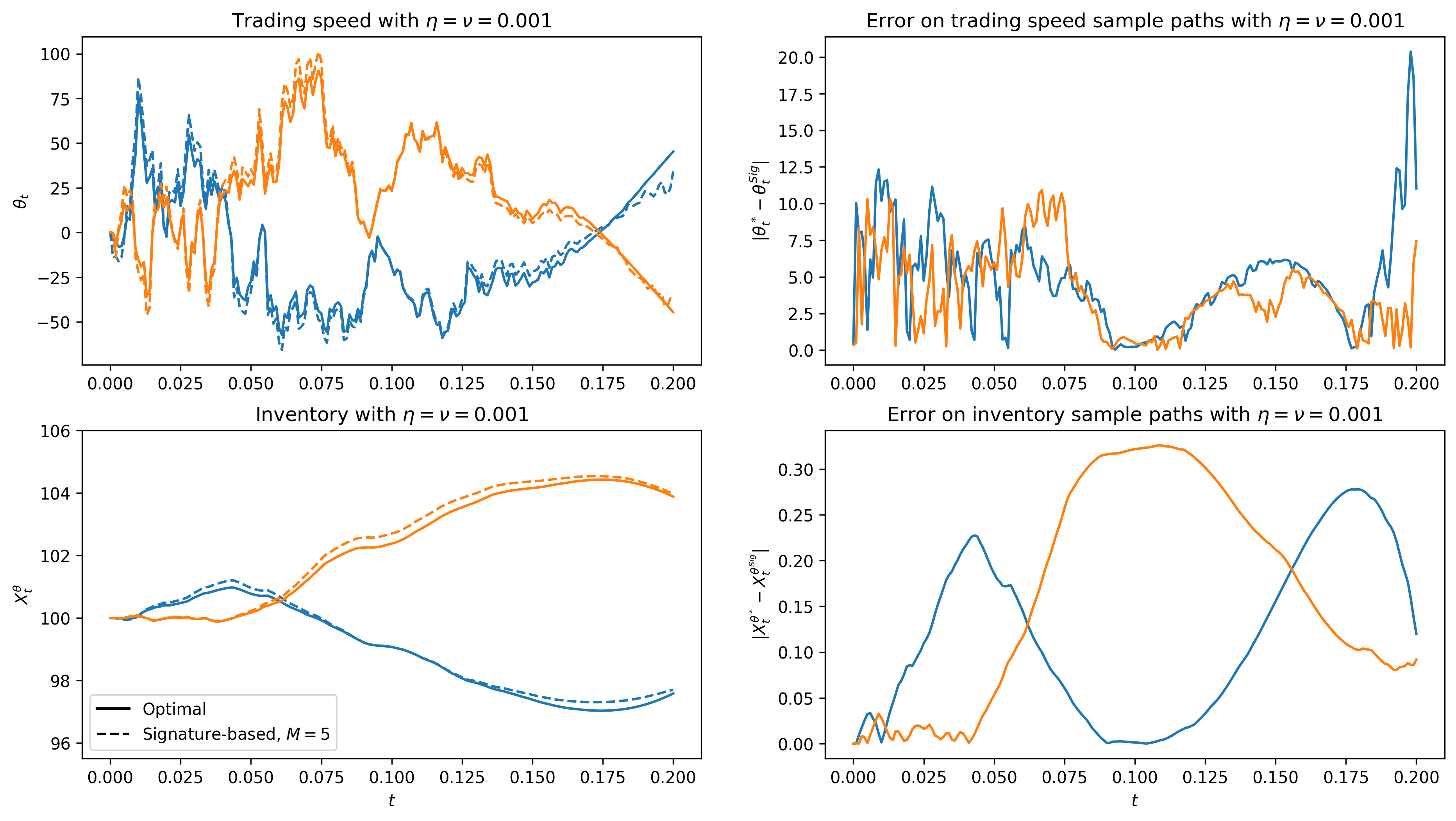}
\par\end{centering}
\caption{\protect\label{fig:compa_eu_call_options}Sample paths of trading speed $\theta_t$ and inventory $X_t^\theta$ for European call payoff with $\eta=\nu=0.001$: signature-based approximated strategy vs optimal hedging strategy by solving the HJB equation \eqref{eq: HJB_european_payoff}. Parameters: $M=5$, $K=S_0$, and $N=200$.}
\end{figure}

We now emphasize the relevance of signature-based strategies to hedge non-signature path-dependent payoffs when $\nu>0$. To this end, we consider different types of options: European call, Asian call, one-touch and look-back call with floating strike. As in the previous subsection, we compare the P\&L generated by two types of traders: the first one is aware of the permanent market impact and follows the signature-based strategy for $\nu>0$, while the second one ignores the permanent impact and  follows the trading strategy of \cite*{bank2017hedging}, which consists of tracking the Bachelier Delta of the non-signature payoffs. Both traders' strategies are evaluated using the same criterion with $\nu > 0$.\\

Figure \ref{fig:hist_call} shows the histograms of the P\&L at maturity generated by two strategies for different types of payoffs. Looking at the histograms, we remark that even though the signature-based strategy is an approximation of the optimal strategy, it seems very relevant for hedging non-signature payoffs in the presence of market frictions. Furthermore, if we compare the mean-quadratic variation functional $MQV(\theta)$, we observe that the signature-based strategy outperforms the \cite*{bank2017hedging} strategy that does not take into account the permanent impact. Thus, we see that there is therefore a substantial advantage in favoring the signature-based hedging strategy. The discrepancy between the P\&Ls can be attributed to the same factors as in the case of Asian quadratic payoff. At inception, the trader that follows \cite*{bank2017hedging} strategy misprices the options regarding the risk she takes when $\nu>0$ (see Table \ref{tab: indif_price_call_payoffs}). Moreover, during the trading period, the trader that does not take into account the permanent impact will trade too aggressively, and this will more adversely impact her P\&L over time.

\begin{figure}[H]
\begin{centering}
\includegraphics[width=0.5\textwidth]{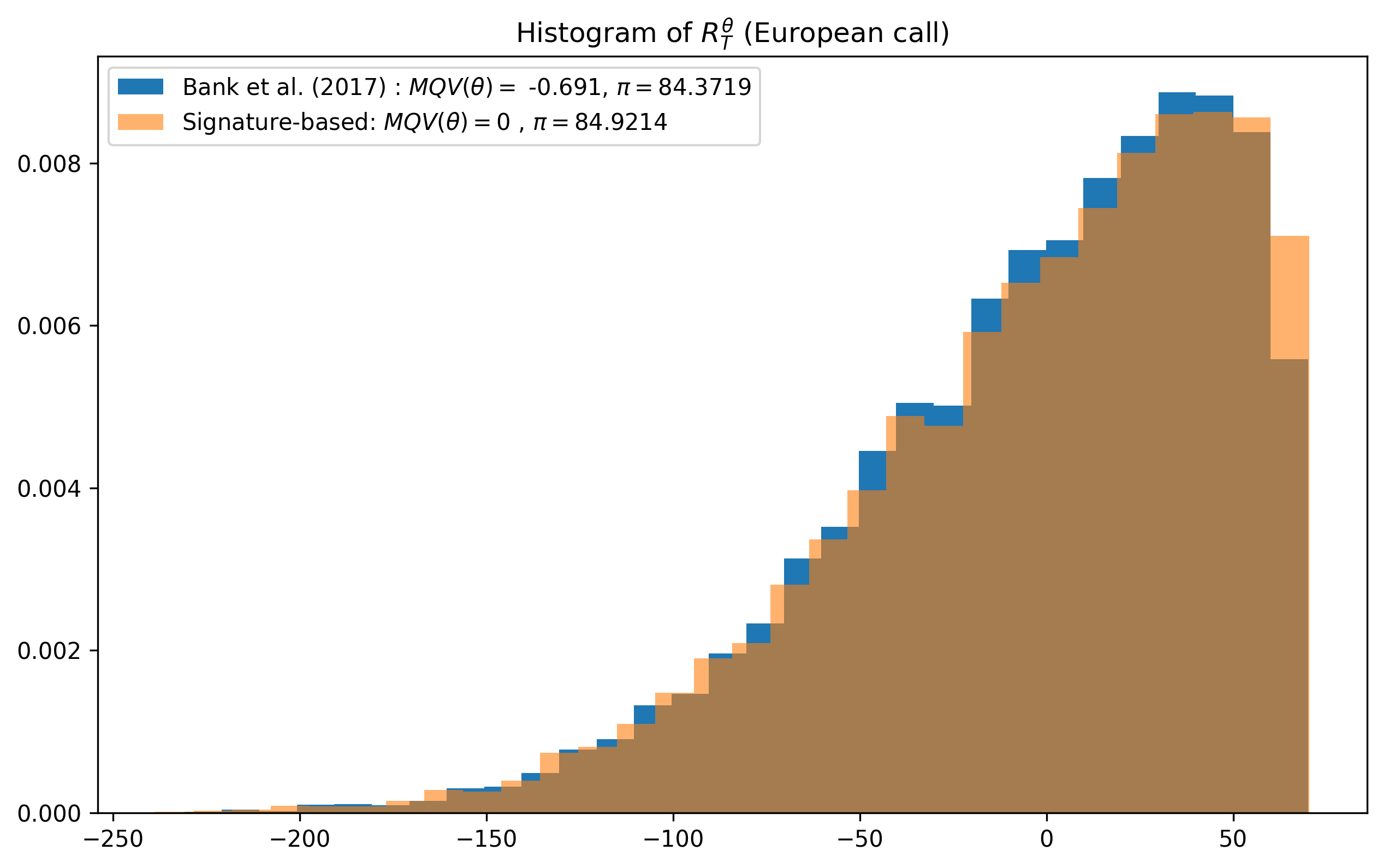}\includegraphics[width=0.5\textwidth]{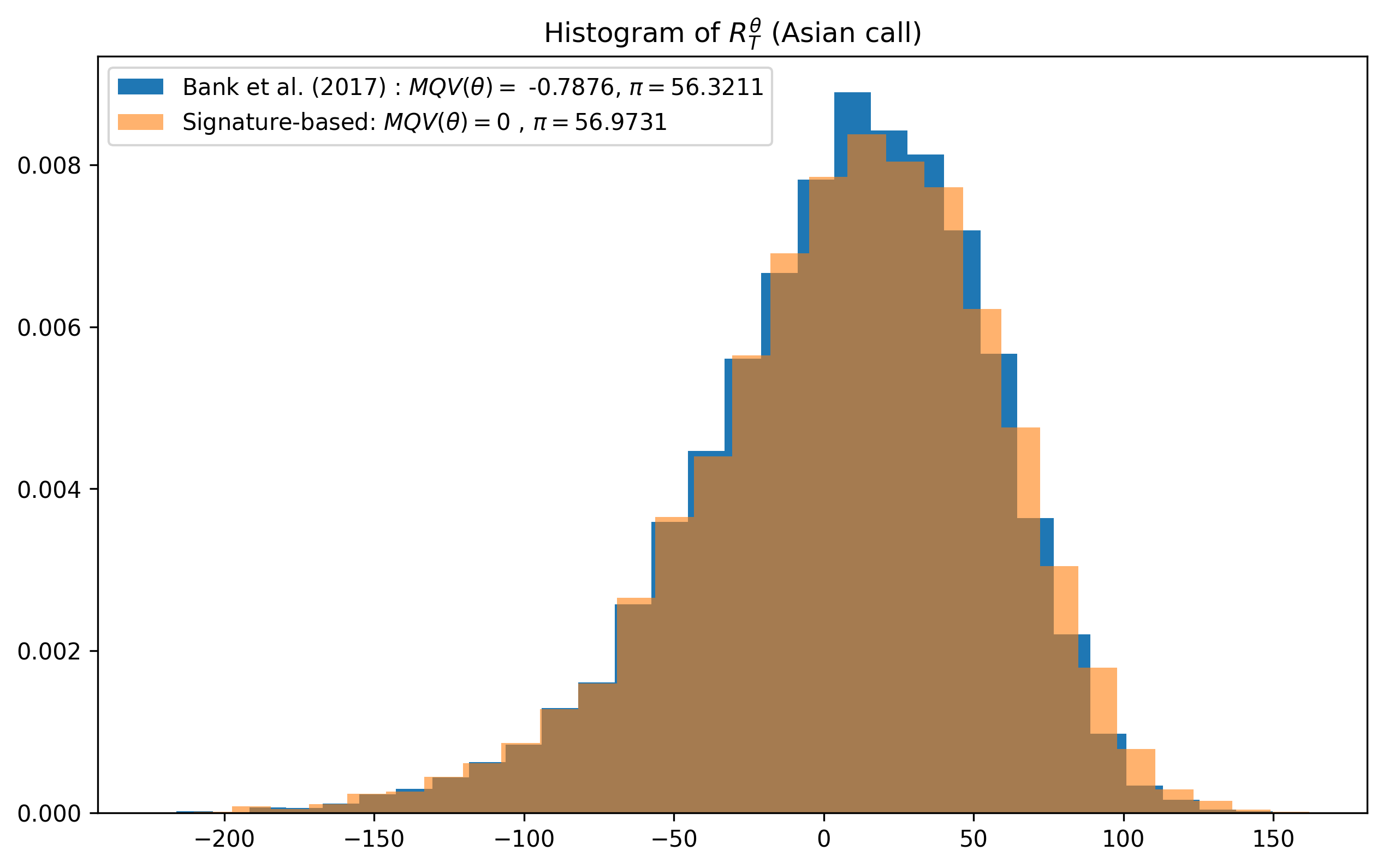} \\
\includegraphics[width=0.505\textwidth]{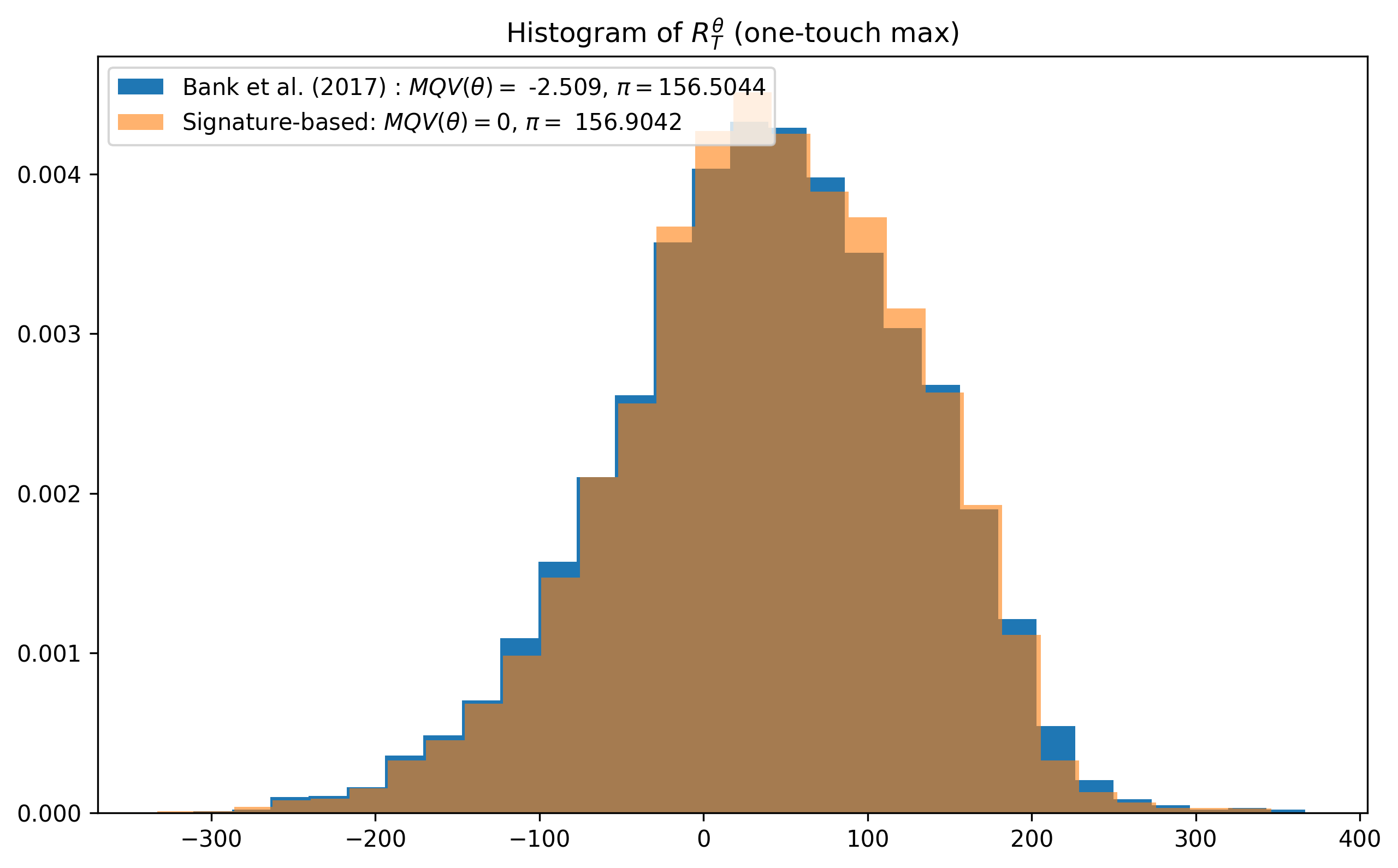}\includegraphics[width=0.5\textwidth]{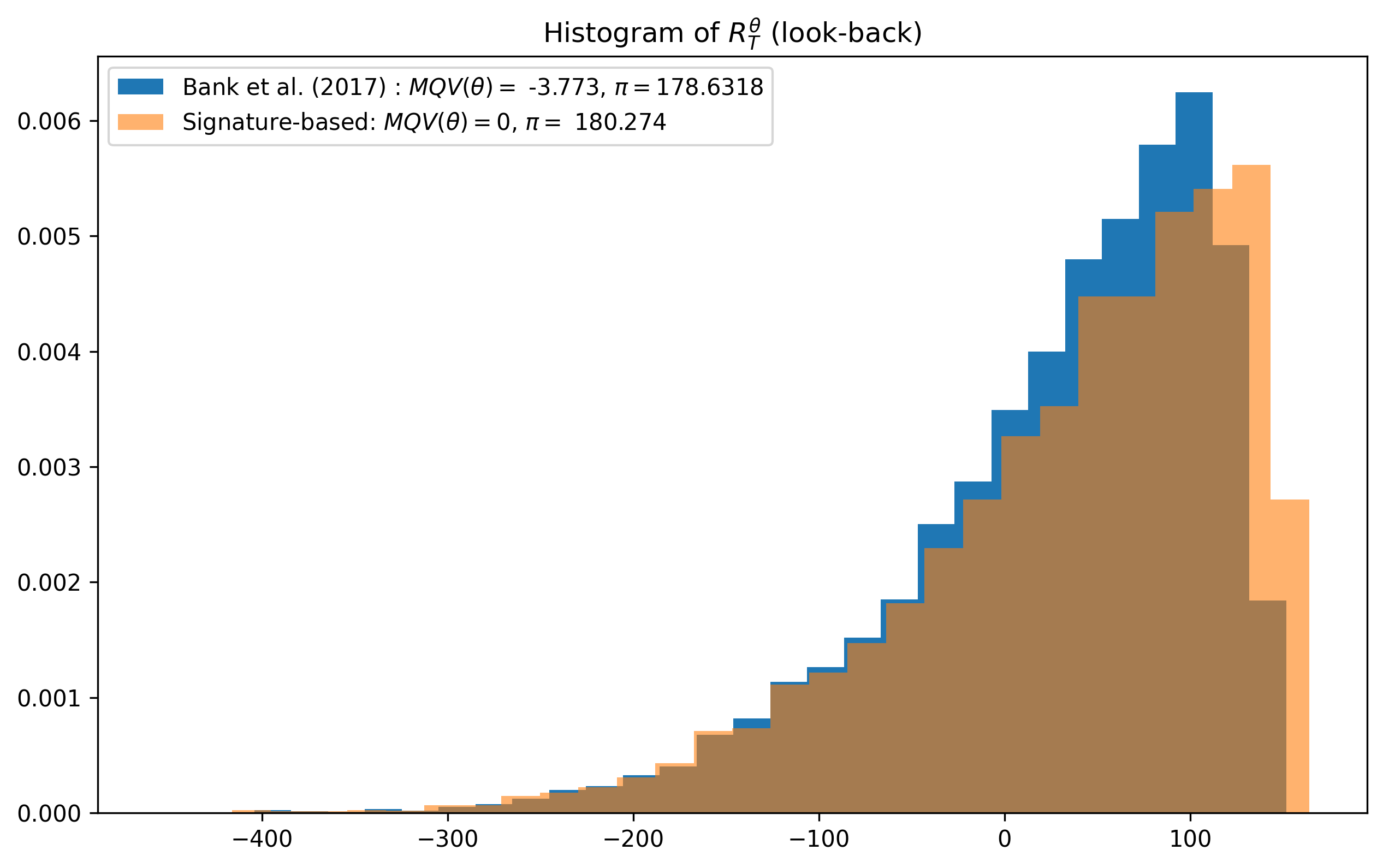}
\par\end{centering}
\caption{\protect\label{fig:hist_call}Histogram of $R_T^\theta$ for different payoffs: signature-based hedging strategy vs \cite*{bank2017hedging} hedging strategy ($\nu=0)$. Parameters: $M=5$, $K=S_0$, $H=1.05\times S_0$ and $N=200$. $MQV(\theta)=\E\left(R_T^\theta-\frac{\lambda}{2}[R^\theta,R^\theta]_T\right)$.}
\end{figure}

\begin{table}[H]
\centering
\begin{tabular}{|c|c|c|c|}
\hline
Payoff &Bachelier price & \cite*{bank2017hedging} ($\nu=0$) & Sig-approx. ($\nu=0.001$) \\ \hline
European call & 71.3649 & 84.3719  & 84.9214 \\ \hline
Asian call &41.2025 &56.3211  & 56.9731 \\ \hline
One-touch  &115.2300 & 156.5044 & 156.9042 \\ \hline
Look-back  &142.7299 & 178.6318  &180.2740   \\ \hline
\end{tabular}
\caption{\protect\label{tab: indif_price_call_payoffs}Indifference prices for different payoffs: \cite*{bank2017hedging} indifference price is obtained by using the \cite*{bank2017hedging} strategy with Bachelier Delta as tracking strategy; the indifference price for $\nu=0.001$ is approximated by $\tilde{\pi}$ using \eqref{eq: sig_approx_indif_price}. Parameters: $M=5$, $K=S_0$, $H=1.05\times S_0$, and $N=200$.}
\end{table}

\section{Proofs}\label{sec: proofs}

\subsection{Proof of Lemma \ref{lem: gamma_der}}
\begin{proof}
First, for given $\theta,\phi\in \mathcal{A}$, we have that 
\begin{align*}
    &d\hat{\mathbb{P}}^{\theta}_t = \hat{\mathbb{P}}_t^{\theta} \otimes d\hat{P}_t^\theta,\quad d\hat{\mathbb{P}}^{\theta+\varepsilon \phi}_t = \hat{\mathbb{P}}_t^{\theta+\varepsilon \phi} \otimes d\hat{P}_t^\theta + \nu \varepsilon \phi_t \hat{\mathbb{P}}^{\theta+\varepsilon \phi}_t \otimes\word{2}~dt. 
\end{align*}
Therefore, we infer that 
\begin{align*}
    d\left[\frac{\hat{\mathbb{P}}^{\theta+\varepsilon \phi}_t-\hat{\mathbb{P}}^{\theta}_t}{\varepsilon}\right] = \frac{\hat{\mathbb{P}}^{\theta+\varepsilon \phi}_t-\hat{\mathbb{P}}^{\theta}_t}{\varepsilon} \otimes d\hat{{P}}^{\theta}_t +\nu  \phi_t \hat{\mathbb{P}}^{\theta+\varepsilon \phi}_t \otimes\word{2}~dt. 
\end{align*}
Let us now define $\mathbb{Y}_t$ and $\beta_t$ by
\begin{align*}
    &\mathbb{Y}_t:=\frac{\hat{\mathbb{P}}^{\theta+\varepsilon \phi}_t-\hat{\mathbb{P}}^{\theta}_t}{\varepsilon},\quad \beta_t = \nu  \phi_t \hat{\mathbb{P}}^{\theta+\varepsilon \phi}_t \otimes\word{2}.
\end{align*}
Then, we have that 
\begin{align*}
    d\mathbb{Y}_t = \mathbb{Y}_t~d\hat{{P}}^{\theta}_t+\beta_t dt. 
\end{align*}
In this case, we observe that 
\begin{align*}
    d\left(\mathbb{Y}_t\otimes (\hat{\mathbb{P}}^{\theta}_t)^{-1}\right) = d\mathbb{Y}_t\otimes(\hat{\mathbb{P}}^{\theta}_t)^{-1} + \mathbb{Y}_t\otimes d(\hat{\mathbb{P}}^{\theta}_t)^{-1}, 
\end{align*}
and thus
\begin{align*}
    \mathbb{Y}_t\otimes (\hat{\mathbb{P}}^{\theta}_t)^{-1} =  \int_0^t \beta_s \otimes\left(\hat{\mathbb{P}}^\theta_s\right)^{-1} ds.
\end{align*}
Thus, we have that 
\begin{align*}
    \mathbb{Y}_t &=\int_0^t \beta_s \otimes\left(\hat{\mathbb{P}}^\theta_s\right)^{-1} \otimes \left(\hat{\mathbb{P}}^{\theta}_t\right)  ds=\int_0^t \beta_s \otimes\hat{\mathbb{P}}^{\theta}_{s,t} ds=\nu\int_0^t \left( \phi_s \hat{\mathbb{P}}^{\theta+\varepsilon \phi}_s \otimes\word{2} \otimes\hat{\mathbb{P}}^{\theta}_{s,t} \right)ds.
\end{align*}
By the Fubini's theorem and dominated convergence theorem, since $\theta,\phi,\upsilon\in \mathcal{A}$ and $\alpha_t\in T^{p}(\mathbb{R}^2)$, we have that
\begin{equation*}
    \lim_{\varepsilon \to 0} \frac{1}{\varepsilon} \E\left(\upsilon_t\langle \alpha_t,\hat{\mathbb{P}}_t^{\theta+\varepsilon \phi}-\hat{\mathbb{P}}_t^{\theta}  \rangle \right)= \nu \int_0^t \lim_{\varepsilon \to 0} \E\left( \phi_s \upsilon_t\left\langle\alpha_t, \hat{\mathbb{P}}^{\theta+\varepsilon \phi}_s \otimes \word{2} \otimes \hat{\mathbb{P}}_{s,t}^\theta \right\rangle \right)ds. 
\end{equation*}
Finally, as $\theta,\phi,\upsilon\in \mathcal{A}$ and $\alpha_t\in T^{p}(\mathbb{R}^2)$, we observe that there exists $\tilde{p}>1$ such that, for all $s<t<T$,
\begin{equation*}
     \sup_{\varepsilon>0}\E\left( \left|\phi_s \upsilon_t\left\langle\alpha_t, \hat{\mathbb{P}}^{\theta+\varepsilon \phi}_s \otimes \word{2} \otimes \hat{\mathbb{P}}_{s,t}^\theta \right\rangle \right|^{\tilde{p}}\right)<\infty,
\end{equation*}
hence, $\left(\phi_s \upsilon_t\left\langle\alpha_t, \hat{\mathbb{P}}^{\theta+\varepsilon \phi}_s \otimes \word{2} \otimes \hat{\mathbb{P}}_{s,t}^\theta \right\rangle\right)_{\varepsilon>0}$ is $L^{\tilde{p}}$ bounded with $\tilde{p}>1$, and then uniformly integrable. Therefore, we deduce \eqref{eq: gateaux_diff_P}. 
\end{proof}

\subsection{Proof of Lemma \ref{lem: gateau_diff}}
\begin{proof}
First, we can rewrite the functional $J$ as
    \begin{align*}
        J(\theta)=&-\eta\langle \theta,\theta\rangle_{L_2} +  \langle\mu+\nu \theta,X^\theta- \langle \xi_.\proj{2}, \hat{\mathbb{P}}^\theta_. \rangle\rangle_{L_2} - \frac{\lambda}{2} \sigma^2 \langle X^\theta- \langle \xi_.\proj{2}, \hat{\mathbb{P}}^\theta_. \rangle,X^\theta- \langle \xi_.\proj{2}, \hat{\mathbb{P}}^\theta_. \rangle\rangle_{L_2} \\
        =&-\eta\langle \theta,\theta\rangle_{L_2}+\nu \langle \theta,X^\theta\rangle_{L_2} -\nu\langle\theta, \langle \xi.\proj{2}, \hat{\mathbb{P}}^\theta_. \rangle\rangle_{L_2} -\frac{\lambda}{2} \sigma^2 \langle X^\theta, X^\theta \rangle_{L_2}- \frac{\lambda}{2} \sigma^2 \langle\langle \xi_.\proj{2}, \hat{\mathbb{P}}^\theta_. \rangle,\langle \xi_.\proj{2}, \hat{\mathbb{P}}^\theta_. \rangle \rangle_{L_2}\\
        & + \lambda \sigma^2 \langle X^\theta,\langle \xi_.\proj{2}, \hat{\mathbb{P}}^\theta_. \rangle \rangle_{L_2}+\langle\mu,X^\theta- \langle \xi_.\proj{2}, \hat{\mathbb{P}}^\theta_. \rangle\rangle_{L_2} \\
        =:&\sum_{i=1}^7  J_i(\theta). 
    \end{align*}
For $i=1,...,7,$ let us compute the G\^ateaux derivatives of $J_i$ for $h\in \mathcal{A}$, such that 
\begin{equation*}
    \langle \nabla J_i(\theta), h\rangle_{L_2} := \lim_{\varepsilon\to 0} \frac{J_i(\theta+\varepsilon h)-J_i(\theta)}{\varepsilon}. 
\end{equation*}
First, we easily observe that 
\begin{align*}
    &\langle\nabla J_1(\theta),h\rangle_{L_2} =-2\eta \langle\theta,h \rangle_{L_2}, \\ 
    &\langle\nabla J_4(\theta),h\rangle_{L_2} =-\lambda \sigma^2 \langle\int_.^T\E(X^\theta_s|\mathcal{F}_t)ds,h \rangle_{L_2}. 
\end{align*}

Then, for $i=2$, we get that
\begin{align*}
    \frac{J_2(\theta+\varepsilon h)-J_2(\theta)}{\varepsilon}=&\frac{1}{\varepsilon} \Bigg[ \nu \E \left(\int_0^T \left((\theta_t + \varepsilon h_t) X_t^{\theta+\varepsilon h}\right) dt \right) - \nu \E \int_0^T \left(\theta_t X_t^\theta dt \right)\Bigg]\\
    =&\frac{1}{\varepsilon} \nu \left[\E \left(\int_0^T \varepsilon h_t X_t^{\theta+\varepsilon h} dt \right)+\E \left(\int_0^T \theta_t \left( X_t^{\theta+\varepsilon h}- X_t^\theta\right) dt \right)\right],
\end{align*}
thus, by the tower property of the conditional expectation and Fubini's theorem, taking $\lim_{\varepsilon \to 0}$ gives, 
\begin{align*}
    \langle\nabla J_2(\theta),h\rangle_{L_2}= \nu \langle X_.^\theta+\int_.^T \E(\theta_s|\mathcal{F}_.) ds,h\rangle_{L_2}.
\end{align*}
For $i=3$, 
\begin{align*}
    \frac{J_3(\theta+\varepsilon h)-J_3(\theta)}{\varepsilon}&=\frac{1}{\varepsilon}\nu \left[-\E\left(\int_0^T(\theta_t+\varepsilon h_t) \langle\xi_t\proj{2},\hat{\mathbb{P}}_t^{\theta+\varepsilon h}\rangle dt \right)+\E\left(\int_0^T\theta_t \langle\xi_t\proj{2},\hat{\mathbb{P}}_t^{\theta}\rangle dt \right)\right] \\
    &=-\frac{1}{\varepsilon} \nu\E \left(\int_0^T \theta_t \left(\langle \xi_t\proj{2},\hat{\mathbb{P}}_t^{\theta+\varepsilon h} \rangle-\langle \xi_t\proj{2},\hat{\mathbb{P}}_t^{\theta} \rangle\right)dt\right) -  \nu \langle \langle \xi_.\proj{2}, \hat{\mathbb{P}}^{\theta+\varepsilon h}_.\rangle, h\rangle_{L_2}.  
\end{align*}
Since $\theta, h\in \mathcal{A}$, using Lemma \ref{lem: gamma_der}, the dominated convergence theorem and Fubini's theorem, we obtain that 
\begin{align*}
     \lim_{\varepsilon \to 0} \frac{1}{\varepsilon} \E \left(\int_0^T \theta_t \left(\langle \xi_t\proj{2},\hat{\mathbb{P}}_t^{\theta+\varepsilon h} \rangle-\langle \xi_t\proj{2},\hat{\mathbb{P}}_t^{\theta} \rangle\right)dt\right) &=\E\left( \int_0^T h_t\int_t^T \E\left(\theta_s \nu \Gamma_{t,s}^\theta|\mathcal{F}_t\right)ds dt\right).
\end{align*}
\begin{comment}
Let us study the limit of \todo{Justify the passage to the limit under the integral sign by dominated convergence theorem?}
\begin{align*}
    \lim_{\varepsilon \to 0} \frac{1}{\varepsilon} \nu\E \left(\int_0^T \theta_t \left(\langle \xi_t\proj{2},\hat{\mathbb{P}}_t^{\theta+\varepsilon h} \rangle-\langle \xi_t\proj{2},\hat{\mathbb{P}}_t^{\theta} \rangle\right)dt\right) &= \lim_{\varepsilon \to 0} \frac{1}{\varepsilon}\nu \E \left(\int_0^T \theta_t \frac{\langle\xi_t\proj{2}, \hat{\mathbb{P}}_t^{\theta + \varepsilon h} \rangle-\langle\xi_t\proj{2}, \hat{\mathbb{P}}_t^\theta \rangle}{P_t^{\theta+\varepsilon h}-P_t^\theta} \left(\varepsilon \int_0^t h_s ds \right)dt\right) \\
    &=\lim_{\varepsilon \to 0} \nu \E \left(\int_0^T \theta_t \frac{\langle\xi_t\proj{2}, \hat{\mathbb{P}}_t^{\theta + \varepsilon h} \rangle-\langle\xi_t\proj{2}, \hat{\mathbb{P}}_t^\theta \rangle}{P_t^{\theta+\varepsilon h}-P_t^\theta} \nu \int_0^t h_s ds dt\right)\\
    &= \nu^2 \E \left(\int_0^T \theta_t \langle\xi_t\proj{22}, \hat{\mathbb{P}}_t^{\theta } \rangle\int_0^t h_s ds dt\right).\\
\end{align*}
\end{comment}
Therefore, we infer that 
\begin{align*}
    \langle\nabla J_3(\theta),h\rangle_{L_2} &=-\nu\left[\langle \int_.^T \E\left(\theta_s \nu \Gamma_{.,s}^\theta|\mathcal{F}_. \right) ds+  \langle \xi_.\proj{2}, \hat{\mathbb{P}}^{\theta}_.\rangle, h\rangle_{L_2} \right]. 
\end{align*}
Using similar arguments, we have that 
\begin{align*}
    \frac{J_5(\theta+\varepsilon h)-J_5(\theta)}{\varepsilon}
    &=-\frac{1}{\varepsilon}\frac{\lambda}{2} \sigma^2\E\left( \int_0^T \left(\langle (\xi_t\proj{2})\shupow{2}, \hat{\mathbb{P}}_t^{\theta+\varepsilon h}\rangle- \langle (\xi_t\proj{2})\shupow{2}, \hat{\mathbb{P}}_t^{\theta}\rangle \right) dt\right), 
\end{align*}
taking $\lim_{\varepsilon \to 0}$ leads to 
\begin{align*}
    \lim_{\varepsilon \to 0} \frac{J_5(\theta+\varepsilon h)-J_5(\theta)}{\varepsilon} &=- \frac{\lambda}{2} \sigma^2 \langle \int_.^T \E\left(\nu\tilde{\Gamma}_{.,s}^\theta |\mathcal{F}_.\right) ds,h\rangle_{L_2}.
\end{align*}
For $i=6$, we observe that 

\begin{align*}
    \frac{J_6(\theta+\varepsilon h)-J_6(\theta)}{\varepsilon}=&\lambda \sigma^2\bigg{[} \E \left(\int_0^T \langle\xi_t\proj{2}, \hat{\mathbb{P}}_t^{\theta+ \varepsilon h} \rangle \int_0^t h_s ds dt \right)\\
    &+\frac{1}{\varepsilon} \E \left(\int_0^T \left(\langle \xi_t\proj{2}, \hat{\mathbb{P}}_t^{\theta + \varepsilon h} \rangle - \langle \xi_t\proj{2}, \hat{\mathbb{P}}_t^{\theta} \rangle \right) X_t^\theta  dt \right) \bigg{]} 
\end{align*}
taking the limit $\lim_{\varepsilon \to 0}$ leads to 
\begin{align*}
    \lim_{\varepsilon \to 0} \frac{J_6(\theta+\varepsilon h)-J_6(\theta)}{\varepsilon}&=\lambda \sigma^2 \langle \int_.^T\E\left(\langle \xi_s\proj{2}, \hat{\mathbb{P}}_s^\theta \rangle + X_s^\theta \nu \Gamma_{.,s}^\theta|\mathcal{F}_.\right) ds,h\rangle_{L_2}. 
\end{align*}
Finally, for $i=7$, we have that
\begin{align*}
    \frac{J_7(\theta+\varepsilon h)-J_7(\theta)}{\varepsilon}&=\frac{1}{\varepsilon} \mu \left[\E \int_0^T \left((X_t^{\theta +\varepsilon h}-X_t^\theta)-(\langle \xi_t\proj{2}, \hat{\mathbb{P}}_t^{\theta+\varepsilon h}\rangle -\langle \xi_t\proj{2}, \hat{\mathbb{P}}_t^{\theta}\rangle ) \right) dt \right],
\end{align*}
and using similar arguments, we obtain that
\begin{align*}
    \langle\nabla J_7(\theta),h\rangle_{L_2} = \mu \langle (T-.)-\int_.^T \E \left(\nu \Gamma_{.,s}^\theta|\mathcal{F}_.\right) ds, h\rangle_{L_2}.  
\end{align*}
Taking all together, we finally get that 
\begin{align*}
    \langle \nabla J(\theta), h\rangle_{L_2} =\sum_{i=1}^7 \langle \nabla J_i(\theta), h\rangle_{L_2} = \langle -2\eta \theta+\nu \mathbf{A}\theta-\lambda \sigma^2 \mathbf{B}\theta + \mu \mathbf{C}\theta,h\rangle_{L_2}, 
\end{align*}
with the operators $\mathbf{A}$, $\mathbf{B}$ and $\mathbf{C}$ defined by \eqref{eq: op_A} and \eqref{eq: op_B}.
\end{proof}

\subsection{Proof of Proposition \ref{prop: example_monotonicity_ok_without_permanent}}
Since $\nu=0$, then $p=2$ and then the monotonicity condition reduces to 
    \begin{equation*}
        \left \langle -(2\eta-\varepsilon)(\theta-\phi)-\lambda \sigma^2 \left(\mathbf{B}(\theta)-\mathbf{B}(\phi) \right) + \mu \left(\mathbf{C}(\theta)-\mathbf{C}(\phi) \right), \theta-\phi \right \rangle_{L_2} \leq 0. 
    \end{equation*}
    Moreover, we have  that
    \begin{align*}
        \left \langle (\mathbf{B}(\theta)-\mathbf{B}(\phi)), \theta-\phi \right \rangle_{L_2} &= \E\left(\int_0^T (X^{\theta}_t-X^{\phi}_t)^2dt \right)
\quad \text{and} \quad 
        \left \langle \mu (\mathbf{C}(\theta)-\mathbf{C}(\phi)), \theta-\phi \right \rangle_{L_2} = 0. 
    \end{align*}
    Hence, for any $\varepsilon>0$ such that $0<\varepsilon< 2\eta$, we infer that 
    \begin{align*}
        &\left \langle -(2\eta-\varepsilon)(\theta-\phi)-\lambda \sigma^2 \left(\mathbf{B}(\theta)-\mathbf{B}(\phi) \right) + \mu \left(\mathbf{C}(\theta)-\mathbf{C}(\phi) \right), \theta-\phi \right \rangle_{L_2}\\
        &\leq \left(-(2\eta-\varepsilon)\right)\E\left(\int_0^T (\theta_t-\phi_t)^2dt \right)-\lambda\sigma^2 \E\left(\int_0^T (X_t^\theta-X_t^\phi)^2dt \right)\\
        &\leq  0
    \end{align*}
    and the monotonicity condition \eqref{eq:monotonicity_cond} is satisfied. 

\subsection{Proof of Proposition \ref{prop: example_monotonicity_ok}}
\begin{proof}
    First, since $\tilde{M}\leq 1$, then $p=2$ and the monotonicity condition reduces to 
    \begin{equation*}
        \left \langle -(2\eta-\varepsilon)(\theta-\phi)+\nu \left(\mathbf{A}(\theta)-\mathbf{A}(\phi)\right)-\lambda \sigma^2 \left(\mathbf{B}(\theta)-\mathbf{B}(\phi) \right) + \mu \left(\mathbf{C}(\theta)-\mathbf{C}(\phi) \right), \theta-\phi \right \rangle_{L_2} \leq 0. 
    \end{equation*}
    Moreover, since $\xi_t\proj{2}\in T^{\leq 1}(\mathbb{R}^2)$, we observe that
    \begin{align*}
        &\Gamma_{t,s}^\theta= \xi_s^{\word{22}}, \quad \tilde{\Gamma}_{t,s}^\theta= 2 \xi_s^{\word{2}} \xi_s^{\word{22}}+ 2 P_s^\theta \left(\xi_s^{\word{22}}\right)^2 + 2 \xi_s^{\word{12}} \xi_s^{\word{21}} s. 
    \end{align*}
    In this case, by Cauchy-Schwarz inequality, as we assume \eqref{eq:condition_time_dependent_gamma}, we infer that
    \begin{align*}
        \left \langle \nu (\mathbf{A}(\theta)-\mathbf{A}(\phi)), \theta-\phi \right \rangle_{L_2} \leq   \frac{(2\eta-\varepsilon)}{\nu}\E\left(\int_0^T (\theta_t-\phi_t)^2dt \right),
    \end{align*}
    as well as 
    \begin{align*}
        \left \langle  (\mathbf{B}(\theta)-\mathbf{B}(\phi)), \theta-\phi \right \rangle_{L_2} &= \E\left(\int_0^T (1-\nu \xi_t^{\word{22}})^2(X^{\theta}_t-X^{\phi}_t)^2dt \right),
    \end{align*}
    and finally
    \begin{align*}
        \left \langle \mu (\mathbf{C}(\theta)-\mathbf{C}(\phi)), \theta-\phi \right \rangle_{L_2} &= 0. 
    \end{align*}
    Hence, we readily infer that 
    \begin{align*}
        &\left \langle -(2\eta-\varepsilon)(\theta-\phi)+\nu \left(\mathbf{A}(\theta)-\mathbf{A}(\phi)\right)-\lambda \sigma^2 \left(\mathbf{B}(\theta)-\mathbf{B}(\phi) \right) + \mu \left(\mathbf{C}(\theta)-\mathbf{C}(\phi) \right), \theta-\phi \right \rangle_{L_2}\\
        &\leq -\lambda\sigma^2 \E\left(\int_0^T (1-\nu \xi_t^{\word{22}})^2(X_t^\theta-X_t^\phi)^2dt \right)\\
        &\leq  0,
    \end{align*}
    and therefore the monotonicity condition \eqref{eq:monotonicity_cond} is satisfied. 
\end{proof}

\section*{Conflicts of Interest}
The authors declare no conflicts of interest.

\section*{Data Availability Statement}
Data sharing is not applicable to this article as no new data were created or analyzed in this study. 

\bibliographystyle{plainnat}
\bibliography{bibl}

\end{document}